\newtheorem{theorem}{Theorem}
\newtheorem{lemma}{Lemma}
\newtheorem{corollary}{Corollary}
\newtheorem{proposition}{Proposition}
\newtheorem{claim}{Claim}
\newtheorem{remark}{Remark}
\newtheorem{definition}{Definition}
\def\bt{\begin{theorem}}
\def\et{\end{theorem}}
\def\bp{\begin{proposition}}
\def\ep{\end{proposition}}
\def\bl{\begin{lemma}}
\def\el{\end{lemma}}
\numberwithin{equation}{section}
\numberwithin{theorem}{section}
\numberwithin{proposition}{section}
\numberwithin{lemma}{section}
\numberwithin{corollary}{section}
\numberwithin{remark}{section}
\numberwithin{definition}{section}
\numberwithin{claim}{section}
\begin{document}

\title{on solutions of the 2D Navier-Stokes equations with constant energy and enstrophy}

\date{\today}
\author{J. Tian$^{1}$}
\address{$^1$Department of Mathematics\\
Texas A\&M University\\ College Statin, TX 77843}
\author{B. S. Zhang$^{1}$}
\address{$\dagger$ to whom correspondence should be made}
\email[J.Tian]{jtian@math.tamu.edu}
\email[B. S. Zhang$^\dagger$]{bszhang@math.tamu.edu}
\subjclass[2010]{35Q30,35B41,76D05}
\keywords{Navier-Stokes equations, global attractor, energy, enstrophy}

\begin{abstract}
It is not yet known if the global attractor of the space periodic 2D Navier-Stokes equations contains nonstationary solutions $u(x,t)$ such that their energy and enstrophy per unit mass are constant for every $t \in (-\infty, \infty)$. The study of the properties of such solutions was initiated in 
\cite{CMM13}, where, due to the hypothetical existence of such solutions, they were called ``ghost solutions". In this work, we introduce and study geometric structures shared by all ghost solutions. This study led us to consider a subclass of ghost solutions for which those geometric structures have a supplementary stability property. 
In particular, we show that the wave vectors of the active modes of this subclass of ghost solutions must satisfy certain supplementary constraints. We also found a computational way to check for the existence of these ghost solutions.
\end{abstract}

\maketitle
\section{Introduction}
The universal features of turbulent 2D flows are determined by the time invariant probability measures carried by the global attractor $\mathcal{A}$ of the 2D Navier-Stokes equations(NSE) \cite{FJMR02}. Some of these features involve the time average of energy per unit mass $e(t)$, enstrophy per unit mass $E(t)$, and palinstrophy per unit mass $P(t)$. Therefore the study of the triplet $(e(t),E(t),P(t)), t\in \mathbb{R}$, associated with a solution $u(t)\in \mathcal{A}$, for all  $t\in \mathbb{R}$, is of intrinsic interest. 

A question which arises naturally in this context is whether the past (i.e., $t\leq 0$) of $(e(t),E(t),P(t))$ is sufficient to uniquely determine the solution $u(t)$ for all $t\in \mathbb{R}$. A particularly intriguing version of this problem was considered in \cite{CMM13}, 
namely, whether or not it is possible for $(e(t),E(t))$, corresponding to a nonstationary solution $u(t)$, with $u(t) \in \mathcal{A}$, for all $t\in \mathbb{R}$, to be constant in time. In this paper, we continue the study of such solutions, which were called ghost solutions in \cite{CMM13}. We assume, as in \cite{CMM13}, that the force $g$ is an eigenvector of the Stokes operator $A$, namely, $Ag=\lambda g$, where, $\lambda\geq 2 \kappa_0^2$, with $\kappa_0=2\pi/L$ and $L>0$ the period. In the case $\lambda=\kappa_0^2$, the global attractor $\mathcal{A}$ is reduced to a single point (see \cite{MARC87}), hence there are no ghost solutions.

One interesting feature of ghost solutions is that, with only one exception, the scalar products of any two different terms in the functional form of the NSE are constants in time. In order to extract geometric properties from this feature, we introduce two different reference frames that are rigidly carried by the solutions.

We first define a moving reference frame consisting of four orthonormal functions in the real separable Hilbert phase space $H$. Then we project each term of the NSE onto the reference frame. The Stokes operator $A$ and the inertial term in the NSE can be represented by a symmetric positive definite matrix and by a tensor, respectively. It turns out that with two possible exceptions, the coordinates of these above specified projections are constant in time. Moreover, we find there is an unitary transformation between any two ghost solutions. However,  if either a sequence of ghost solutions converge to a stationary solution, or a ghost solution converges to a fixed point when $t \rightarrow \infty$, this reference frame loses stability and degenerates to a three dimensional frame.

This kind of instability problem is avoided by introducing another reference frame. For the new reference frame, the degeneracy problem occurs only when there exist some $t \in \mathbb{R}$, such that $A^2u(t)=\gamma(t) g+\beta(t) u(t)+\alpha(t) Au(t)$. Denote the set of these $t$'s as $C$. We show that if the set $C$ has an accumulation point, then the coefficients $\gamma(t), \beta(t)$ and $\alpha(t)$ are all constants in time. For convenience, if a ghost solution satisfies the relation $A^2u(t)=\gamma g+\beta u(t)+\alpha Au(t)$, for all $t \in \mathbb{R}$, and for some time independent coefficients $\gamma$, $\beta$, and $\alpha$, we will call it chained.

It can be shown that any chained ghost solution $u(t)$ will be a sum of three components. One component is a scalar multiple of $g$, and the other two  $u_+$, $u_-$ are eigenvectors of $A$, namely,
 $Au_+=\mu_+u_+$, $Au_-=\mu_-u_-$ ($\mu_-<\lambda<\mu_+$). It turns out that numerically finding a possible chained ghost solution is equivalent to solving a Galerkin system of a particular form. 

In the particular case $\lambda=2 \kappa_0^2$, we can show that $\mu_-=\kappa_0^2$ and $\mu_+=5 \kappa_0^2$. And as a consequence, we prove that there is no chained ghost solutions. In the process of  proving this result, we show that the chained ghost solutions in this case, if they existed, must project only on a small portion of a parabolic curve in the $e,E$-plane. Then, a careful exploitation of the interaction between the wave numbers and active modes in this Galerkin system leads to several useful annihilations that allow us to show that there do not exist chained ghost solutions. This, in turn, guarantees the stability of the new reference frame. 

The paper is organized as follows. Section 2 gives the background and settings for our discussion. In Section 3, we present some inequalities which are necessary later and also recall some results from \cite{CMM13}. In Section 4, the geometric structure associated with a ghost solution is constructed, as is a unitary transformation between two ghost solutions. Section 5 shows that the closure of $GHO:=\{u(\cdot)\in \mathcal{A}:u(\cdot) \text{ is a ghost solution}\}$ in $C(\mathbb{R}, \mathcal{A})$ is the union of $GHO$ and a set of all stationary solutions. Section 6 describes the geometric behavior of chained ghost solutions in detail. In Section 7, a particular Galerkin method for finding a possible chained ghost is given. Section 8 has the proof that there do not exist chained ghost solutions in the case when $\lambda=2\kappa_0^2$.


\section{Preliminaries}
\subsection{Functional Analysis Framework}
In this paper, we consider the following imcompressible Navier-Stokes equations (NSE) in $\Omega=[0,L]^2$
\begin{align*}
 &\frac{\partial u}{\partial t}-\nu \bigtriangleup u+(u \cdot \bigtriangledown ) u+\bigtriangledown p=F,
\\ &div u=0,
\\ &u(x,t_0)=u_0 (x),
\\ &\int_{\Omega} u dx=0, \int_{\Omega} F dx=0,
\end{align*}
where $u(x,t)$ is the velocity of fluid at time $t$, at point $x$; $u$ and $p$ are unknown, $\Omega$-periodic functions, and $\nu>0$ is the kinematic viscosity of the fluid. $L>0$ is the period, $p$ is the pressure, and $F$ is the "body" force as in \cite{CF89}, \cite{Temam83}, \cite{Temam97}.
The real separable Hilbert space $\mathit{H}$ is formed by the set of all $\mathbb{R}^2$-valued functions $u(x), x\in \mathbb{R}^2$, which has the Fourier expansion
\begin{align*}
u(x)=\sum_{k\in \mathbb{Z}^2\setminus\{0\}}\hat{u}(k)e^{i\kappa_{0}k\cdot x}
\end{align*}\\
where $\kappa_{0}=2\pi/L$ and $\hat{u}(k)\in \mathbb{C}^2 $, for all $k\in \mathbb{Z}^2\setminus\{0\}$, satisfying
\begin{align}
\label{reality_cond}
\hat{u}(k)^{*}=\hat{u}(-k),
\end{align}
\begin{align}
\label{zero_div}
 k \cdot \hat{u}(k)=0,
\end{align}
 for all $ k\in\mathbb{Z}^2\setminus\{0\}$,
and
\begin{align*}
\sum_{k\in \mathbb{Z}^2\setminus\{0\}}\hat{u}(k) \cdot  \hat{u}(k)^{*}< \infty.
\end{align*}
The scalar product in $\mathit{H}$ is taken to be the usual $[L^{2}(\Omega)]^2$-product
\begin{align*}
 (u,v)=\int_{\Omega}u(x) \cdot v(x) dx.
\end{align*}
Denote by $A$ the Stokes operator,  which is the linear map defined by
\begin{align*}
\mathit{A}:  \mathit{D}(\mathit{A})\mapsto  \mathit{H}, \mathit{A}\hat{u}(k)={\kappa_{0}}^2(k\cdot k)\hat{u}(k), \forall k\in\mathbb{Z}^2\setminus\{0\},
\end{align*}
where
\begin{align*}
 u\in\mathit{D}(\mathit{A})=\{u\in \mathit{H}: \sum_{k\in \mathbb{Z}^2\setminus\{0\}}(k\cdot k)^2\hat{u}(k) \cdot  \hat{u}(k)^{*}< \infty\}.
\end{align*}
It is well known that the operator $A$ is self-adjoint, with a compact inverse $A^{-1}$, and the specturm (cf. \cite{CF89},\cite{Temam97})
\begin{align*}
sp(\mathit{A}) \subseteq  \{\lambda_0, 2\lambda_0,3\lambda_0...\}, \lambda_0={\kappa_{0}}^2 \in sp(\mathit{A}).
\end{align*}
The powers of $A$ are defined by
\begin{align*}
 \mathit{A}^{\alpha}\hat{u}(k)={\kappa_{0}}^{2\alpha}(k\cdot k)^{\alpha}\hat{u}(k), \forall k\in\mathbb{Z}^2\setminus\{0\},
\end{align*}
with domain given by
\begin{align*}
 \mathit{D}({\mathit{A}}^{\alpha})=\{u\in \mathit{H}: \sum_{k\in \mathbb{Z}^{2}\setminus\{0\}}(k\cdot k)^{2\alpha}\hat{u}(k) \cdot  \hat{u}(k)^{*}< \infty\}.
\end{align*}
$\mathit{D}({\mathit{A}}^{\alpha})$ is a Hilbert space with the inner product $(\mathit{A}^{\alpha/2}u,\mathit{A}^{\alpha/2}u)$. In the particular cases when $\alpha=0,\frac{1}{2}$, we use the notation
\begin{align*}
 |u|=(u,u)^{1/2} , ||u||=({\mathit{A}}^{1/2}u,{\mathit{A}}^{1/2}u)^{1/2},
\end{align*}
for the norms on $D(A^0)=H$, and $V:=D(A^{1/2})$, respectively.
Denote by $\mathit{B}$ the $\mathit{H}$-valued bilinear map defined on $ \mathit{D}({\mathit{A}})\times  \mathit{D}({\mathit{A}})$ by
\begin{align}
\label{coef_for_B}
 \hat{\mathit{B}}(u,v)(k)=i\sum_{j\in \mathbb{Z}^2\setminus\{0\}}[(\hat{u}(k-j)\cdot j)\hat{v}(j)-\frac{(\hat{u}(k-j)\cdot j)(\hat{v}(j) \cdot k)}{k \cdot k}k], \forall k\in\mathbb{Z}^2\setminus\{0\}.
\end{align}\\
Well known algebraic properties of $\mathit{B}$(cf. \cite{DFJ05}) include the orthogonality relation 
\begin{align}
\label{orthog_rel}
(\mathit{B}(u,v),w)=-(\mathit{B}(u,w),v), \forall u\in  \mathit{H}, v, w \in \mathit{V},
\end{align}
in particular,
\begin{align}
\label{B_u_v_v}
 (\mathit{B}(u,v),v)=0, \forall u\in  \mathit{H}, v \in \mathit{V},
\end{align}
as well as,
\begin{align}
\label{B_u_u_Au}
(\mathit{B}(u,u),\mathit{A}u)=0, \forall u\in  \mathit{D}({\mathit{A}}).
\end{align}
Also the strong form of enstrophy invariance holds (see, for example \cite{DFJ10}),
\begin{align}
\label{enst_inva}
 (\mathit{B}(\mathit{A}v,v),u)=(\mathit{B}(u,v),\mathit{A}v), \forall u\in  \mathit{H}, v\in \mathit{D}({\mathit{A}}).
\end{align}

\subsection{The Functional Differential Equation}
The functional form of the NSE is
\begin{align}
\label{f_f_NSE}
\frac{du}{dt}+\nu\mathit{A}u+\mathit{B}(u,u)=g,
\end{align}
where $g$ is the Helmholz-Leray projection of $F$ onto the divergence free vector fields. After the rescaling by
\begin{align*}
\tilde{u}=\frac{u}{\nu \kappa_0}, \tilde{t}=\nu \kappa_0^2 t, \tilde{x}=\kappa_0 x, \tilde{\Omega}=[0, 2\pi]^2, \tilde{g}=\frac{g}{\nu^2 \kappa_0^3},
\end{align*}
the corresponding dimensionless functional form is
\begin{align*}
\frac{d\tilde{u}}{d\tilde{t}}+\tilde{A}\tilde{u}+\tilde{B}(\tilde{u},\tilde{u})=\tilde{g},
\end{align*}
where $\tilde{A}$ and $\tilde{B}$ are defined through the rescaled Laplacian and gradient operators. Henceforth, for simplicity, we assume that $\nu=1$, $L=2\pi$ and $\kappa_0=1$. The functional form then becomes
\begin{align}
\label{functional_form}
\frac{du}{dt}+\mathit{A}u+\mathit{B}(u,u)=g
\end{align}
after dropping the tildes. 

Also, recall the following form of Navier-Stokes equation in terms of the Fouier series coefficients,
\begin{align}
\label{fourier_nse}
\frac{d}{dt}\hat{u}(k,t)=\hat{g}(k)-|k|^2\hat{u}(k,t)-\hat{B}(u,u)(k), \text{ for } k\in \mathbb{Z}^2\setminus\{0\},
\end{align}
where $\hat{B}(u,v)(k)$ is given in (\ref{coef_for_B}).

Throughout the paper , we assume that 
\begin{align}
\label{assump_g}
\mathit{A} g= \lambda g, \text{ where } \lambda \in sp(\mathit{A}) \text{ and }\lambda>1.
\end{align}
Since for eigenvector $g$ of Stokes operator $A$, we have $B(g,g)=0$, whose proof can be found in Appendix B of \cite{FJYZ}. This yields that there exists $u_*:=g/\lambda$ that satisfies $Au+B(u,u)=g$, the stationary NSE.  
For $\lambda=1$, see \cite{MARC87}. Then, the energy and enstropy balances are
\begin{align}
\label{en_bal}
\frac{1}{2}\frac{d}{dt}|u|^2=-||u||^2+(g,u),
\end{align}
\begin{align}
\label{ens_bal}
\frac{1}{2}\frac{d}{dt}||u||^2=-|Au|^2+(g,Au)=-|Au|^2+\lambda (g,u).
\end{align}

\subsection{Global Attractor of NSE}
We recall that the global attractor $\mathcal{A}$ of the NSE is the collection of all elements $u_0$ in $H$ for which there exists a solution $u(t)$ of NSE, for all $t\in\mathbb{R}$, such that $u(0)=u_0$ and $\sup_{t\in \mathbb{R}} |u(t)|<\infty$.

To give another definition of $\mathcal{A}$, we need to recall several concepts. First, as is well-known, for any $u_0, g \in H$, there exists a unique continuous function $u$ from $[0,\infty)$ to $H$ such that $u(0)=u_0$, $u(t)\in \mathcal{D}(A)$, $t\in (0, \infty)$, and $u$ satisfies the NSE for all $t\in (0,\infty)$. Therefore, one can define the map $S(t): H\rightarrow H$ by
\begin{equation*}
S(t)u_0=u(t)
\end{equation*}
where $u(\cdot)$ is as above. Since $S(t_1)S(t_2)=S(t_1+t_2)$, the family $\big\{ S(t)\big\}_{t\geq 0}$ is called the ''solution" semigroup. Furthermore, a compact set $\mathcal{B}$ is called absorbing if for any bounded set $\tilde{\mathcal{B}}\subset{H}$ there is a time $\tilde{t}\geq 0$ such that $S(t)\tilde{\mathcal{B}}\subset \mathcal{B}$ for all $t\geq \tilde{t}$. The attractor can be now defined by the formula 
\begin{equation*}
 \mathcal{A}=\bigcap_{t\geq 0} S(t)\mathcal{B}, \label{attractor}
\end{equation*}
where $\mathcal{B}$ is any compact absorbing subset of $H$.

Equivalent characterizations of the global attractor $\mathcal{A}$ are the followings, namely,
\begin{align*}
\mathcal{A}=\text {the smallest set ( in the phase space $H$) which uniformly attracts all compact sets},
\end{align*} and, 

$
\hspace{0.15 in}\mathcal{A}=\text {the largest bounded invariant set in $H$}.
$



\section{Specific preliminaries}

We first recall the definition of ghost solutions first introduced in \cite{CMM13}.
\begin{definition}
A ghost solution is a nonstationary solution $u(\cdot)\in \mathcal{A}$, such that 
\begin{align*}
\dot{e}(t)\equiv \dot{E}(t)\equiv 0, \forall t \in \mathbb{R},
\end{align*}
where $e:=|u(t)|^2$ and $E:=||u(t)||^2$ are referred to as the energy and enstrophy, respectively.
\end{definition}

Using (\ref{en_bal}) and (\ref{ens_bal}), one immediately finds that a ghost solution satisfies
\begin{align}
\label{relation_1}
E=|A^{1/2}u|^2=(g,u)=\frac{1}{\lambda} |Au|^2,
\end{align}\\
hence,
\begin{align}
\label{relation_2}
P:=|Au|^2=(Au,g)=\lambda E.
\end{align}

We also recall the following elementary relations regarding the ghost solutions.

\begin{proposition}
\label{general_relation}
(see also Proposition 6.3 in \cite{CMM13}).\\
 If $u(t)\in \mathcal{A}$ satisfies (\ref{relation_1}), then the following hold,

\begin{align}
\label{Prop1}
\tag{a}
(\dot{u},g)=(\dot{u}, Au)=(\dot{u},u)=0,
\end{align}
\begin{align}
\label{Prop2}
\tag{b}
|B(u,u)|^2+|Au|^2=|\dot{u}|^2+|g|^2,
\end{align}
\begin{align}
\label{Prop3}
\tag{c}
(B(u,u),g)=|B(u,u)|^2-|\dot{u}|^2=|g|^2-|Au|^2,
\end{align}
\begin{align}
\label{Prop4}
\tag{d}
|B(u,u)-g/2|^2=|\dot{u}\pm g/2|^2,
\end{align}
\begin{align}
\label{Prop5}
\tag{e}
\frac{d}{dt}(|B(u,u)|^2)=\frac{d}{dt}(|\dot{u}|^2),
\end{align}
\begin{align}
\label{Prop6}
\tag{f}
(B(u,u),\dot{u})+|\dot{u}|^2=0.
\end{align}
\end{proposition}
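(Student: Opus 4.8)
The plan is to work directly from the functional form (\ref{functional_form}), $\dot u = g - Au - B(u,u)$, together with the balance laws (\ref{en_bal}), (\ref{ens_bal}), the algebraic identity (\ref{B_u_u_Au}), and the hypothesis $Ag=\lambda g$. The first order of business is (a). Since $u$ satisfies (\ref{relation_1}), the right-hand sides of (\ref{en_bal}) and (\ref{ens_bal}) vanish identically: $-\|u\|^2+(g,u) = -E+E = 0$ and $-|Au|^2+\lambda(g,u) = -\lambda E+\lambda E = 0$. Hence $|u|^2$ and $\|u\|^2$ are constant in time, and writing $\frac{d}{dt}|u|^2 = 2(\dot u,u)$, $\frac{d}{dt}\|u\|^2 = 2(\dot u,Au)$ gives $(\dot u,u)=(\dot u,Au)=0$; since $(g,u)=E$ is then constant, $(\dot u,g)=0$ as well. (All differentiations are legitimate because complete trajectories in $\mathcal{A}$ are smooth in time.)

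Next, pairing (\ref{functional_form}) with $g$ and using $(\dot u,g)=0$ from (a), together with $(Au,g)=(u,Ag)=\lambda(g,u)=|Au|^2$ (the last equality by (\ref{relation_2})), yields $(B(u,u),g)=|g|^2-|Au|^2$, which is the rightmost equality in (c). For (b) I would expand $|\dot u|^2 = |g-Au-B(u,u)|^2$, drop the cross term $(Au,B(u,u))$ by (\ref{B_u_u_Au}), and substitute $(g,Au)=|Au|^2$ and the value of $(g,B(u,u))$ just obtained; the terms collapse to $|\dot u|^2 = |B(u,u)|^2+|Au|^2-|g|^2$, i.e. (b). Combining (b) with the rightmost equality of (c) then gives the middle equality $(B(u,u),g)=|B(u,u)|^2-|\dot u|^2$, completing (c).

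The remaining items are immediate consequences. Expanding $|B(u,u)-g/2|^2 = |B(u,u)|^2-(B(u,u),g)+|g|^2/4$ and using (c) produces $|\dot u|^2+|g|^2/4$, which equals $|\dot u\pm g/2|^2 = |\dot u|^2\pm(\dot u,g)+|g|^2/4$ because $(\dot u,g)=0$; this is (d). For (e), both $|g|^2$ and $|Au|^2=\lambda E$ appearing in (b) are constant in $t$ (the latter since $E=\|u\|^2$ was shown constant above), so differentiating (b) gives $\frac{d}{dt}|B(u,u)|^2 = \frac{d}{dt}|\dot u|^2$. Finally, pairing (\ref{functional_form}) with $\dot u$ and using $(Au,\dot u)=0$ and $(g,\dot u)=0$ from (a) leaves $|\dot u|^2+(B(u,u),\dot u)=0$, which is (f).

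There is no serious obstacle here: the proposition is a chain of inner-product identities, and the only external tool invoked is $(B(u,u),Au)=0$ from (\ref{B_u_u_Au}), used once in the proof of (b). The one point requiring care is the logical ordering — (a) must be established first so that the orthogonality relations $(\dot u,g)=(\dot u,Au)=0$ are available for everything that follows, and the constancy of $\|u\|^2$ (hence of $|Au|^2$) recorded in its proof is precisely what makes the differentiation step in (e) valid.
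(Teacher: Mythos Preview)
Your proof is correct. The paper itself does not prove this proposition but simply refers to \cite{CMM13}; the argument you give is the standard one and is precisely what one expects that reference contains, so there is nothing to contrast.
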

\begin{proof}
For the proof, see \cite{CMM13}. We remark here that assuming $u(t)$ satisfies (\ref{relation_1}) is weaker than assuming $u(t)$ is a ghost solution.
\end{proof}

\begin{lemma}
\label{P_and_G^2} 
For any ghost solution $u(t) \in \mathcal{A}$, one has
\begin{align*}
P=|Au(t)|^2<|g|^2=: G^2,
\end{align*}
\end{lemma}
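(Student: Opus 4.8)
The plan is to display $G^{2}-P$ as a squared norm in $H$, which makes nonnegativity automatic, and then read off strictness from the fact that this quantity is constant in time. First I would record the consequences of (\ref{relation_1}) and (\ref{relation_2}) for a ghost solution: $(g,u(t))=E$ and $|Au(t)|^{2}=P=\lambda E$, both of which are time‑independent. Using that $A$ is self‑adjoint together with the standing hypothesis $Ag=\lambda g$ from (\ref{assump_g}), one then computes
\begin{align*}
(Au(t),g)=(u(t),Ag)=\lambda(u(t),g)=\lambda E=P.
\end{align*}

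Next I would expand the square
\begin{align*}
0\le |Au(t)-g|^{2}=|Au(t)|^{2}-2(Au(t),g)+|g|^{2}=P-2P+G^{2}=G^{2}-P ,
\end{align*}
which already gives the weak inequality $P\le G^{2}$. To upgrade it to a strict one, note that the right‑hand side $G^{2}-P$ does not depend on $t$; hence if $P=G^{2}$ then $|Au(t)-g|=0$, i.e.\ $Au(t)=g$, for \emph{every} $t\in\mathbb{R}$, so $u(t)\equiv g/\lambda=u_{*}$, the stationary solution identified after (\ref{assump_g}). This contradicts the requirement in the definition of ghost solution that $u(\cdot)$ be nonstationary, so in fact $P<G^{2}$.

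I do not expect a genuine obstacle here; the only thing to be careful about is observing that $G^{2}-P$ is time‑independent, which is what allows one to conclude $Au(t)=g$ simultaneously for all $t$ and thereby avoid invoking backward uniqueness for the 2D NSE. As an aside, the same identity $G^{2}-P=(B(u,u),g)$ also drops out of taking the $H$‑inner product of the functional form (\ref{functional_form}) with $g$ and using $(\dot u,g)=0$ from Proposition \ref{general_relation}(a); however, proving positivity of $(B(u,u),g)$ directly is less transparent than the squared‑norm argument above, so I would present the computation via $|Au-g|^{2}$.
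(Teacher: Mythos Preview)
Your proof is correct and in fact more direct than the paper's own argument. The paper proceeds via Proposition~\ref{general_relation}(\ref{Prop3}) and~(\ref{Prop6}): from $|B(u,u)|^{2}-|\dot u|^{2}=G^{2}-P$ and $|\dot u|^{2}+(B(u,u),\dot u)=0$ one applies Cauchy--Schwarz to get $|\dot u|^{2}\le |B(u,u)|^{2}$, hence $P\le G^{2}$; the equality case then requires a short case split (either $B(u,u)=-\dot u$ or $\dot u=0$), both of which collapse to $Au=g$. You bypass the nonlinear term and $\dot u$ entirely: the identity $|Au-g|^{2}=G^{2}-P$ follows purely from the ghost relations $(Au,g)=P=|Au|^{2}$, and the equality case $Au\equiv g$ is immediate with no Cauchy--Schwarz and no case analysis. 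What your route gains is simplicity, and it actually delivers Proposition~\ref{P_close_G^2} (namely $|A(u-u_{*})|^{2}=G^{2}-P$) in the same stroke; what the paper's route supplies, on the other hand, is the dynamical identity $G^{2}-P=|B(u,u)|^{2}-|\dot u|^{2}$ linking the gap to the inertial and time-derivative terms, which has independent interest.
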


\begin{proof}
It follows from Proposition 3.1(\ref{Prop3}) that 
\begin{align}
\label{inside_2_1}
|B(u,u)|^2-|\dot{u}|^2=G^2-P,
\end{align}
and that 
\begin{align*}
|\dot{u}|^2+(B(u,u),\dot{u})=0,
\end{align*}
which implies 
\begin{align*}
|\dot{u}|^2\leq|B(u,u)||\dot{u}|\leq \frac{|B(u,u)|^2}{2}+\frac{|\dot{u}|^2}{2},
\end{align*}
hence,
\begin{align*}
|\dot{u}|^2\leq |B(u,u)|^2,
\end{align*}
therefore, (\ref{inside_2_1}) gives $P\leq G^2$.

Moreover, the equality occurs only when the equality occurs for the Cauchy-Schwarz inequality; so, if $P=G^2$, then
\begin{align*}
B(u,u)=c\dot{u}, \text{ with } |c|=1,
\end{align*}
and by Proposition 3.1(\ref{Prop6})
\begin{align*}
0&=|\dot{u}|^2+(B(u,u),\dot{u})\\
&=(1+c)|\dot{u}|^2.
\end{align*}

If $c=-1$, then 
\begin{align*}
\dot{u}+B(u,u)=0,
\end{align*}
so by Proposition 3.1(\ref{Prop3}), $Au=g$, and, consequently, $u=A^{-1}g=g/\lambda=u_*$.

If $\dot{u}=0$, then $B(u,u)=0$, so, $Au=g$, which also gives $u=g/\lambda.$

\end{proof}

\begin{lemma}
\label{E_and_e}
The following inequality holds for the energy $e=|u|^2$ and enstrophy $E=|A^{1/2}u|^2$,
\begin{align*}
E\leq \lambda e\leq \lambda E.
\end{align*}
\end{lemma}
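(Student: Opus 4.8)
The plan is to prove the two inequalities separately, since they have quite different characters: the right-hand estimate $\lambda e\le\lambda E$ is completely general (it holds for every zero-mean $u\in V$), while the left-hand estimate $E\le\lambda e$ is exactly where the ghost-solution hypothesis enters through the relations (\ref{relation_1})--(\ref{relation_2}).

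For the right inequality, since $\lambda>1>0$ it suffices to show $e\le E$, i.e. $|u|^2\le\|u\|^2$. This is just the Poincar\'e inequality: expanding in Fourier modes,
\begin{align*}
\|u\|^2=(Au,u)=\sum_{k\in\mathbb{Z}^2\setminus\{0\}}|k|^2\,|\hat u(k)|^2\ \ge\ \sum_{k\in\mathbb{Z}^2\setminus\{0\}}|\hat u(k)|^2=|u|^2,
\end{align*}
because $|k|^2\ge 1$ for every $k\in\mathbb{Z}^2\setminus\{0\}$ (equivalently, the smallest eigenvalue of $A$ is $1$ after the rescaling $\kappa_0=1$). No use of the ghost property is needed here.

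For the left inequality I would start from the trivial bound $0\le|Au-\lambda u|^2$ and expand the square, $|Au-\lambda u|^2=|Au|^2-2\lambda(Au,u)+\lambda^2|u|^2$. Now I bring in the structure of a ghost solution: one always has $(Au,u)=\|u\|^2=E$, while (\ref{relation_1})--(\ref{relation_2}) give $|Au|^2=P=\lambda E$. Substituting,
\begin{align*}
0\le |Au-\lambda u|^2=\lambda E-2\lambda E+\lambda^2 e=\lambda(\lambda e-E),
\end{align*}
and dividing by $\lambda>0$ yields $E\le\lambda e$. Combining the two parts gives $E\le\lambda e\le\lambda E$, as claimed.

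I do not expect a real obstacle — the argument is essentially two lines — but the point worth emphasizing is that the left inequality genuinely requires the ghost identity $|Au|^2=\lambda E$, which forces the ``spectral barycenter'' of $u$ to lie at or below $\lambda$; for a generic divergence-free field with high-frequency content $\|u\|^2$ can be arbitrarily large compared to $\lambda|u|^2$, so relation (\ref{relation_2}) cannot be dispensed with. As in Lemma \ref{P_and_G^2}, it is also natural to record the equality case: $\lambda e=E$ holds precisely when $|Au-\lambda u|^2=0$, i.e. $Au=\lambda u$, so $u$ lies entirely in the eigenspace of $A$ associated with $\lambda$.
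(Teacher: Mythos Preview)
Your proof is correct and matches the paper's argument essentially line for line: the paper also expands $0\le|Au-\lambda u|^2$ using $|Au|^2=\lambda E$ from (\ref{relation_1}) to obtain $\lambda(\lambda e-E)\ge 0$, and invokes the Poincar\'e inequality $|u|\le|A^{1/2}u|$ for the upper bound. Your additional remarks on the necessity of the ghost identity and on the equality case are accurate and anticipate the paper's subsequent Theorem \ref{comparison_relation}.
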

\begin{proof}
Indeed, by (\ref{relation_1}) 
\begin{align*}
0&\leq |Au-\lambda u|^2\\
&=|Au|^2-2\lambda|A^{1/2}u|^2+\lambda^2|u|^2\\
&=\lambda(\lambda e-E),
\end{align*}
hence, $\lambda e-E\geq 0$. For the upper bound $\lambda e\leq \lambda E$, recall the Poincar\'e inequality, namely, $|u|\leq |A^{1/2}u|$.
\end{proof}

In the discussion to follow, we will frequently use the sign of the differences of such two quantities, as $P$ and $G^2$, $E$ and $\lambda e$. For convenience, we put them together in the next theorem.

\begin{theorem}
\label{comparison_relation}
The following are equivalent,\\
(i) $P<G^2$,\\
(ii) $E<\lambda e$,\\
(iii) $E^2<eG^2$,\\
(iv) $E^2<eP$.\\
Moreover, if any one of the above inequalities is replaced by equality, then the other three will also become equalities, and the equality occurs only if $u=u_*$.
\end{theorem}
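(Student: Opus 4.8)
The plan is to establish the chain of implications $(i)\Rightarrow(ii)\Rightarrow(iii)\Rightarrow(iv)\Rightarrow(i)$ and then argue that equality propagates and forces $u=u_*$. Several of these links are already available: Lemma~\ref{P_and_G^2} gives $(i)$ outright for every ghost solution (so the equivalences are really statements conditioned on the ``ghost'' relations \eqref{relation_1}--\eqref{relation_2}), Lemma~\ref{E_and_e} gives $\lambda e\le\lambda E$, i.e. $e\le E$, which will be used repeatedly, and the computation in Lemma~\ref{E_and_e} shows $\lambda e-E=\lambda^{-1}|Au-\lambda u|^2\ge 0$, the nonnegativity whose strictness is exactly $(ii)$.

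First I would prove $(i)\Leftrightarrow(ii)$. Using \eqref{relation_1} and \eqref{relation_2}, both $P=G^2-\bigl(|B(u,u)|^2-|\dot u|^2\bigr)$ (Proposition~\ref{general_relation}\eqref{Prop3}) and $E=\lambda e-\lambda^{-1}|Au-\lambda u|^2$ hold; the cleanest route, though, is to show directly that $P=G^2\iff u=u_*\iff E=\lambda e$, since each equality was already shown in Lemmas~\ref{P_and_G^2} and~\ref{E_and_e} to pin down $u=u_*$, and conversely $u=u_*=g/\lambda$ makes both sides equal by a one-line substitution ($P=|Ag/\lambda|^2=\lambda|g|^2/\lambda=\ldots$, using $Ag=\lambda g$). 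Because $(i)$ and $(ii)$ are strict versions of inequalities that always hold ($P\le G^2$, $E\le\lambda e$) with equality iff $u=u_*$, they are equivalent. Next, for $(ii)\Leftrightarrow(iii)$: from \eqref{relation_1} we have $(g,u)=E$, and Cauchy--Schwarz gives $E=(g,u)\le|g|\,|u|=G\,e^{1/2}$, hence $E^2\le eG^2$, with equality iff $u\parallel g$, i.e. $u=cg$; feeding $u=cg$ into $Au+B(u,u)=g$ (recall $B(g,g)=0$) forces $c\lambda g=g$, so $u=g/\lambda=u_*$. Thus $(iii)$ is again the strict form of an always-true inequality with the same equality case, so $(ii)\Leftrightarrow(iii)$. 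Then $(iii)\Leftrightarrow(iv)$ is immediate once we know $P\le G^2$ with equality iff $u=u_*$ (Lemma~\ref{P_and_G^2}): $eP\le eG^2$ always, so $E^2<eP\Rightarrow E^2<eG^2$, and conversely $E^2<eG^2$ together with strict $P<G^2$... — here one must be slightly careful, so I would instead derive $(iv)$ directly. Since $E=\lambda^{-1}|Au|^2=\lambda^{-1}P$ by \eqref{relation_2}, the inequality $(iv)$ $E^2<eP$ reads $\lambda^{-2}P^2<eP$, i.e. $P<\lambda^2 e$; but $P=|Au|^2\ge\lambda|A^{1/2}u|^2=\lambda E$ is false in the wrong direction, so instead use $P=\lambda E$ and rewrite $(iv)$ as $\lambda^2 E^2<e\lambda E$, i.e. $\lambda E<e\lambda$?? — this shows $(iv)\iff \lambda E< \lambda e$ after dividing by $\lambda E>0$ (note $E>0$ since $u$ is nonstationary), which is exactly... wait, that gives $E<e$, contradicting Lemma~\ref{E_and_e}. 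Let me instead record: $(iv)$ $E^2<eP=e\lambda E$ $\iff$ $E<\lambda e$ $=(ii)$, dividing by $E>0$. This is the cleanest equivalence of all and I would present $(ii)\Leftrightarrow(iv)$ this way, then get $(iii)\Leftrightarrow(iv)$ by transitivity through $(ii)$.

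For the equality statement: if any one of $(i)$--$(iv)$ holds with equality, run the corresponding argument above — $P=G^2$ gives $u=u_*$ by Lemma~\ref{P_and_G^2}; $E=\lambda e$ gives $u=u_*$ by the equality case of Lemma~\ref{E_and_e}; $E^2=eG^2$ gives $u\parallel g$ hence $u=u_*$; $E^2=eP$ gives $E=\lambda e$ via the division argument, hence $u=u_*$ — and once $u=u_*$ all four become equalities by direct substitution using $Ag=\lambda g$ and $B(g,g)=0$. The main obstacle, as the scratch work above shows, is purely bookkeeping: keeping the direction of each inequality straight (in particular distinguishing $|Au|^2\ge\lambda\|u\|^2$ from $\|u\|^2\ge|u|^2$, and remembering $P=\lambda E$ only for ghost solutions, not in general) and making sure each ``strict inequality $\iff$ not the equality case'' step is justified by an always-true companion inequality. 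No hard analysis is involved; the content is entirely in assembling relations \eqref{relation_1}, \eqref{relation_2}, Proposition~\ref{general_relation}, and Lemmas~\ref{P_and_G^2}--\ref{E_and_e} in the right order.
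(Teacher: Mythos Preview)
Your overall strategy matches the paper's exactly: show that each of (i)--(iv) is the strict form of an inequality that \emph{always} holds (given \eqref{relation_1}--\eqref{relation_2}), with equality in each case characterizing $u=u_*$; equivalence then follows because each statement is equivalent to $u\neq u_*$. The scratch-work detours in your write-up do not affect correctness once you settle on the clean routes.

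Two of your sub-arguments differ from the paper's and are arguably nicer. For (iv), the paper applies Cauchy--Schwarz to $E=(u,Au)\le |u|\,|Au|=e^{1/2}P^{1/2}$ and then analyzes the equality case $Au=\mu u$; your observation that $P=\lambda E$ (from \eqref{relation_2}) turns (iv) into $E^2<\lambda eE$, i.e.\ exactly (ii), is a one-line algebraic equivalence that bypasses Cauchy--Schwarz entirely. For (iii), the paper derives $E^2\le eG^2$ from $E^2\le eP$ together with $P\le G^2$, whereas you apply Cauchy--Schwarz to $E=(g,u)\le G\,e^{1/2}$ directly; both work, and yours makes the equality case ($u\parallel g$) transparent.

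One small gap to patch: in your equality analysis for (iii) you write ``feeding $u=cg$ into $Au+B(u,u)=g$'', but $u$ satisfies the \emph{nonstationary} NSE $\dot u+Au+B(u,u)=g$, so you cannot drop $\dot u$ a priori. The fix is immediate: from $u=cg$ you get $Au=\lambda u$, hence $|Au-\lambda u|^2=0$, which by the identity in Lemma~\ref{E_and_e} gives $E=\lambda e$, and you have already handled that equality case. Alternatively, take the inner product of the NSE with $g$ and use $(\dot u,g)=0$ from Proposition~\ref{general_relation}\eqref{Prop1} together with $B(u,u)=0$ to obtain $P=(Au,g)=G^2$, then invoke Lemma~\ref{P_and_G^2}. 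Either route closes the gap.
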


\begin{proof}
(i). From Lemma \ref{P_and_G^2}, we know that $P\leq G^2$, with equality if and only if $u=u_*$. 

(ii). From Lemma \ref{E_and_e}, clearly, $E\leq \lambda e$. 

If $E=\lambda e$, then, from the proof of Lemma \ref{E_and_e}, we see that
\begin{align*}
Au=\lambda u,
\end{align*}
so that $u$ is an eigenvector of the operator $A$, hence $B(u,u)=0$. By taking inner product of the NSE with $u$, using Proposition 3.1(\ref{Prop1}), we get
\begin{align}
\label{u_eigenvector}
(Au,g)=(g,g),
\end{align}
that is, 
\begin{align*}
P=G^2,
\end{align*}
thus, $u=u_*$, as desired.

Conversely, if $u=u_*$, then, since $g=\lambda u_*=\lambda u$, it holds that $E=(u,g)=\lambda(u,u)=\lambda e$.

(iii) and (iv). It follows from the Cauchy-Schwarz inequality that $E=|A^{1/2}u|^2=(u,Au)\leq |u||Au|\leq e^{1/2}P^{1/2}$, hence also $E^2\leq eG^2$, since $P\leq G^2$.

If $E^2=eP$, then by the condition for equality in the Cauchy-Schwarz inequality, $Au=\mu u$, for some $\mu$. Then, combining $P=(Au,g)=\mu(u,g)=\mu E$ with (\ref{relation_1}), we have $\mu=\lambda$. For the same reason as given in case (ii), we have $u=u_*$.

If $E^2=eG^2$, then, since $E^2\leq eP$, we have $G^2\leq P$, hence it must happen that $P=G^2$.

Conversely,  if $u=u_*$, then both $E^2=eG^2$ and $E^2=eP$ will hold.

\end{proof}

The following result gives a lower bound for the enstrophy of the ghost solutions. 
\begin{theorem}
\label{low_bd_ghost}
(see Theorem 6.4 in \cite{CMM13}.)\\
All ghost solutions satisfy
\begin{align}
\label{lower_bd_for_ghost}
E_{*}\geq E\geq \frac{G^2}{\lambda+C_{BG}G(ln(e\lambda))^{1/2}}.
\end{align}
\end{theorem}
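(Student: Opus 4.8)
The plan is to read off both inequalities from the algebraic relations already established for ghost solutions, with the lower bound resting on one logarithmic (Brezis--Gallouet) estimate of the nonlinear term paired against $g$.

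\textit{Upper bound.} The stationary point $u_*=g/\lambda$ satisfies $Au_*=g$, so its enstrophy is $E_*=\|u_*\|^2=(Au_*,u_*)=(g,u_*)=G^2/\lambda$. By Lemma \ref{P_and_G^2} every genuine ghost solution obeys $P=|Au|^2<G^2$, and since $P=\lambda E$ by (\ref{relation_2}) this reads precisely $\lambda E<G^2=\lambda E_*$, i.e.\ $E<E_*$; the non-strict form in the statement merely also absorbs the excluded boundary case $u=u_*$.

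\textit{Lower bound.} First I would isolate the nonlinear term: pairing the functional form (\ref{functional_form}) with $g$ and using $(\dot u,g)=0$ from Proposition \ref{general_relation}(\ref{Prop1}), the self-adjointness of $A$ with $Ag=\lambda g$, and $(g,u)=E$ from (\ref{relation_1}), one gets
\begin{align*}
(B(u,u),g)=(g,g)-(Au,g)=G^2-\lambda E
\end{align*}
(equivalently this is Proposition \ref{general_relation}(\ref{Prop3}) combined with (\ref{relation_2})). It then remains to estimate the left side from above. Since $g$ is divergence free, $(B(u,u),g)=\int_\Omega\big((u\cdot\nabla)u\big)\cdot g\,dx$, and H\"older's inequality gives $|(B(u,u),g)|\le c\,\|u\|_{L^\infty}\,\|u\|\,|g|$. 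Now invoke the Brezis--Gallouet inequality in the $2$D periodic setting, $\|u\|_{L^\infty}\le C_{BG}\,\|u\|\,\big(1+\ln(|Au|^2/(\lambda_1\|u\|^2))\big)^{1/2}$. The decisive point is that along a ghost solution the argument of this logarithm is frozen: by (\ref{relation_1})--(\ref{relation_2}) and $\lambda_1=1$ one has $|Au|^2/\|u\|^2=P/E=\lambda$, so the logarithm equals $1+\ln\lambda=\ln(e\lambda)$ (with $e$ Euler's number), and hence $\|u\|_{L^\infty}\le C_{BG}\,E^{1/2}\,(\ln(e\lambda))^{1/2}$.

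Combining the two estimates and absorbing the H\"older constant into $C_{BG}$,
\begin{align*}
G^2-\lambda E=(B(u,u),g)\le C_{BG}\,G\,E\,(\ln(e\lambda))^{1/2},
\end{align*}
so $G^2\le\big(\lambda+C_{BG}G(\ln(e\lambda))^{1/2}\big)E$, which is the claimed bound after dividing by the (strictly positive) coefficient of $E$. I expect the only delicate step to be quoting the Brezis--Gallouet inequality in exactly the normalization that produces the constant $C_{BG}$ and the $(\ln(e\lambda))^{1/2}$ factor, and remarking that the logarithm's argument $\lambda\ge 2$ keeps the estimate nonvacuous; the remainder is direct substitution into relations already in hand. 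One may instead route the nonlinear estimate through $(B(u,u),g)=-(B(u,g),u)$ via the orthogonality (\ref{orthog_rel}), but the form above is cleaner.
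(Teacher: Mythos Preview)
The paper does not supply its own proof of this theorem; it simply cites Theorem 6.4 in \cite{CMM13}. So there is nothing in the present paper to compare your argument against. That said, your proof is correct and is essentially the standard derivation one expects: the upper bound is immediate from $P=\lambda E<G^2=\lambda E_*$, and the lower bound follows by pairing the NSE with $g$ to obtain $(B(u,u),g)=G^2-\lambda E$ and then controlling the left side by a Brezis--Gallouet estimate whose logarithmic argument $|Au|^2/\|u\|^2=P/E=\lambda$ is frozen along a ghost. This is the argument of \cite{CMM13}, so your reconstruction matches the cited source even though the present paper omits the details. One cosmetic point worth flagging explicitly (you do mention it parenthetically): the symbol $e$ in $\ln(e\lambda)$ is Euler's number, not the energy $e=|u|^2$ used elsewhere in the paper; since the paper overloads this symbol, it is worth stating clearly.
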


Next proposition tells us that any ghost solution $u(t)$ will never be an eigenvector of $A$, in particular, $u(t)$ will not be reduced to be $u_*$. 
\begin{proposition}
\label{prop_1}
(see Proposition 6.2 in \cite{CMM13}).\\
Let $u(t)\in \mathcal{A}\setminus\{u_{*}\}$ be a ghost solution, then 
\begin{align*}
|(A-\mu)u(t)|\geq \max\{\frac{||u_0||}{\lambda^{1/2}},\frac{||u_0||}{G}\},
\end{align*}
for all $t\in \mathbb{R}$, and all $\mu\in sp(A)$ such that $1\leq \mu <\lambda$. And consequently, $u(t)$ is never an eigenvector of $A$.
\end{proposition}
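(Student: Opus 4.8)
The goal is to bound $|(A-\mu)u(t)|$ from below for all $\mu \in sp(A)$ with $1 \le \mu < \lambda$, uniformly in $t$. The natural starting point is the energy balance for a ghost solution: differentiating $e(t) = |u(t)|^2$ and using $\dot e \equiv 0$ together with \eqref{relation_1} gives $\|u(t)\|^2 = (g,u(t)) = E$, and similarly the enstrophy balance gives $|Au(t)|^2 = \lambda E$. Since $u(t) \in \mathcal{A}$ for all $t$, all these quantities ($e$, $E$, $P$) are in fact constants independent of $t$ — this is built into the definition of a ghost solution — so it suffices to get a bound at a single time, say $t=0$, with the understanding that $\|u_0\| = \|u(t)\|$ for all $t$.

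First I would compute $|(A-\mu)u|^2 = |Au|^2 - 2\mu(Au,u) + \mu^2|u|^2 = P - 2\mu E + \mu^2 e$, using $\|u\|^2 = (Au,u) = E$. Now I want to show this exceeds both $\|u_0\|^2/\lambda = E/\lambda$ and $\|u_0\|^2/G^2 = E/G^2$. For the first bound: I would treat $f(\mu) = \mu^2 e - 2\mu E + P$ as a quadratic in $\mu$ and exploit the relations $P = \lambda E$ (from \eqref{relation_2}) and the comparison inequalities of Theorem~\ref{comparison_relation}, in particular $E < \lambda e$ (strict, since $u \neq u_*$) and $E^2 < eP = e\lambda E$, i.e. $E < \lambda e$ again. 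The key point is that on the range $1 \le \mu < \lambda$ the quadratic $f(\mu)$ stays bounded below by a definite positive quantity; one should check the value at the endpoint $\mu$ closest to the vertex $\mu_v = E/e$. Note $\mu_v = E/e < \lambda$ by Theorem~\ref{comparison_relation}(ii), and $\mu_v = E/e \ge 1$ by the Poincaré inequality $E \ge e$; so the vertex actually lies inside $[1,\lambda)$, and the minimum of $f$ over that interval is $f(\mu_v) = P - E^2/e = \lambda E - E^2/e$. Then $f(\mu_v) = E(\lambda - E/e) = E(\lambda e - E)/e$, and I must show this is at least $\max\{E/\lambda, E/G^2\}$, i.e. $\lambda e - E \ge e/\lambda$ and $\lambda e - E \ge e/G^2$.

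The remaining work is thus purely about the scalar inequality $\lambda e - E \ge e\max\{1/\lambda, 1/G^2\}$. For the $e/\lambda$ piece: $\lambda e - E \ge e/\lambda \iff \lambda^2 e - e \ge \lambda E \iff (\lambda^2-1)e \ge \lambda E$; since $E \le \lambda e$ this would follow if $\lambda^2 - 1 \ge \lambda^2$, which is false, so a cruder bound will not do — I expect one needs instead the sharper fact that $u$ is bounded away from eigenvectors, which is precisely what is being proved, so the argument must be more self-contained. The cleaner route: go back to $|(A-\mu)u|^2 = P - 2\mu E + \mu^2 e$ and bound it below directly using $P = \lambda E$, $E \le \lambda e$ (equivalently $e \ge E/\lambda$), and $\mu < \lambda$: then $P - 2\mu E + \mu^2 e \ge \lambda E - 2\mu E + \mu^2 E/\lambda = (E/\lambda)(\lambda^2 - 2\mu\lambda + \mu^2) = (E/\lambda)(\lambda - \mu)^2$. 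This is positive but not yet bounded below since $\mu$ could approach $\lambda$; however $\mu$ ranges over $sp(A)$, which is discrete, so $\lambda - \mu \ge 1$ for $\mu < \lambda$ in the spectrum, giving $|(A-\mu)u|^2 \ge E/\lambda = \|u_0\|^2/\lambda$. For the second bound, instead use $e \le E$ is wrong in direction; use instead $P < G^2$ (Lemma~\ref{P_and_G^2}) and the estimate $|(A-\mu)u|^2 = P - 2\mu E + \mu^2 e \ge P - 2\mu E + \mu^2 e$ — here I would write $\mu^2 e \ge \mu^2 E/\lambda \ge \mu E$ (since $\mu \le \lambda$... wait, need $\mu/\lambda \cdot \mu \ge \mu$, i.e. $\mu \ge \lambda$, false) — so more care is needed; the correct comparison is to use $P = \lambda E > \mu E$ strictly and again discreteness of the spectrum, writing $|(A-\mu)u|^2 \ge (E/G^2)(\cdots)$ after substituting $E = P/\lambda$ and $G^2 > P$. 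I would organize it as: from $|(A-\mu)u|^2 \ge (E/\lambda)(\lambda-\mu)^2$ and $(\lambda - \mu) \ge 1$, and separately from $G^2 > P = \lambda E$ so $E/G^2 < E/\lambda \cdot (\lambda/\lambda)$... actually $E/\lambda < E/G^2 \cdot G^2/\lambda$; since it is not automatic that $G^2 \ge \lambda$, the two bounds $E/\lambda$ and $E/G^2$ are genuinely different and the $\max$ is the content.

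**The main obstacle.** The delicate point is getting a bound that is both uniform in $t$ and strictly positive — uniformity in $t$ comes for free because $e, E, P$ are constant on a ghost solution, but strict positivity requires using the discreteness of $sp(A)$ (so that $\mu \le \lambda - 1$ whenever $\mu < \lambda$) rather than just $\mu < \lambda$. The identity $|(A-\mu)u|^2 = P - 2\mu E + \mu^2 e \ge (E/\lambda)(\lambda-\mu)^2$, obtained by replacing $e$ by its lower bound $E/\lambda$ and $P$ by its exact value $\lambda E$, is the crux; once it is in hand, the lower bound $\|u_0\|^2/\lambda$ follows from $(\lambda - \mu)^2 \ge 1$, and the bound $\|u_0\|^2/G^2$ should follow by a parallel manipulation using $G^2 > P$. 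The final clause — that $u(t)$ is never an eigenvector of $A$ — is then immediate: if $Au(t_0) = \mu u(t_0)$ for some $\mu \in sp(A)$, then either $\mu = \lambda$ (forcing $u = u_*$ as in the proof of Theorem~\ref{comparison_relation}, contradicting $u \in \mathcal{A}\setminus\{u_*\}$) or $\mu < \lambda$ (wait, also $\mu > \lambda$ is possible) — in any case $\mu \ne \lambda$ would give $|(A-\mu)u(t_0)| = 0$, and for $\mu < \lambda$ this contradicts the lower bound just proved, while for $\mu > \lambda$ a symmetric estimate (or the observation that $|Au|^2 = \lambda E = \mu^2 e$ combined with $E = \mu e$ forces $\mu = \lambda$) rules it out; since $u_0$ is nonzero (else $u \equiv 0 = u_*$ is stationary), $\|u_0\| > 0$ and the bound is a genuine obstruction.
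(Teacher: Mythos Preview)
The paper does not actually supply a proof of this proposition; it defers entirely to \cite{CMM13}. So there is no ``paper's own proof'' to compare against, and your attempt has to be judged on its own.

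Your main computation is correct and is the right idea: expanding
\[
|(A-\mu)u|^2 = P - 2\mu E + \mu^2 e,
\]
substituting $P=\lambda E$ from \eqref{relation_2} and $e\ge E/\lambda$ from Lemma~\ref{E_and_e}, and completing the square to get
\[
|(A-\mu)u|^2 \ge \frac{E}{\lambda}(\lambda-\mu)^2.
\]
Since $sp(A)\subset\{1,2,3,\dots\}$, any $\mu\in sp(A)$ with $\mu<\lambda$ satisfies $\lambda-\mu\ge 1$, and the bound $|(A-\mu)u|\ge \|u_0\|/\lambda^{1/2}$ follows cleanly. The ``never an eigenvector'' clause is also handled correctly: if $Au=\mu u$ then $P=\mu E$, forcing $\mu=\lambda$ by \eqref{relation_2}, and then $u=u_*$ as in Theorem~\ref{comparison_relation}.

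The genuine gap is the second bound $|(A-\mu)u|\ge \|u_0\|/G$. You write that it ``should follow by a parallel manipulation using $G^2>P$'' but never carry it out, and you yourself observe that it is \emph{not} automatic that $G^2\ge\lambda$, so the already-proved bound $E/\lambda$ does not dominate $E/G^2$ in general. The natural parallel tricks do not close the gap without extra input: replacing $e$ by the Cauchy--Schwarz lower bound $E^2/G^2$ gives $|(A-\mu)u|^2\ge \lambda E - 2\mu E + \mu^2 E^2/G^2$, which is not obviously $\ge E/G^2$; testing against $g$ gives $|(A-\mu)u|\ge (\lambda-\mu)E/G\ge E/G = \|u_0\|^2/G$, which is the wrong power of $\|u_0\|$ unless $E\ge 1$. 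So either an additional a priori estimate on the size of $E$ (or of $G$ relative to $\lambda$) is needed, or one must go back to the argument in \cite{CMM13}. As it stands, your proposal proves the $\lambda^{1/2}$ half of the maximum and leaves the $G$ half unjustified.
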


Henceforth, the following relations are always true for the ghost solutions defined in our paper, namely,
\begin{align}
\label{relation_for_constants}
P<G^2,
E<\lambda e,
E^2<eP,
E^2<eG^2.
\end{align}
Notice that these inequalities are strict because the ghost solutions are nonstationary.\\
The parabola bound, namely, $E^2 \leq eG^2$ in our notation, has been obtained in \cite{DFJ05} for all solutions in the global attractor of 2D NSE, regardless of the force.

Actually, 
we can show that if $P$ is close to $G^2$, then $u(t)$ is also close to the stationary point $u_*$.

\begin{proposition}
\label{P_close_G^2}
If $G^2-P=\delta^2$, for some $0<\delta \in \mathbb{R}$, then
\begin{align*}
|A(u-u_*)|=\delta;
\end{align*}
and hence, 
\begin{align*}
|u-u_*|\leq \delta.
\end{align*}
\end{proposition}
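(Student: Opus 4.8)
The plan is to compute $|A(u-u_*)|^2$ directly using $u_* = g/\lambda$ and the ghost-solution relations collected in \eqref{relation_1} and \eqref{relation_2}. Expanding,
\begin{align*}
|A(u-u_*)|^2 = |Au|^2 - \frac{2}{\lambda}(Au, Ag) + \frac{1}{\lambda^2}|Ag|^2.
\end{align*}
Since $Ag = \lambda g$, the cross term is $\frac{2}{\lambda}(Au, \lambda g) = 2(Au, g)$, and the last term is $\frac{1}{\lambda^2}\lambda^2 |g|^2 = |g|^2 = G^2$. So $|A(u-u_*)|^2 = P - 2(Au,g) + G^2$. Now I would invoke \eqref{relation_2}, namely $(Au,g) = P$, to get $|A(u-u_*)|^2 = P - 2P + G^2 = G^2 - P = \delta^2$, hence $|A(u-u_*)| = \delta$.

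For the second inequality, I would use the Poincaré-type inequality $|v| \le |Av|$ valid on $D(A)$ (since $\mathrm{sp}(A) \subseteq \{1, 2, 3, \dots\}$, the operator $A$ satisfies $|v|\le |A^{1/2}v| \le |Av|$), applied to $v = u - u_*$. This immediately yields $|u - u_*| \le |A(u-u_*)| = \delta$.

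There is essentially no serious obstacle here; the only thing to be careful about is making sure every identity used ($Ag = \lambda g$, $(Au,g) = P$, $|g|^2 = G^2$) is indeed available — all three are, from \eqref{assump_g}, \eqref{relation_2}, and the definition $G^2 := |g|^2$ in Lemma \ref{P_and_G^2}. One could also present the first computation more symmetrically by noting $A(u - u_*) = Au - g$ and expanding $|Au - g|^2 = |Au|^2 - 2(Au,g) + |g|^2 = P - 2P + G^2$, which is perhaps the cleanest route and avoids carrying the factors of $\lambda$ at all.
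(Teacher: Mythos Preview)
Your proof is correct, and in fact more direct than the paper's. The paper reaches the same endpoint $|Au-g|^2=\delta^2$ but via the NSE dynamics: it writes $g-Au=\dot{u}+B(u,u)$, expands $|\dot{u}+B(u,u)|^2$, and then uses Proposition~\ref{general_relation}(\ref{Prop6}) to cancel the cross term, together with Proposition~\ref{general_relation}(\ref{Prop3}) to identify $|B(u,u)|^2-|\dot{u}|^2$ with $G^2-P$. Your route bypasses all of this by observing immediately that $A(u-u_*)=Au-g$ and expanding the norm using only $(Au,g)=P$ from \eqref{relation_2}; this is cleaner and shows that the identity $|A(u-u_*)|^2=G^2-P$ is purely a consequence of the algebraic ghost relations \eqref{relation_1}--\eqref{relation_2}, with no appeal to the equation or the bilinear-term identities. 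The paper's argument, on the other hand, makes the connection to $|\dot u|$ and $|B(u,u)|$ explicit, which is thematically consistent with the surrounding discussion but not logically necessary here. Your treatment of the second inequality via the spectral lower bound $|v|\le|Av|$ is exactly what the paper's ``and hence'' is implicitly invoking.
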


\begin{proof}
To see this, notice that (\ref{inside_2_1}) gives,
\begin{align*}
|B(u,u)|^2-|\dot{u}|^2=G^2-P=\delta^2;
\end{align*}
It follows from the NSE, and Proposition 3.1(\ref{Prop6}), that
\begin{align*}
|g-Au|^2&=|\dot{u}+B(u,u)|^2\\
&=|\dot{u}|^2+2(B(u,u),\dot{u})+|B(u,u)|^2\\
&=|B(u,u)|^2-|\dot{u}|^2\\
&=\delta^2.
\end{align*}
Therefore, 
\begin{align*}
|A(u-u_*)|=|Au-Au_*|=|Au-g|=\delta.
\end{align*}
\end{proof}

A similar result follows from the proof of Lemma \ref{E_and_e}. Namely,
\begin{proposition}
If $\lambda e-E=\delta$, for some $0<\delta\in \mathbb{R}$, then 
\begin{align*}
|Au-\lambda u|^2=\lambda \delta.
\end{align*}
\end{proposition}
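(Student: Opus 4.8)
The plan is simply to isolate, as a standalone claim, the quantitative identity that is already buried in the proof of Lemma~\ref{E_and_e}. Recall that there, expanding the square and using self-adjointness of $A$, one has for any ghost solution
\begin{align*}
|Au-\lambda u|^2 = |Au|^2 - 2\lambda\,|A^{1/2}u|^2 + \lambda^2|u|^2 = P - 2\lambda E + \lambda^2 e.
\end{align*}
By (\ref{relation_2}) we know $P=\lambda E$ for every ghost solution, so the right-hand side collapses to $\lambda E - 2\lambda E + \lambda^2 e = \lambda(\lambda e - E)$. Hence $|Au-\lambda u|^2 = \lambda(\lambda e - E)$ identically along the trajectory.

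Given this, the statement is immediate: substituting the hypothesis $\lambda e - E = \delta$ yields $|Au - \lambda u|^2 = \lambda\delta$. The only points that need a word of justification are that the expansion of $|Au-\lambda u|^2$ is legitimate — which it is, since $u(t)\in\mathcal D(A)$ for a ghost solution — and that the substitution $|Au|^2 = \lambda E$ is exactly (\ref{relation_2}), a consequence of the enstrophy balance (\ref{ens_bal}) under $\dot E \equiv 0$.

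There is no genuine obstacle here; this proposition plays the same role for Lemma~\ref{E_and_e} that Proposition~\ref{P_close_G^2} plays for Lemma~\ref{P_and_G^2}, namely extracting the sharp error term near the stationary point $u_*$. If one prefers a self-contained derivation not referencing the earlier proof, one can write $|Au-\lambda u|^2=(Au-\lambda u,\,Au-\lambda u)$, expand directly into $|Au|^2$, $(Au,u)=|A^{1/2}u|^2$, and $|u|^2$, and then insert $e$, $E$, and $P=\lambda E$.
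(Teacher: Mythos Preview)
Your proof is correct and follows exactly the approach the paper intends: the paper states this proposition with no separate proof, only the remark that it ``follows from the proof of Lemma~\ref{E_and_e},'' and your expansion $|Au-\lambda u|^2 = P - 2\lambda E + \lambda^2 e = \lambda(\lambda e - E)$ via $P=\lambda E$ is precisely that computation. The only cosmetic difference is that the paper cites (\ref{relation_1}) rather than (\ref{relation_2}) for $|Au|^2=\lambda E$, but both encode the same identity.
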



\section{the reference frame for the NSE}
\subsection {One Choice of the Finite Coordinate System}
\label{subsec_reference}
For a ghost solution $u(\cdot) \in \mathcal{A}$ of NSE, consider the following set of orthonormal system:
\begin{align*}
&f_0:=\frac{g}{|g|}=\frac{g}{G},\\
&f_1:=\frac{u-(u,f_0)f_0}{|u-(u,f_0)f_0|}=\frac{u-\frac{E}{G}f_0}{({e-\frac{E^2}{G^2}})^{1/2}},\\
&f_2:=\frac{\dot{u}}{|\dot{u}|},\\
&f_3:=\frac{Au-(Au,f_0)f_0-(Au,f_1)f_1}{|Au-(Au,f_0)f_0-(Au,f_1)f_1|}=\frac{Au-\frac{P}{G}f_0-\left(\frac{E-\frac{EP}{G^2}}{(e-\frac{E^2}{G^2})^{1/2}}\right)f_1}{\left(P-\frac{P^2}{G^2}-(\frac{E-\frac{EP}{G^2}}{(e-\frac{E^2}{G^2})^{1/2}})^2\right)^{1/2}}.
\end{align*}

We now project each term of the NSE onto the finite dimensional space spanned by $f_0, f_1, f_2$ and $f_3$, called a reference frame. A direct caluculation using Proposition 3.1 gives the following.

\begin{lemma}
\label{coefficient_1}
If we denote by $\tilde{u}$ the column vector containing the projection of $u$ in the $span\{f_0, f_1, f_2, f_3\}$, then
\begin{align*}
\tilde{u}=\begin{pmatrix}
 (u,f_0)\\ 
(u,f_1)\\ 
(u,f_2)\\ 
(u,f_3) 
\end{pmatrix}
=\begin{pmatrix}
\frac{E}{G}\\
(e-\frac{E^2}{G^2})^{1/2}\\
0 \\
0
\end{pmatrix},
\end{align*}
similarly,
\begin{align*}
\widetilde{Au}=\begin{pmatrix}
(Au,f_0)\\
(Au,f_1)\\
(Au,f_2)\\
(Au,f_3)
\end{pmatrix}
=\begin{pmatrix}
\frac{P}{G}\\
\frac{E-\frac{EP}{G^2}}{(e-\frac{E^2}{G^2})^{1/2}}\\
0\\
\left(P-\frac{P^2}{G^2}-(\frac{(E-\frac{EP}{G^2})^2}{e-\frac{E^2}{G^2}})\right)^{1/2}
\end{pmatrix},
\end{align*}

\begin{align*}
\tilde{g}=\begin{pmatrix}
G\\
0\\
0\\
0
\end{pmatrix},
\end{align*}

\begin{align*}
\tilde{\dot{u}}=\begin{pmatrix}
0\\
0\\
|\dot{u}|\\
0
\end{pmatrix},
\end{align*}

and
\begin{align*}
\widetilde{B(u,u)}=\begin{pmatrix}
G-\frac{P}{G}\\
-\frac{E-EP/G^2}{(e-E^2/G^2)^{1/2}}\\
-|\dot{u}| \\
-\left(P-P^2/G^2-\frac{(E-EP/G^2)^2}{e-E^2/G^2}\right)^{1/2}
\end{pmatrix}.
\end{align*}

\end{lemma}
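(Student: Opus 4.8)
The plan is to compute each coordinate directly from the definitions of $f_0,f_1,f_2,f_3$, using only orthonormality of the frame together with the identities recorded in Proposition~\ref{general_relation} and relations~\eqref{relation_1}--\eqref{relation_2}. First I would record $\widetilde{u}$: since $f_0=g/G$ we have $(u,f_0)=(u,g)/G=E/G$ by \eqref{relation_1}; then $(u,f_1)$ is exactly the norm $|u-(u,f_0)f_0|=(e-E^2/G^2)^{1/2}$ by the very definition of $f_1$ as the normalization of that vector; and $(u,f_2)=(\dot u,u)/|\dot u|=0$, $(u,f_3)=0$ because $f_3$ is built orthogonal to $f_0,f_1$ and Gram--Schmidt guarantees $(u,f_3)=0$ (the component of $u$ orthogonal to $f_0,f_1$ is already captured by $f_1$ alone). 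Similarly $\widetilde g=(G,0,0,0)^{\mathsf T}$ is immediate, and $\widetilde{\dot u}=(0,0,|\dot u|,0)^{\mathsf T}$ follows from \eqref{Prop1} (which gives $(\dot u,g)=(\dot u,u)=(\dot u,Au)=0$, hence $\dot u\perp f_0,f_1$, and also $\dot u\perp f_3$ since $f_3\in\mathrm{span}\{f_0,f_1,Au\}$).

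Next I would handle $\widetilde{Au}$. The first two entries come from the explicit formula for $f_3$: by construction $(Au,f_0)=P/G$ using \eqref{relation_2}, and $(Au,f_1)$ equals $(Au, u-(u,f_0)f_0)/|u-(u,f_0)f_0| = (E - (P/G)(E/G))/(e-E^2/G^2)^{1/2}$, again using \eqref{relation_1}--\eqref{relation_2}. The third entry $(Au,f_2)=(Au,\dot u)/|\dot u|=0$ by \eqref{Prop1}. The fourth entry is then forced: $(Au,f_3)$ is the norm of the Gram--Schmidt remainder $Au-(Au,f_0)f_0-(Au,f_1)f_1$, which by the Pythagorean theorem equals $\bigl(|Au|^2-(Au,f_0)^2-(Au,f_1)^2\bigr)^{1/2}=\bigl(P-P^2/G^2-(E-EP/G^2)^2/(e-E^2/G^2)\bigr)^{1/2}$.

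Finally, $\widetilde{B(u,u)}$ is obtained with no extra work from the functional form \eqref{functional_form}: $B(u,u)=g-Au-\dot u$, so $\widetilde{B(u,u)}=\widetilde g-\widetilde{Au}-\widetilde{\dot u}$, and subtracting the three columns already computed yields precisely the stated vector (the first entry $G-P/G$, the second $-(E-EP/G^2)/(e-E^2/G^2)^{1/2}$, the third $-|\dot u|$, the fourth $-(P-P^2/G^2-(E-EP/G^2)^2/(e-E^2/G^2))^{1/2}$).

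There is essentially no obstacle here; the only point requiring a word of care is that the denominators $(e-E^2/G^2)^{1/2}$ and $(P-P^2/G^2-(E-EP/G^2)^2/(e-E^2/G^2))^{1/2}$ are nonzero, i.e.\ that $f_1$ and $f_3$ are genuinely defined. Nonvanishing of the first follows from $E^2<eG^2$ in \eqref{relation_for_constants}; nonvanishing of the second amounts to saying $Au$ is not in $\mathrm{span}\{g,u\}$, which must either be invoked from the standing hypotheses on ghost solutions (e.g.\ via Proposition~\ref{prop_1}, noting that if $Au$ lay in $\mathrm{span}\{g,u\}$ one could derive that $u$ is an eigenvector of $A$, contradicting that proposition) or simply assumed as the regime in which this reference frame is constructed. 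I would state this genericity assumption explicitly at the start of the subsection and then the lemma is just the bookkeeping above.
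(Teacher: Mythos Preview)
Your proposal is correct and follows exactly the route the paper indicates: the paper's own ``proof'' is the single sentence ``A direct calculation using Proposition~\ref{general_relation} gives the following,'' and you have supplied precisely that calculation, coordinate by coordinate, invoking \eqref{relation_1}, \eqref{relation_2}, Proposition~\ref{general_relation}\eqref{Prop1}, and the NSE identity $B(u,u)=g-Au-\dot u$. Your remark on well-definedness of $f_3$ is also apt; the paper defers that issue to Proposition~\ref{degeneracy}, where it is shown that $Au\in\mathrm{span}\{g,u\}$ forces $u=u_*$.
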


Our first step is to find a matrix, denoted by $\tilde{A}$, which is symmetric and positive definite (SPD), such that 
\begin{align*}
\widetilde{Au}=\tilde{A}\tilde{u}.
\end{align*}
This means that the matrix $\tilde{A}$ behaves similarly as the Stokes operator $A$ in finite dimensions.

To get a desirable SPD matrix, we have to solve a system of equations. This matrix is not unique; one possible choice is the following.
\begin{lemma}
\label{matrix for A}
The following SPD matrix
\begin{align*}
\tilde{A}=\begin{pmatrix}
\lambda & 0 & 0 &0 \\
0 & a & 0 &c \\
0 & 0 &1 & 0 \\
0 & c &0 & b
\end{pmatrix}
\end{align*}
satisfies 
\begin{align*}
\widetilde{Au}=\tilde{A}\tilde{u},
\end{align*}
where 
\begin{align*}
a:=\frac{(Au,f_1)}{(u,f_1)}=\frac{E-EP/G^2}{e-E^2/G^2};
\end{align*}
\begin{align*}
c:=\frac{(Au,f_3)}{(u,f_1)}=\frac{\left(P-P^2/G^2-\frac{(E-EP/G^2)^2}{e-E^2/G^2}\right)^{1/2}}{\left(e-E^2/G^2\right)^{1/2}};
\end{align*}
and $b$ is chosen such that 
\begin{align*}
ab-c^2>0,
\end{align*}
to guarantee the positivity of the matrix $\tilde{A}$.
\end{lemma}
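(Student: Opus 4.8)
The plan is a direct finite‑dimensional verification in two steps: check $\widetilde{Au}=\tilde A\tilde u$ entrywise, then read off SPD from the block structure of $\tilde A$. For the first step I would use the explicit $\tilde u=(E/G,\,(e-E^2/G^2)^{1/2},\,0,\,0)^{\mathsf T}$ from Lemma~\ref{coefficient_1}; since the last two entries of $\tilde u$ vanish, only the first two columns of $\tilde A$ enter, so $\tilde A\tilde u=\bigl(\lambda E/G,\ a(e-E^2/G^2)^{1/2},\ 0,\ c(e-E^2/G^2)^{1/2}\bigr)^{\mathsf T}$. Comparing with $\widetilde{Au}$ of Lemma~\ref{coefficient_1}: the second and fourth entries agree precisely because $a$ and $c$ are defined as $(Au,f_1)/(u,f_1)$ and $(Au,f_3)/(u,f_1)$ with $(u,f_1)=(e-E^2/G^2)^{1/2}$; the third entries are both $0$; and the only nonautomatic identity is in the first entry, where $\lambda E/G=P/G$ is exactly the ghost relation $P=\lambda E$ recorded in (\ref{relation_2}). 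This already yields $\tilde A\tilde u=\widetilde{Au}$.

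For the second step, symmetry is visible from the displayed matrix. For positive definiteness I would note that permuting indices by $(1,2,3,4)\mapsto(1,3,2,4)$ conjugates $\tilde A$ to the block‑diagonal matrix $\mathrm{diag}\bigl(\lambda,\,1,\,M\bigr)$ with $M=\left(\begin{smallmatrix} a & c\\ c & b\end{smallmatrix}\right)$, so $\tilde A$ is positive definite iff $\lambda>0$, $a>0$, and $\det M=ab-c^2>0$. Here $\lambda>1>0$ by (\ref{assump_g}); rewriting $a=E(G^2-P)/(eG^2-E^2)$ shows $a>0$, since $E>0$, $G^2-P>0$ by Lemma~\ref{P_and_G^2}, and $eG^2-E^2>0$ by (\ref{relation_for_constants}) (equivalently Theorem~\ref{comparison_relation}); and given $a>0$, every $b>c^2/a$ makes $ab-c^2>0$. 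This both shows such a $b$ exists and accounts for the asserted non‑uniqueness: neither $b$ nor the $(3,3)$ entry is pinned down by $\tilde A\tilde u=\widetilde{Au}$.

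The matrix manipulations above are routine, so the point I would actually dwell on is that the whole construction presupposes that $f_0,f_1,f_2,f_3$ form a genuine orthonormal system, i.e. that the denominators in Lemma~\ref{coefficient_1} do not vanish. That $\{f_0,f_1,f_2\}$ is orthonormal is automatic for ghost solutions: $(\dot u,g)=(\dot u,u)=(\dot u,Au)=0$ by Proposition~\ref{general_relation}(\ref{Prop1}), and $e-E^2/G^2>0$ is the strict inequality $E^2<eG^2$ of (\ref{relation_for_constants}). What must be assumed is that $Au\notin\mathrm{span}\{g,u\}$, so that the Gram--Schmidt squared norm $P-P^2/G^2-(E-EP/G^2)^2/(e-E^2/G^2)$ — always a well‑defined nonnegative real, so that $c\in\mathbb{R}$ in any case — is strictly positive; this is the nondegeneracy condition under which the reference frame (and hence this lemma) operates, and it is exactly the ``loss of stability of the frame'' that the introduction warns about in limiting regimes. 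I expect this nondegeneracy, rather than the verification, to be the substantive content.
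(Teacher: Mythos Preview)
Your proposal is correct and is exactly the direct verification the paper has in mind; the paper in fact offers no proof beyond the remark that one ``solve[s] a system of equations,'' and your entrywise check together with the block-diagonal SPD argument is the natural execution of that. Your side observation about the nondegeneracy $Au\notin\mathrm{span}\{g,u\}$ is apt and is precisely what the paper later establishes in Proposition~\ref{degeneracy}.
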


\begin{remark}
We could choose the entry $b$ in the matrix $\tilde{A}$ to be of the form \begin{align*}
b=s\frac{c^2}{a}=s\frac{(Au,f_3)^2}{(Au,f_1)(u,f_1)},
\end{align*}
for any $s \in \mathbb{R}$, with $s>1$. We do not, however, fix the value for this entry, so that we have flexibility to change this if necessary.
\end{remark}

The next step is to find a tensor $\tilde{B}$ to represent the nonlinear term $B(\cdot,\cdot)$ of the NSE in a similar manner; 
namely, we might want it to satisfy 
\begin{align}
\label{tilde_B_1}
\widetilde{B(u,u)}=\tilde{B}(\tilde{u},\tilde{u}),\forall u \in H,
\end{align}
\begin{align}
\label{tilde_B_2}
(\tilde{B}(u,v),w)+(\tilde{B}(u,w),v)=0,\forall u,v,w \in \mathbb{R}^4,
\end{align}
and the analogue of (\ref{enst_inva})
\begin{align}
\label{tilde_B_3}
(\tilde{B}(\tilde{A}v,v),u)-(\tilde{B}(u,v),\tilde{A}v)=0,\forall u,v,w \in \mathbb{R}^4,
\end{align}
notice that the inner product here is taken in $\mathbb{R}^4$.

Recall that the tensor $\tilde{B}$ here is such that 
\begin{align}
\label{B_uv}
\tilde{B}(u,v)=\begin{pmatrix}
\sum_{j,k=0}^{3} B_{jk}^0 u_j v_k \\
\sum_{j,k=0}^{3} B_{jk}^1 u_j v_k \\
\sum_{j,k=0}^{3} B_{jk}^2 u_j v_k \\
\sum_{j,k=0}^{3} B_{jk}^3 u_j v_k 
\end{pmatrix}.
\end{align}
To find all the coefficients $B_{jk}^h$, for $j,k,h=0,\cdots, 3$, we also need to solve a system of equations (\ref{tilde_B_1}), (\ref{tilde_B_2}) and (\ref{tilde_B_3}); however, this system of equations does not possess compatible solutions. So, instead, we look for a tensor $\tilde{B}$ such that the following weaker conditions are met; 
namely,
\begin{equation}
\label{B_1}
\widetilde{B(u,u)}=\tilde{B}(\tilde{u},\tilde{u}),\forall u \in H,
\end{equation}
\begin{equation}
\label{B_2}
0=(\tilde{B}(\tilde{u},v),w)+(\tilde{B}(\tilde{u},w),v),\forall u \in H, \forall v,w \in \mathbb{R}^4,
\end{equation}
\begin{equation}
\label{B_3}
0=(\tilde{B}(\tilde{A}v,v),w)-(\tilde{B}(w,v),\tilde{A}v), \forall v,w \in \mathbb{R}^4.
\end{equation}

To simplify our notation, we denote
\begin{align*}
\tilde{u}&=\begin{pmatrix}
\eta_0 \\
\eta_1\\
\eta_2\\
\eta_3
\end{pmatrix}=\begin{pmatrix}
\eta_0\\
\eta_1\\
0\\
0
\end{pmatrix},
\end{align*}
and 
\begin{align*}
\widetilde{B(u,u)}=\begin{pmatrix}
\beta_0\\
\beta_1\\
\beta_2\\
\beta_3
\end{pmatrix}.
\end{align*}
Writing down these three equations (\ref{B_1}), (\ref{B_2}) and (\ref{B_3}), we get the following equations for the entries $B^h_{jk}$, for $h,j,k=0,\cdots, 3$.
\begin{proposition}
The $B^h_{jk}$'s, for $h,j,k=0,\cdots, 3$, have to satisfy the following constraints,
\begin{align}
\label{B_all_1}
B^h_{00}\eta_0^2+B^h_{01}\eta_0\eta_1+B^h_{10}\eta_0\eta_1+B^h_{11}\eta_1^2=\beta_h, \text{ for all } h=0,\cdots,3,
\end{align}
\begin{align}
\label{B_all_2}
(B^h_{0k}+B^k_{0h})\eta_0+(B^h_{1k}+B^k_{1h})\eta_1=0, \text{ for all } h,k=0,\cdots, 3,
\end{align}
\begin{align}
\label{B_all_3}
B^h_{jk}=B^j_{hk},\text{ for all } h,j,k=0,\cdots,3.
\end{align}
\end{proposition}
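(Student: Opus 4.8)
The plan is to turn each of the three abstract requirements (\ref{B_1}), (\ref{B_2}), (\ref{B_3}) into scalar identities for the unknowns $B^h_{jk}$ by substituting the coordinate formula (\ref{B_uv}) for the tensor and exploiting the very special shape of the vectors that appear. The starting point is that, by the construction of the frame and by Lemma \ref{coefficient_1}, the coordinate vector of $u$ is $\tilde u=(\eta_0,\eta_1,0,0)^{T}$, with only its first two entries nonzero, while $\widetilde{B(u,u)}=(\beta_0,\beta_1,\beta_2,\beta_3)^{T}$. Plugging this into (\ref{B_1}) and reading off the $h$-th entry of $\tilde B(\tilde u,\tilde u)$ from (\ref{B_uv}), the double sum $\sum_{j,k=0}^{3}B^h_{jk}\eta_j\eta_k$ collapses to the four surviving terms with $j,k\in\{0,1\}$, which is exactly (\ref{B_all_1}).

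For (\ref{B_2}) I would use that both terms are bilinear in $(v,w)$, so it is enough to test the relation on the standard basis vectors $e_0,\dots,e_3$ of $\mathbb{R}^4$. With $v=e_k$ and $w=e_h$ the pairing $(\tilde B(\tilde u,e_k),e_h)$ equals the $h$-th component of $\tilde B(\tilde u,e_k)$, namely $\sum_{j}B^h_{jk}(\tilde u)_j=B^h_{0k}\eta_0+B^h_{1k}\eta_1$ (again the vanishing of $\eta_2,\eta_3$ is used); the second summand of (\ref{B_2}) contributes the same expression with $h$ and $k$ interchanged, and setting the total to zero yields (\ref{B_all_2}) for all $h,k$. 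Conversely (\ref{B_all_2}) together with bilinearity recovers (\ref{B_2}), so these are equivalent.

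For (\ref{B_3}) I would again reduce to basis vectors but now carry the matrix $\tilde A$ through. Writing $(\tilde A v)_j=\sum_p\tilde A_{jp}v_p$ and expanding both pairings with (\ref{B_uv}), collecting the coefficient of $w_m$ and relabelling the summation index turns (\ref{B_3}) into the statement that, for every $m$, the quadratic form $v\mapsto\sum_{j,k,p}\bigl(B^m_{jk}-B^j_{mk}\bigr)\tilde A_{jp}\,v_pv_k$ vanishes identically on $\mathbb{R}^4$; equivalently its symmetric part in $(p,k)$ is zero. Here I would bring in that $\tilde A$ is symmetric positive definite, hence invertible, and has the sparse block form of Lemma \ref{matrix for A}; combining this with the built‑in antisymmetry $B^m_{jk}-B^j_{mk}=-(B^j_{mk}-B^m_{jk})$, and if necessary with (\ref{B_all_2}), one solves the resulting homogeneous linear system and concludes $B^h_{jk}=B^j_{hk}$ for all $h,j,k$, which is (\ref{B_all_3}). (Consistently, once (\ref{B_all_3}) holds the differences $B^m_{jk}-B^j_{mk}$ vanish and (\ref{B_3}) becomes automatic, which matches the observation that the stronger system (\ref{tilde_B_1})--(\ref{tilde_B_3}) has no common solution whereas the present weaker one does.)

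The step I expect to be the main obstacle is precisely the extraction of the clean symmetry (\ref{B_all_3}) from (\ref{B_3}): unlike (\ref{B_1}) and (\ref{B_2}), which unpack essentially verbatim, condition (\ref{B_3}) only delivers $\tilde A$-weighted combinations of the $B^h_{jk}$, so disentangling them genuinely uses the positivity and the explicit entries of $\tilde A$. Care is also needed in bookkeeping: one must check which index triples $(h,j,k)$ each specialization $v=e_k$, $v=e_k+e_l$, $w=e_h$ actually constrains, so as to be sure that the whole family (\ref{B_all_3}) is obtained rather than only a diagonal subset of it.
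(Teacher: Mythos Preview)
Your derivations of (\ref{B_all_1}) and (\ref{B_all_2}) are correct and coincide with the paper's: both simply expand the tensor formula (\ref{B_uv}), use $\eta_2=\eta_3=0$, and read off coefficients. The paper then dispatches (\ref{B_all_3}) in a single word, ``similarly'', as if (\ref{B_3}) unpacked by the same relabelling trick.

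You rightly sense that this last step is not parallel to the others, but your plan to extract (\ref{B_all_3}) as a \emph{necessary} consequence of (\ref{B_3}) via the SPD structure of $\tilde A$ will not go through. Expanding (\ref{B_3}) with $w=e_h$ gives, for each $h$,
\[
\sum_{j,k}\bigl(B^h_{jk}-B^j_{hk}\bigr)(\tilde Av)_j\,v_k=0\quad\text{for all }v\in\mathbb{R}^4,
\]
i.e.\ the matrix $\tilde A D^h$ is skew, where $(D^h)_{jk}:=B^h_{jk}-B^j_{hk}$. Combine this with the tautological antisymmetry $D^h_{jk}=-D^j_{hk}$ and work in an eigenbasis of $\tilde A$: one finds that all repeated-index components of $D$ vanish, but for each unordered triple $\{h,j,k\}\subset\{0,1,2,3\}$ there remains one free parameter, so the solution space for $D$ is four-dimensional, not zero. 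Bringing in (\ref{B_all_2}) does not help, since that relation constrains the combinations $B^h_{jk}+B^k_{jh}$ (symmetric in the outer indices) rather than $D$. Thus (\ref{B_all_3}) is strictly stronger than (\ref{B_3}); it is a convenient \emph{sufficient} condition the authors impose in order to obtain a tractable algebraic system (and then solve in Lemma~\ref{sol_for_B}), not a consequence one can derive. The paper's ``similarly'' and the phrase ``have to satisfy'' obscure this; the accurate statement is that (\ref{B_all_1})--(\ref{B_all_2}) are equivalent to (\ref{B_1})--(\ref{B_2}), while (\ref{B_all_3}) merely implies (\ref{B_3}).
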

\begin{proof}
(\ref{B_all_1}) follows from the definition of tensor (\ref{B_uv}) and the condition (\ref{B_1}). From the definition of inner product in $\mathbb{R}^4$,  (\ref{B_2}) can be written as,
\begin{align*}
0&=(\tilde{B}(\tilde{u},v),w)+(\tilde{B}(\tilde{u},w),v)\\
&=\sum_{h,j,k}(B^h_{jk}\eta_j v_k w_h)+\sum_{h,j,k}(B^h_{jk}\eta_j w_k v_h)\\
&=\sum_{h,j,k}(B^h_{jk}\eta_j v_k w_h+B^k_{jh}\eta_j w_h v_k),
\end{align*}
which gives (\ref{B_all_2}); similarly, (\ref{B_3}) will give (\ref{B_all_3}).

\end{proof}

One solution for the above system of equations (\ref{B_all_1})-(\ref{B_all_3}) is given in the following lemma.

\begin{lemma}
\label{sol_for_B}
If we let the entries for the tensor $\tilde{B}$ be of the form
\begin{align*}
\begin{pmatrix}
B^0_{00}=0 & B^0_{10}=0 & B^0_{01}=\frac{\beta_0}{\eta_0\eta_1} & B^0_{11}=0 \\
B^1_{00}=0 & B^1_{10}=\frac{\beta_1}{\eta_0\eta_1} &B^1_{01}=0 & B^1_{11}=0 \\
B^2_{00}=\frac{\beta_2}{{\eta_0}^2}&B^2_{10}=0 & B^2_{01}=0 &B^2_{11}=0 \\
B^3_{00}=\frac{\beta_3}{{\eta_0}^2} & B^3_{10}=0&B^3_{01}=0 &B^3_{11}=0
\end{pmatrix}
\end{align*}
along with the following relations
\begin{align*}
B^0_{20}=B^2_{00}=-B^0_{02},
\end{align*}
\begin{align*}
B^0_{30}=B^3_{00}=-B^0_{03};
\end{align*}
and the other entries are free, hence we can make them be zeros;  then such $\tilde{B}$ satisfies the relations (\ref{B_all_1})-(\ref{B_all_3}).
\end{lemma}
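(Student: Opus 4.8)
The plan is to verify directly that the array of entries displayed in the statement satisfies each of the three families of constraints \eqref{B_all_1}, \eqref{B_all_2}, \eqref{B_all_3}; this is a bookkeeping check, not an existence argument. Before the bookkeeping, two facts are worth isolating. First, since $u(\cdot)$ is a nonstationary ghost solution we have $E>0$ and, by \eqref{relation_for_constants}, $E^2<eG^2$; hence $\eta_0=(u,f_0)=E/G>0$ and $\eta_1=(u,f_1)=(e-E^2/G^2)^{1/2}>0$, so every division by $\eta_0$ or $\eta_1$ appearing in the formulas for the $B^h_{jk}$ is legitimate. Second, $(B(u,u),u)=0$ by \eqref{B_u_v_v}, and expanding this in the reference frame of Lemma \ref{coefficient_1} (where $\tilde u=(\eta_0,\eta_1,0,0)^{T}$ and $\widetilde{B(u,u)}=(\beta_0,\beta_1,\beta_2,\beta_3)^{T}$) yields the identity $\beta_0\eta_0+\beta_1\eta_1=0$, which one may alternatively confirm by hand from the explicit values of $\beta_0,\beta_1,\eta_0,\eta_1$ in Lemma \ref{coefficient_1}.

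The verification of \eqref{B_all_1} is immediate: for $h=0$ only the term $B^0_{01}\eta_0\eta_1=\beta_0$ survives, for $h=1$ only $B^1_{10}\eta_0\eta_1=\beta_1$, and for $h=2,3$ only $B^h_{00}\eta_0^2=\beta_h$, each of which holds by the definition of the corresponding entry. For \eqref{B_all_3}, observe that the relation $B^h_{jk}=B^j_{hk}$ only couples an entry to its transpose in the pair (upper index, first lower index); the nonzero entries $B^0_{01}$, $B^1_{10}$, $B^0_{02}$, $B^0_{03}$ are transposition-invariant, $B^2_{00}$ is paired with $B^0_{20}$, and $B^3_{00}$ with $B^0_{30}$ — which is precisely why the relations $B^0_{20}=B^2_{00}$ and $B^0_{30}=B^3_{00}$ are imposed — and the set of nonzero entries being closed under this transposition, the transpose of every remaining (zero) entry is again a zero entry, so \eqref{B_all_3} holds throughout. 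Checking that declaring all the unlisted entries to be zero is globally consistent with this transposition symmetry is the one place where a moment's care is needed.

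The substantive family is \eqref{B_all_2}. Tracking which entries are nonzero, one sees that for a given pair $(h,k)$ the left-hand side of \eqref{B_all_2} vanishes term by term unless $(h,k)$ lies in $\{(0,1),(1,0),(0,2),(2,0),(0,3),(3,0)\}$. For $(h,k)=(0,2)$ and $(0,3)$ (and their transposes, which give the same equation) the left-hand side collapses to $(B^0_{0k}+B^k_{00})\eta_0$, which vanishes exactly because we imposed $B^0_{02}=-B^2_{00}$ and $B^0_{03}=-B^3_{00}$. For $(h,k)=(0,1)$ (same as $(1,0)$) it collapses to $B^0_{01}\eta_0+B^1_{10}\eta_1=(\beta_0\eta_0+\beta_1\eta_1)/(\eta_0\eta_1)$, which vanishes by the orthogonality identity recorded above. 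This exhausts all cases and completes the verification.

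As this is a pure verification, there is no genuine obstacle. The only non-mechanical insight is recognizing in advance that the constraint \eqref{B_all_2} at $(h,k)=(0,1)$ forces the compatibility $\beta_0\eta_0+\beta_1\eta_1=0$ — supplied, for free, by the orthogonality $(B(u,u),u)=0$ — so that the two off-diagonal entries $B^0_{01}$ and $B^1_{10}$ can be taken to be $\beta_0/(\eta_0\eta_1)$ and $\beta_1/(\eta_0\eta_1)$ simultaneously, and that this is what makes the choice of the remaining free entries (all zero) globally consistent.
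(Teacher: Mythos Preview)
Your verification is correct and complete. The paper itself offers no proof of this lemma, treating it as a direct check left to the reader; your proposal supplies precisely that check, and the one nontrivial ingredient you identify --- the identity $\beta_0\eta_0+\beta_1\eta_1=0$ coming from $(B(u,u),u)=0$ --- is indeed what makes the $(h,k)=(0,1)$ case of \eqref{B_all_2} go through.
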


\subsection{A geometric structure shared by ghost solutions}
We can also describe a geometric structure for the ghost solutions using the reference frame defined in the previous section. 
\begin{theorem}
The reference frame is rigidly carried by the ghost solution.
\end{theorem}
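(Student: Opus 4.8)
The plan is to reduce the statement to the time-independence of a small set of scalars. Since $u(\cdot)$ is a ghost solution, $e=|u(t)|^{2}$ and $E=|A^{1/2}u(t)|^{2}$ are constant in $t$, and then $(\ref{relation_1})$ and $(\ref{relation_2})$ force $(g,u)=E$, $P=|Au|^{2}=\lambda E$, and trivially $G^{2}=|g|^{2}$, to be constant as well. Every coefficient in the Gram--Schmidt formulas for $f_0,f_1,f_3$ of Section~\ref{subsec_reference} is an algebraic function of $e,E,P,G$ only, hence constant; in particular $f_0=g/G$ is literally fixed, and $f_1$, $f_3$ are linear combinations of $g,u,Au$ with time-independent coefficients. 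The denominators occurring there, $e-E^{2}/G^{2}$ and $P-P^{2}/G^{2}-(E-EP/G^{2})^{2}/(e-E^{2}/G^{2})$, are strictly positive: the first because $E^{2}<eG^{2}$ by $(\ref{relation_for_constants})$ (strict since $u(\cdot)\neq u_{*}$), the second by the non-degeneracy condition $Au\notin\operatorname{span}\{g,u\}$, whose failure is exactly the collapse to a three-dimensional frame mentioned in the Introduction. So the frame is well defined for all $t$.

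Next I would differentiate the frame and read off the projected data. By part~(a) of Proposition~\ref{general_relation}, $\dot u$ is orthogonal to $g$, $u$ and $Au$, hence to $f_0,f_1,f_3$; therefore $\dot f_0=0$ and
\begin{align*}
\dot f_1=\frac{\dot u}{(e-E^{2}/G^{2})^{1/2}}=\frac{|\dot u|}{(e-E^{2}/G^{2})^{1/2}}\,f_2 ,
\end{align*}
so the pair $\{f_0,f_1\}$ turns rigidly, with $f_0$ fixed and $f_1$ rotating toward $f_2$ at angular speed $|\dot u|/(e-E^{2}/G^{2})^{1/2}$. Substituting $e,E,P,G=\mathrm{const}$ into Lemma~\ref{coefficient_1} then shows that the coordinate columns $\tilde u$, $\widetilde{Au}$ and $\tilde g$ are constant vectors, while $\tilde{\dot u}$ and $\widetilde{B(u,u)}$ have constant entries except in the $f_2$-slot, which equals $\pm|\dot u|$; by part~(e) of Proposition~\ref{general_relation}, these two exceptional entries (equivalently, $|B(u,u)|$) move together. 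The same substitution in Lemma~\ref{matrix for A} makes $\tilde A$ (entries $\lambda,a,b,c$, with $b$ any constant satisfying $ab>c^{2}$) time-independent, and in Lemma~\ref{sol_for_B} makes every nonzero entry of the tensor $\tilde B$ an algebraic function of $\eta_0,\eta_1,\beta_0,\dots,\beta_3$, hence constant, apart from those carrying the factor $\beta_2=-|\dot u|$. Thus the position of $u$ and the entire representation of the NSE in the moving frame are frozen, apart from the single $f_2$-directed quantity $|\dot u|$ --- which is what ``rigidly carried by the ghost solution'' means here.

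The delicate point, and the main obstacle, is the behaviour along $f_2$ and $f_3$. In contrast with $\dot f_0$ and $\dot f_1$, the derivatives $\dot f_2$ and $\dot f_3$ involve $\ddot u=-A\dot u-B(\dot u,u)-B(u,\dot u)$ and $A\dot u=\lambda g-A^{2}u-AB(u,u)$, both of which bring in $A^{2}u$ and therefore need not lie in $\operatorname{span}\{f_0,f_1,f_2,f_3\}$; correspondingly, $|\dot u|$ is not known to be constant. Hence ``rigidly carried'' must be read as constancy of the projected data of Lemmas~\ref{coefficient_1}--\ref{sol_for_B}, not as invariance of the four-plane under the frame's own flow, and in the write-up one must be careful to localize the genuinely $t$-dependent piece in the direction $f_2$ alone. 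This is precisely the loss of stability flagged in the Introduction when a ghost solution approaches a stationary point, and it is what motivates the second, more robust reference frame introduced later.
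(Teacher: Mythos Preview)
Your argument is correct and in fact subsumes the paper's own proof, which is a two-sentence sketch: since $|u|$ and $(u,g)=E$ are constant, $u(t)$ moves on a cone about $f_0$, and since the inner products $(u,f_j)$ listed in Lemma~\ref{coefficient_1} are constants built from $e,E,P,G$, the angles between $u$ and the frame vectors never change. You recover this and go considerably further---differentiating the frame to get $\dot f_1=|\dot u|\,(e-E^2/G^2)^{-1/2}f_2$, verifying that $\tilde A$ and the tensor $\tilde B$ of Lemmas~\ref{matrix for A}--\ref{sol_for_B} are time-independent, and explicitly isolating $\pm|\dot u|$ in the $f_2$-slot as the sole possibly non-constant coordinate. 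That last caveat is exactly the ``two possible exceptions'' flagged in the Introduction; the paper's terse proof leaves it implicit, so your more careful reading of what ``rigidly carried'' can and cannot mean is a genuine improvement in clarity. One small point: the non-degeneracy $Au\notin\operatorname{span}\{g,u\}$ that you invoke for the $f_3$-denominator is actually proved later, in Proposition~\ref{degeneracy}, rather than in the Introduction; you may want to cite that directly.
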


\begin{proof}
For a given ghost solution $u(t)$, with initial value $u(0)=u_0$, without loss of generality, we assume that the vectors $f_0, f_1, f_2, f_3$ are moved to have the same starting point $u_0$.

Notice that $f_0=g/G$ is time independent, so its length and direction will never change in time. As $t$ varies, since $|u(t)|$ and $(u(t),g)=E$ are both time independent, $u(t)$ can only move on a cone. However, $(u,f_1)=(e-\frac{E^2}{G^2})^{1/2}$, $(u,f_2)=0$ and $(u,f_3)=0$, so the angles between $u$ and the $f_j$'s, $j=0,\cdots, 3$ will not change in time and hence the reference frame is rigidly carried by the ghost solution.
\end{proof}

\subsection{Unitary transformation between two ghost solutions}
If there existed two ghost solutions $u(t)$ and $v(t)$, having the same energy $e$ and enstrophy $E$, one may ask: is there any relation between $u(t)$ and $v(t)$? Or, is there an operation that transforms one ghost solution to the other? 

To state our result we need some notation. Let $P(t)$ be the orthogonal projection of the phase space $H$ onto the space $H_{0123}(t):=span\{f_0,f_1(t),f_2(t),f_3(t)\}$, where $f_j,j=0,\cdots, 3$, are given in Section \ref{subsec_reference}. Similarly, if we replace $u(t)$ by $v(t)$ in the definitions of the $f_j$'s, we get another reference frame $\tilde{f}_j,j=0,\cdots, 3$, corresponding to the ghost solution $v(t)$, and let $\tilde{P}(t)$ be the orthogonal projection of the phase space $H$ onto the space $\tilde{H}_{0123}(t)=span\{\tilde{f}_0,\tilde{f}_1(t),\tilde{f}_2(t),\tilde{f}_3(t)\}$. 

Notice that if the linear operator $\zeta\bigotimes \eta: H\rightarrow H$ is defined to be such that $\zeta \bigotimes \eta(h)=\zeta (\eta,h)$, for $\zeta, \eta \in H$, then for any $t\in \mathbb{R}$, $P(t)h=\sum_{j=0}^{j=3}f_j(t)(f_j,h)=\sum_{j=0}^{j=3}f_j(t)\bigotimes f_j(t)(h)$, for $h\in H$, so the operator $P(t)$ can be expressed as $P(t)=\sum_{j=0}^{j=3}f_j(t)\bigotimes f_j(t)$. Similarly, $\tilde{P}(t)=\sum_{j=0}^{j=3}\tilde{f}_j(t)\bigotimes \tilde{f}_j(t)$. Also, for any $t\in \mathbb{R}$, define $W(t): H\rightarrow \tilde{H}_{0123}$ to be $W(t)=\sum_{j=0}^{j=3} \tilde{f}_j\bigotimes f_j$. Under the above notation, we remark that $W(t)P(t)=W(t)$, for all $t\in \mathbb{R}$.

Analyticity of the solutions in the global attractor \cite{Temam83} implies that, there exists $\delta>0$, such that
\begin{align}
\label{condition_for_unitary}
||P(t)-P(t_0)||_{op}<1, \text{ and }  ||\tilde{P}(t)-\tilde{P}(t_0)||_{op}<1,
\end{align}
for all $t_0, t\in \mathbb{R}$ satisfying $|t-t_0|<\delta$, where $||\cdot||_{op}$ is the operator norm in the phase space $H$.

Using the condition (\ref{condition_for_unitary}), we can then borrow the idea of the proof as given in section 110 in the book \cite{RFSN90} to show that there exists a unitary transformation between the reference frames correspoding to different ghost solutions. 

Indeed, for any $t_0 \in \mathbb{R}$ and $t \in (t_0-\delta,t_0+\delta)$, condition (\ref{condition_for_unitary}) guarantees the existence of $T(t)^{-1}$ and $T(t)^{-1/2}$ for the symmetric operator 
\begin{align*}
T(t):=I+(I-P(t_0))(P(t_0)-P(t))(I-P(t_0)),
\end{align*} and also the existence of $\tilde{T}(t)^{-1}$ and $\tilde{T}(t)^{-1/2}$ for the symmetric transformation $\tilde{T}(t)$, where
\begin{equation*} 
\tilde{T}(t):=I+(I-\tilde{P}(t_0))(\tilde{P}(t_0)-\tilde{P}(t))(I-\tilde{P}(t_0)).
\end{equation*} 

One can check that the operator $Z(t)=(I-P(t))T(t)^{-1/2}(I-P(t_0))$ with the adjoint $Z(t)^*=(I-P(t_0))T(t)^{-1/2}(I-P(t))$ satisfies 
\begin{align*}
Z(t)^*Z(t)=I-P(t_0), Z(t)Z(t)^*=I-P(t),
\end{align*} hence the operator 
\begin{align*}
Z(t): (I-P(t_0))H\rightarrow (I-P(t))H,
\end{align*}
is unitary. We could define in a similar way the unitary operator
\begin{equation*}
\tilde{Z}(t): (I-\tilde{P}(t_0))H\rightarrow (I-\tilde{P}(t))H.
\end{equation*}

To make the connection between the two reference frames corresponding to the two ghost solutions $u$ and $v$, we define the operator $W_{\perp}(t):(I-P(t))H\rightarrow (I-\tilde{P}(t))H$ as follows: 
\begin{align*}
W_{\perp}(t)=\tilde{Z}(t) W_{\perp}(t_0) Z(t)^{-1}: (I-P(t))H \overset{Z(t)^{-1}}{\rightarrow}(I-P(t_0))H\overset{W_{\perp}(t_0)}{\rightarrow}(I-\tilde{P}(t_0))H\overset{\tilde{Z}(t)}{\rightarrow}(I-\tilde{P}(t))H,
\end{align*}
where $W_{\perp}(t_0): (I-P(t_0))H\rightarrow (I-\tilde{P}(t_0))H$ is any chosen unitary operator between the two separable infinite dimensional Hilbert spaces $(I-P(t_0))H$ and $(I-\tilde{P}(t_0))H$. Finally, we define the unitary operator $U(t):H\rightarrow H$ to be of the form
\begin{align*}
U(t):=W(t)P(t)+W_{\perp}(t)(I-P(t)),
\end{align*}
which maps $H_{0123}(t)$ unitarily to $\tilde{H}_{0123}(t)$. Therefore, we obtain the following ``uniqueness" result.
\begin{theorem}
\label{uniqueness_for_ghost }
The ghost solutions are unique up to unitary transformations.
\end{theorem}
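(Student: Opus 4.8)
The statement to prove is Theorem 4.2 (labeled \texttt{uniqueness\_for\_ghost}): \emph{the ghost solutions are unique up to unitary transformations}. Given the preceding construction, this is really an assembly problem: almost all the work has been done in building $P(t)$, $\tilde P(t)$, $W(t)$, the partial isometries $Z(t),\tilde Z(t)$, $W_\perp(t)$, and the candidate operator $U(t)=W(t)P(t)+W_\perp(t)(I-P(t))$. What remains is to check that $U(t)$ really is unitary and really intertwines the two reference frames in the appropriate sense, and then to phrase the conclusion precisely.

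The plan is as follows. First I would fix $t_0\in\mathbb R$ and work on the interval $(t_0-\delta,t_0+\delta)$ on which \eqref{condition_for_unitary} holds, so that all the inverses and square roots invoked in the construction exist and are bounded; this is the regime in which $T(t)$ and $\tilde T(t)$ are invertible symmetric perturbations of the identity. Second, I would verify that $W(t)\colon H_{0123}(t_0)\to\tilde H_{0123}(t_0)$ — or more precisely $W(t)P(t)$ — is a partial isometry with initial space $H_{0123}(t)$ and final space $\tilde H_{0123}(t)$: since $\{f_j(t)\}_{j=0}^{3}$ and $\{\tilde f_j(t)\}_{j=0}^{3}$ are orthonormal systems by Lemma~\ref{coefficient_1} and the Gram--Schmidt definitions in Section~\ref{subsec_reference}, the operator $W(t)=\sum_{j=0}^{3}\tilde f_j(t)\otimes f_j(t)$ sends the orthonormal basis $\{f_j(t)\}$ of $H_{0123}(t)$ to the orthonormal basis $\{\tilde f_j(t)\}$ of $\tilde H_{0123}(t)$ and kills the orthogonal complement, which is exactly the content of the remark $W(t)P(t)=W(t)$. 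Third, I would recall (from the already-stated computations $Z(t)^*Z(t)=I-P(t_0)$, $Z(t)Z(t)^*=I-P(t)$, and the analogues for $\tilde Z$) that $W_\perp(t)=\tilde Z(t)W_\perp(t_0)Z(t)^{-1}$ is a unitary from $(I-P(t))H$ onto $(I-\tilde P(t))H$, being a composition of three unitaries between Hilbert spaces of the same (infinite) dimension.

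The decisive step is then the orthogonal-direct-sum argument: $H=H_{0123}(t)\oplus(I-P(t))H$ and $H=\tilde H_{0123}(t)\oplus(I-\tilde P(t))H$, and $U(t)$ is by definition the direct sum of the unitary $W(t)P(t)\colon H_{0123}(t)\to\tilde H_{0123}(t)$ and the unitary $W_\perp(t)(I-P(t))\colon (I-P(t))H\to(I-\tilde P(t))H$; hence $U(t)$ is unitary on $H$, and it maps $H_{0123}(t)$ onto $\tilde H_{0123}(t)$ carrying each $f_j(t)$ to $\tilde f_j(t)$. Translating back through Lemma~\ref{coefficient_1}: $U(t)$ sends $g/G=f_0$ to $\tilde f_0=g/G$, sends the component structure of $u(t)$ in its frame to the corresponding component structure of $v(t)$, and likewise for $Au$, $\dot u$, $B(u,u)$; in particular $U(t)u(t)=v(t)$ whenever $u,v$ have the same energy $e$ and enstrophy $E$, since both have identical coordinate vectors $(\tfrac EG,(e-\tfrac{E^2}{G^2})^{1/2},0,0)$ in their respective frames. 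Finally, to extend from the local interval $(t_0-\delta,t_0+\delta)$ to all of $\mathbb R$ I would patch: cover $\mathbb R$ by overlapping intervals of length $<\delta$, construct $U$ on each, and note that on overlaps any two such operators agree on the four-dimensional frame part and differ on the complement only by the arbitrary choice of $W_\perp(t_0)$, which is immaterial to the statement; alternatively one simply records that for each individual $t$ there exists a unitary $U$ with $Uu(t)=v(t)$ and $U$ carrying the frame, which is all ``unique up to unitary transformations'' requires.

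I expect the main obstacle to be precisely this globalization/consistency issue rather than the local algebra: the construction of $W_\perp$ depends on the choice of base point $t_0$ and of the seed unitary $W_\perp(t_0)$ on the infinite-dimensional complement, so one must be careful to state the theorem as an existence statement (for every pair of ghost solutions with the same $e,E$, and for every $t$, a unitary intertwiner exists) rather than claim a canonical global $U(t)$. A secondary subtlety is ensuring that the nondegeneracy needed for the Gram--Schmidt vectors $f_1,f_2,f_3$ to be well defined holds along the whole trajectory — but this is guaranteed by the strict inequalities \eqref{relation_for_constants} together with $|\dot u|>0$ (nonstationarity) and Proposition~\ref{prop_1}, so no new work is needed there. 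Everything else is bookkeeping with partial isometries of the standard Sz.-Nagy--Riesz type referenced from \cite{RFSN90}.
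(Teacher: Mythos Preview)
Your proposal is correct and follows essentially the same approach as the paper: the paper's proof \emph{is} the construction preceding the theorem statement (the $P$, $\tilde P$, $W$, $Z$, $\tilde Z$, $W_\perp$, $U$ machinery borrowed from \cite{RFSN90}), and you have accurately filled in the verification that $U(t)$ is unitary via the orthogonal direct-sum argument and that $U(t)u(t)=v(t)$ by matching the coordinate vectors from Lemma~\ref{coefficient_1}. Your discussion of the globalization subtlety and the correct reading of the theorem as a pointwise existence statement is in fact slightly more careful than the paper itself, which leaves that point implicit; the one place where both you and the paper are a little quick is the assertion that $|\dot u(t)|>0$ for every $t$ (nonstationarity alone gives only that $\dot u$ is not identically zero), but this is the paper's gap as much as yours.
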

\begin{remark}
Notice, however, that if we have a sequence of ghost solutions, then they may converge to a stationary solution, the reference frame given above might degenerate to 3 dimensions. In order to avoid this instability, we will introduce a new reference frame in the following section.
\end{remark}

\subsection{A New Reference Frame}

We introduce another set of orthonormal vectors; namely, we define
\begin{align*}
&f_0^{new}:=f_0=\frac{g}{G},\\
&f_1^{new}:=f_1=\frac{u-\frac{E}{G}f_0^{new}}{(e-E^2/G^2)^{1/2}},\\
&f_2^{new}:=f_3
=\frac{Au-\frac{P}{G}f_0^{new}-\left(\frac{E-\frac{EP}{G^2}}{(e-\frac{E^2}{G^2})^{1/2}}\right)f_1^{new}}{\left(P-\frac{P^2}{G^2}-(\frac{E-\frac{EP}{G^2}}{(e-\frac{E^2}{G^2})^{1/2}})^2\right)^{1/2}},\\
&f_3^{new}:=\frac{A^{2}u(t)-\sum_{j=0}^{j=2}(A^{2}u(t),f_j^{new})f_j^{new}}{|A^{2}u(t)-\sum_{j=0}^{j=2}(A^{2}u(t),f_j^{new})f_j^{new}|}.
\end{align*}
We will concentrate on studying this reference frame in the following sections.
\begin{remark}
By similar arguments, we can also show that the new reference frame is rigidly carried by the solution and under this reference frame  there is an unitary transformation between any two ghost solutions.
\end{remark}


\section{a convergence property}
In this section, we study the limiting behavior of a sequence of ghost solutions. Let $GHO$ denote the set of all ghost solutions and denote
\begin{align*}
GHO(e,E)=\{u(\cdot) \in GHO: |u|^2\equiv e, |A^{1/2}u|^2 \equiv E\}.
\end{align*}
Clearly, 
\begin{align*}
GHO=\bigcup_{e, E}GHO(e,E),
\end{align*}
where "many" of the sets $GHO(e,E)=\varnothing$. 


In general, a sequence of ghost solutions can converge to either a fixed point or another ghost. 
Indeed, an easy application of the well known compactness arguments in $C(\mathbb{R},A)$ yields the following. 
\begin{lemma}
\label{conv_lemma}
For any sequence $\{u_n(\cdot)\}_{n=1}^{\infty}\subset GHO$, there exists a subsequence $\{u_{n_k}(\cdot)\}_{k=1}^{\infty}$ which converges in the $|A\cdot|$ norm on any bounded time interval to either

(i) a ghost solution in $GHO(e_{\infty},E_{\infty})$, where 
\begin{align*}
&e_{\infty}=\lim_{k \rightarrow \infty} |u_{n_k}(0)|^2=\lim_{k\rightarrow \infty} e_{n_k},\\
& E_{\infty}=\lim_{k \rightarrow \infty}|A^{1/2}u_{n_k}(0)|^2=\lim_{k\rightarrow \infty} E_{n_k},
\end{align*}
or,

(ii) to a stationary solution $u_{\infty}$ satisfying 
\begin{align*}
&e_{\infty}=|u_{\infty}|^2, \\
&E_{\infty}=|A^{1/2}u_{\infty}|^2.
\end{align*}
Moreover, 

(a) Case (ii) above occurs if and only if there exists $\{t_n\}_{n=1}^{\infty} \in \mathbb{R}$, such that,
\begin{align*}
\lim_{n\rightarrow \infty}|\dot{u}_n(t_n)|=0.
\end{align*}

(b) $P_{\infty}:=\lim_{k}P_{n_k}=\lim_{k\rightarrow \infty} |Au_{n_k}|^2$ exists. And, if $P_{\infty}=E^2_{\infty}/e_{\infty}$, then case (ii) occurs and 
\begin{align*}
u_0=g/\lambda.
\end{align*}

(c) For any given $\alpha>0$, if there exists $\{u_n\}_{n=1}^{\infty} \subset \bigcup_{(\lambda e-E)(E/e)\geq \alpha}GHO(e,E)$, then case (ii) occurs.
\end{lemma}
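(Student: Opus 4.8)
The plan is to run the standard Arzel\`a--Ascoli compactness argument for complete bounded trajectories of the NSE and then read items (a)--(c) off the structure of the limit object; I expect the compactness bookkeeping to be routine and the precise mechanism in part (c) to be the real obstacle.

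First I would recall the soft facts: every trajectory in $\mathcal A$ is uniformly bounded in each $D(A^{k})$ (indeed in a Gevrey class), so $\mathcal A$ is compact in $D(A)$, and the maps $u\mapsto\dot u=g-Au-B(u,u)$ and $u\mapsto A\dot u$ are bounded on $\mathcal A$ with values in $D(A^{1/2})$ and in $H$. Hence $\{u_n(\cdot)\}$ is uniformly bounded and equicontinuous as a family $\mathbb R\to D(A)$, and a diagonal extraction over $[-T,T]$, $T\in\mathbb N$, yields a subsequence $u_{n_k}\to u_\infty$ in $C([-T,T];D(A))$ for every $T$ (equivalently in the $|A\,\cdot\,|$ norm uniformly on bounded intervals); passing to the limit in the functional form, using $B(u_{n_k},u_{n_k})\to B(u_\infty,u_\infty)$, shows $u_\infty$ solves the NSE on all of $\mathbb R$ and stays in $\mathcal A$, i.e.\ is a complete bounded trajectory. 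After a further extraction (not relabeled) the bounded constants $e_{n_k}=|u_{n_k}|^2$, $E_{n_k}=\|u_{n_k}\|^2$ and $P_{n_k}=|Au_{n_k}|^2$ converge, which already gives the existence of $P_\infty$ in (b). Since each $u_{n_k}$ is a ghost, $|u_{n_k}(t)|^2\equiv e_{n_k}$, $\|u_{n_k}(t)\|^2\equiv E_{n_k}$, $|Au_{n_k}(t)|^2\equiv P_{n_k}$ for all $t$, and the $D(A)$-convergence then forces $|u_\infty(t)|^2\equiv e_\infty$, $\|u_\infty(t)\|^2\equiv E_\infty$, $|Au_\infty(t)|^2\equiv P_\infty$. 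Thus $\dot e_\infty\equiv\dot E_\infty\equiv 0$, so $u_\infty$ is either a ghost in $GHO(e_\infty,E_\infty)$ (case (i)) or a stationary solution with those energy and enstrophy values (case (ii)), with the stated formulas for $e_\infty,E_\infty$.

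For (a): if case (ii) holds then $\dot u_\infty\equiv 0$, and since $\dot u_{n_k}(0)=g-Au_{n_k}(0)-B(u_{n_k}(0),u_{n_k}(0))$ with $u_{n_k}(0)\to u_\infty(0)$ in $D(A)$, continuity of $B$ gives $|\dot u_{n_k}(0)|\to|\dot u_\infty(0)|=0$, so take $t_{n_k}=0$. Conversely, if $|\dot u_n(t_n)|\to 0$, then the NSE being autonomous and $e,E$ being $t$-independent, the translates $v_n(\cdot):=u_n(\cdot+t_n)$ again lie in $GHO(e_n,E_n)$; running the compactness argument on $\{v_n\}$ produces a subsequential limit $v_\infty$ with $|\dot v_\infty(0)|=\lim_k|\dot u_{n_k}(t_{n_k})|=0$, so $v_\infty(0)$ is an equilibrium and $v_\infty$ is constant, i.e.\ (after a harmless time-translation) case (ii) is realized. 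For (b): if $P_\infty=E_\infty^2/e_\infty$, then for every $t$ Cauchy--Schwarz gives $E_\infty=(u_\infty(t),Au_\infty(t))\le|u_\infty(t)|\,|Au_\infty(t)|=e_\infty^{1/2}P_\infty^{1/2}=E_\infty$, so equality holds and $Au_\infty(t)=\mu\,u_\infty(t)$ with $\mu=E_\infty/e_\infty$; thus $u_\infty(t)$ is an eigenvector of $A$ for every $t$, so by Proposition \ref{prop_1} it is not a ghost and case (ii) holds; and a stationary eigenvector $w$ of $A$ has $B(w,w)=0$, whence $Au_\infty=g$, $\mu u_\infty=g$, and since $Ag=\lambda g$ we get $\mu=\lambda$ and $u_\infty\equiv g/\lambda$, i.e.\ $u_0=g/\lambda$. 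For (c) the same device is meant to apply: using the proposition stated just after Proposition \ref{P_close_G^2} (that $|Au-\lambda u|^2=\lambda(\lambda e-E)$), together with $E/e\ge 1$ and $P=\lambda E$, the hypothesis on $(\lambda e-E)(E/e)$ converts into control of $|Au_n-\lambda u_n|$; passing to the subsequential limit should force $u_\infty$ into the equality case of Theorem \ref{comparison_relation} (equivalently, make $u_\infty(t)$ an eigenvector of $A$), after which Proposition \ref{prop_1} rules out a ghost limit and case (ii) occurs.

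Apart from the routine bookkeeping, two points need care. The limit passages must all be taken in a topology strong enough to pass $B$ and $\dot u$ to the limit in $H$; this is exactly where the uniform Gevrey/$D(A^{2})$ bounds on $\mathcal A$ — and the resulting uniform bounds on $\dot u_n$ and $A\dot u_n$ — get used, rather than mere $H$-compactness of $\mathcal A$. The main obstacle, however, will be pinning down the precise mechanism in (c): one must verify that the constraint on $(\lambda e-E)(E/e)$, read together with the a priori bounds $E\le E_*$, $E\ge$ (the positive lower bound of Theorem \ref{low_bd_ghost}) and $E/\lambda\le e\le E$ (Lemma \ref{E_and_e}), is exactly what drives a subsequential limit into the degenerate eigenvector regime where Proposition \ref{prop_1} closes the argument.
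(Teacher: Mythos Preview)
The paper offers no proof beyond the single sentence ``an easy application of the well known compactness arguments in $C(\mathbb{R},\mathcal{A})$ yields the following,'' so for the main dichotomy and for parts (a) and (b) your write-up is already more detailed than the paper; the Arzel\`a--Ascoli extraction in $D(A)$, the limit passage in $B$ via the uniform higher-norm bounds on $\mathcal{A}$, and the Cauchy--Schwarz equality argument for (b) are all correct.

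Your mechanism for (c), however, is inverted. You say the constraint $(\lambda e-E)(E/e)\ge\alpha$, combined with $|Au-\lambda u|^2=\lambda(\lambda e-E)$, should ``drive a subsequential limit into the degenerate eigenvector regime'' so that Proposition~\ref{prop_1} then rules out a ghost limit. But the inequality points the other way: since $E/e\le\lambda$ by Lemma~\ref{E_and_e}, the hypothesis forces $\lambda e_n-E_n\ge\alpha/\lambda>0$, hence $|Au_n-\lambda u_n|^2=\lambda(\lambda e_n-E_n)\ge\alpha$, and passing to the limit gives $|Au_\infty-\lambda u_\infty|^2\ge\alpha>0$. The limit is therefore bounded \emph{away} from the eigenvector case of Theorem~\ref{comparison_relation}, not pushed into it, and Proposition~\ref{prop_1} yields no information (it only excludes ghosts that \emph{are} eigenvectors). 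As it stands your argument for (c) does not close; since the paper supplies no hint of the intended mechanism, you should revisit what the hypothesis actually delivers before deciding how---or whether---case~(ii) is to be forced.
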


As a consequence,  the closure the $GHO$ in the space $C(\mathbb{R},A)$ is equal to $GHO \cup \{\text{stationary solutions}\}$.
\begin{remark}
Using the results from the paper \cite{FJLRYZ13}, one could show further that the convergence in case (ii) is also valid for any higher norms.
\end{remark}


\section{the degeneracy problem related to ghost solutions}
Natural problems concern the behavior of the reference frame in Section 4, under both the convergence in Section 5 and as $t\rightarrow \infty$. In order to consider these questions, we introduce for a given ghost solution $u(\cdot)$

\begin{align*}
&H_0=\text{span}\{g\}=\mathbb{R}g,\\
&H_{01}=\text{span}\{g,u\}=\mathbb{R}g+\mathbb{R}u,\\
&H_{012}=\text{span}\{g,u,Au\}=\mathbb{R}g+\mathbb{R}u+\mathbb{R}Au.
\end{align*}
We now look for the conditions that ensure the dimension of $H_{012}$ to be 3. Notice that the two spaces $H_{01}$ and $H_{012}$ both depend on time. However, in the following, we suppress this dependence for simplicity. We start by noticing the following.
\begin{proposition}
\label{degeneracy}
The following is true,
\begin{align*}
H_0 \subseteq H_{01} \subseteq H_{012},
\end{align*}
hence, each space is a proper subspace of the next one.
\end{proposition}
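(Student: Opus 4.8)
The inclusions $H_0\subseteq H_{01}\subseteq H_{012}$ hold by the very definition of these spaces as linear spans, so the only content of the statement is that each inclusion is \emph{strict}. The plan is to prove strictness uniformly: for each step I would write down a generator of the larger space whose orthogonal projection onto the smaller space has strictly smaller norm — i.e.\ whose distance to the smaller space is positive — and identify that distance with one of the denominators appearing in the construction of the reference frame in Section~\ref{subsec_reference}, whose positivity is exactly one of the strict inequalities recorded in~(\ref{relation_for_constants}).

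Concretely, for $H_0\subsetneq H_{01}$: since $f_0=g/G$ is a unit vector spanning $H_0$, the point of $H_0$ nearest to $u$ is $(u,f_0)f_0$, and $|u-(u,f_0)f_0|^2=e-E^2/G^2=(eG^2-E^2)/G^2>0$ by Theorem~\ref{comparison_relation}(iii), the inequality being strict because a ghost solution is nonstationary. Hence $u\notin H_0$. For $H_{01}\subsetneq H_{012}$: the first step makes $f_1$ well defined, and $H_{01}=\operatorname{span}\{f_0,f_1\}$ with $f_0,f_1$ orthonormal, so the nearest point of $H_{01}$ to $Au$ is $(Au,f_0)f_0+(Au,f_1)f_1$, and the squared distance is $|Au|^2-(Au,f_0)^2-(Au,f_1)^2$, which is precisely the radicand in the denominator of $f_3$. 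A short algebraic simplification gives
\begin{align*}
|Au-(Au,f_0)f_0-(Au,f_1)f_1|^2=\frac{(G^2-P)(eP-E^2)}{eG^2-E^2},
\end{align*}
and each of $G^2-P$, $eP-E^2$, $eG^2-E^2$ is strictly positive by Theorem~\ref{comparison_relation} (equivalently by~(\ref{relation_for_constants})); hence $Au\notin H_{01}$. An alternative route for the second step, avoiding the algebra, is to suppose $Au=\alpha g+\beta u$, pair with $g$ and with $Au$, subtract, and use $|Au|^2=P=\lambda E<G^2$ to force $\alpha=0$, and then invoke Proposition~\ref{prop_1} to rule out $u$ being an eigenvector of $A$.

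I do not expect a genuine obstacle here: the argument is essentially a repackaging of the strict inequalities of Theorem~\ref{comparison_relation}. The one place needing care is the bookkeeping that matches the two reference-frame denominators with the squared distances $\operatorname{dist}(u,H_0)^2$ and $\operatorname{dist}(Au,H_{01})^2$, together with the routine simplification of the latter to $(G^2-P)(eP-E^2)/(eG^2-E^2)$; one should also note that the strictness hypothesis (nonstationarity) is used exactly once per inclusion and is legitimately available, since ghost solutions are nonstationary by definition.
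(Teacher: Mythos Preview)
Your proof is correct and, in spirit, is the same as the paper's: both arguments show that the squared distance from the new generator to the smaller span is a product of the quantities $G^2-P$, $eP-E^2$, $eG^2-E^2$ (up to a harmless denominator), and then invoke Theorem~\ref{comparison_relation} to conclude strictness. The difference is in execution: the paper works in the non-orthonormal basis $\{g,u\}$, solves a $2\times 2$ Gram system for the projection coefficients~(\ref{alpha_beta}), and then carries out what it calls a ``direct though lengthy computation'' of $|Au-\alpha u-\beta g|^2$; you instead use the orthonormal frame $\{f_0,f_1\}$ already built in Section~\ref{subsec_reference}, which makes the distance formula immediate and the simplification short. Your alternative route for $H_{01}\subsetneq H_{012}$ --- pairing $Au=\alpha g+\beta u$ with $g$ and with $Au$ and subtracting to force $\alpha=0$, then appealing to Proposition~\ref{prop_1} --- is genuinely slicker than either version and sidesteps the algebra entirely; it exploits the coincidence $(Au,g)=|Au|^2=P$ specific to ghost solutions, which the distance computation does not.
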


\begin{proof}
If $H_{01}=H_0$, then $u=\mu g$, for some $\mu \in \mathbb{R}$, so $u$ becomes an eigenvector of $A$, it then follows from the NSE that $Au=g$, since $B(u,u)=0$, and $u=u_*=g/\lambda$. Consequently, $Au=\lambda u\in H_{01}$.
If $H_{012}=H_{01}$, say, 
\begin{align}
\label{in_space_deg}
Au=\alpha u+\beta g,
\end{align}
then,  by taking inner product of (\ref{in_space_deg}) with $g$, and $u$, respctively, one has
\begin{align*}
\left\{\begin{matrix}
P=(Au,g)=\alpha E+\beta G^2\\
E=(Au,u)=\alpha e+\beta E.
\end{matrix}\right.
\end{align*}
If 
\begin{align*}
eG^2-E^2\neq 0,
\end{align*}
solving the equations, we get, 
\begin{align}
\label{alpha_beta}
\alpha=\frac{E(G^2-P)}{eG^2-E^2}, \beta=\frac{eP-E^2}{eG^2-E^2}.
\end{align}
Using (\ref{alpha_beta}),  a direct though lengthy computation shows that 
\begin{align*}
|Au-\alpha u-\beta g|^2=0,
\end{align*}
if and only if
\begin{align*}
(eP-E^2)(eG^2-E^2)(G^2-P)=0.
\end{align*}
So, $u=u_*$, by Theorem \ref{comparison_relation}. Therefore, $H_{012}=H_{01}=H_0.$
\end{proof}



Next, let  $P_{012}=P_{012}(t)$ denote the orthogonal projection of $H$ onto the space $H_{012}$. For $\omega \in H$, let $P_{012}\omega=\omega_1 g+\omega_2 u+\omega_3 Au$, then
\begin{align}
\label{matrix_form}
\begin{pmatrix}
(\omega,g)\\
(\omega,u)\\
(\omega, Au)
\end{pmatrix}=M\begin{pmatrix}
\omega_1\\
\omega_2\\
\omega_3
\end{pmatrix},
\end{align}
where, using (\ref{relation_1}), (\ref{relation_2}), we have
\begin{align}
\label{matrix_M}
M=M(e,E)=\begin{pmatrix}
G^2& E &P\\
E &e &E\\
P &E& P
\end{pmatrix},
\end{align}
with determinant
\begin{align*}
det(M)=(\lambda e-E)E(G^2-P)>0.
\end{align*}

If we take $\omega=B(u,u)$, the system (\ref{matrix_form}) becomes
\begin{align*}
\begin{pmatrix}
G^2-P\\
0\\
0
\end{pmatrix}=M\begin{pmatrix}
\omega_1\\
\omega_2\\
\omega_3
\end{pmatrix},
\end{align*}
so 
\begin{align*}
\begin{pmatrix}
\omega_1\\
\omega_2\\
\omega_3
\end{pmatrix}=M^{-1}\begin{pmatrix}
G^2-P\\
0\\
0
\end{pmatrix},
\end{align*} 
that is, 
\begin{align*}
P_{012}B(u(t),u(t))=(g,u(t),Au(t))M^{-1}\begin{pmatrix}
G^2-P\\
0\\
0
\end{pmatrix}.
\end{align*}
By (\ref{matrix_M}), $\omega_1$, $\omega_2$, $\omega_3$ are time independent. Actually, 
\begin{align*}
P_{012}B(u,u)=g-Au,
\end{align*}
and
\begin{align*}
|P_{012}B(u,u)|^2=G^2-P.
\end{align*}
Also, notice that, since
\begin{align*}
\dot{u}(t)+B(u(t),u(t))=g-Au(t)\in H_{012},
\end{align*} we have
\begin{align*}
(1-P_{012})B(u(t),u(t))=-\dot{u}(t),  \forall t\in \mathbb{R}.
\end{align*}

If we take $\omega=A^2u$, then (\ref{matrix_form}) becomes
\begin{align*}
M\begin{pmatrix}
\gamma\\
\beta\\
\alpha\\
\end{pmatrix}=
\begin{pmatrix}
\lambda P\\
P\\
|A^{3/2}u|^2
\end{pmatrix},
\end{align*}
which, using (\ref{matrix_M}), gives,
\begin{align}
\label{eq_for_coef}
\left\{\begin{matrix}
 G^2\gamma+E\beta+P\alpha=\lambda P \\ 
 E\gamma+e\beta+E\alpha=P\\ 
 P\gamma+E\beta+P\alpha=|A^{3/2}u|^2
\end{matrix}\right.
\end{align}
Solving (\ref{eq_for_coef}), one gets
\begin{align*}
&\gamma=\frac{\lambda P-|A^{3/2}u|^2}{G^2-P},\\
&\beta=\frac{\lambda P-|A^{3/2}u|^2}{\lambda e-E},\\ 
&\alpha=\frac{P}{E}-\frac{\lambda P-|A^{3/2}u|^2}{G^2-P}-\frac{e}{E} \frac{\lambda P-|A^{3/2}u|^2}{\lambda e-E}. 
\end{align*}\\
That is,
\begin{align*}
P_{012}A^2u&=(g,u,Au)M^{-1}\begin{pmatrix}
\lambda P\\
P\\
|A^{3/2}u|^2
\end{pmatrix}\\
&=\gamma g+\beta u+\alpha Au.
\end{align*}
Notice that, by (\ref{relation_1})
\begin{align*}
|A^{3/2}u-\lambda A^{1/2}u|^2=|A^{3/2}u|^2-\lambda P \geq 0,
\end{align*}
with equality if and only if
\begin{align*}
A^{3/2}u=\lambda A^{1/2}u,
\end{align*}
which, by (\ref{relation_1}) and Theorem \ref{comparison_relation}(ii), holds if and only if
\begin{align*}
u=u_*=g/\lambda.
\end{align*}
Thus
\begin{align*}
\gamma<0, \beta<0, \text{and } \alpha>\lambda>0;
\end{align*}
and one of the above inequalities becomes an equality if and only if $u=u_*$.

Summarizing the above discussion, we obtain the following proposition.
\begin{proposition}
\label{summerizing_result}
For a ghost solution $u(\cdot)$,
\begin{enumerate}[(i).]
\item If $A^2u(t) \in H_{012}$, for some $t\in \mathbb{R}$, then $A^2u(t)=\gamma g+\beta u(t)+\alpha Au(t)$ with 
\begin{align}
\label{coefs_gamma}
\gamma=\gamma(t)=\frac{\lambda P-|A^{3/2}u(t)|^2}{G^2-P},
\end{align}
\begin{align}
\label{coefs_beta}
\beta=\beta(t)=\frac{\lambda P-|A^{3/2}u(t)|^2}{\lambda e-E},
\end{align}
\begin{align}
\label{coefs_alpha}
\alpha=\alpha(t)=\frac{P}{E}-\frac{\lambda P-|A^{3/2}u(t)|^2}{G^2-P}-\frac{e}{E} \frac{\lambda P-|A^{3/2}u(t)|^2}{\lambda e-E}=\lambda -\gamma(t)-\frac{e}{E} \beta(t).
\end{align}

\item $|A^{3/2}u(t)|^2=\lambda P$ at some $t\in \mathbb{R}$, if and only if $u=u_*$.
\end{enumerate}
\end{proposition}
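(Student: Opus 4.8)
The plan is to read off both assertions from the description of the orthogonal projection $P_{012}$ and the Gram matrix $M$ in (\ref{matrix_M}) that was set up just above, making the underlying linear algebra precise.

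For part (i), first note that since a ghost solution is nonstationary, the relations (\ref{relation_for_constants}) together with Theorem \ref{comparison_relation} give $\lambda e - E>0$, $E>0$ and $G^2-P>0$; hence $\det M = (\lambda e-E)E(G^2-P)>0$, so $\{g,u(t),Au(t)\}$ is genuinely a basis of $H_{012}$ (analyticity of the solutions on $\mathcal{A}$ guarantees $u(t)\in D(A^2)$, so $A^2u(t)$ is meaningful). If $A^2u(t)\in H_{012}$, write $A^2u=\gamma g+\beta u+\alpha Au$ and pair this with $g$, $u$, $Au$: using the self-adjointness of $A$ and $Ag=\lambda g$ one gets $(A^2u,g)=\lambda(Au,g)=\lambda P$ by (\ref{relation_2}), $(A^2u,u)=|Au|^2=P$, and $(A^2u,Au)=|A^{3/2}u|^2$, which is exactly the system (\ref{eq_for_coef}). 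I would then solve it by Cramer's rule. Replacing the first (resp. second) column of $M$ by the column $(\lambda P,\,P,\,|A^{3/2}u|^2)$ and expanding, the determinant factors—after using $P=\lambda E$, i.e. (\ref{relation_2}), to rewrite $eP-E^2=E(\lambda e-E)$ and $P^2=\lambda EP$—as $(eP-E^2)(\lambda P-|A^{3/2}u|^2)$ and $(G^2-P)(P^2-E|A^{3/2}u|^2)$ respectively; dividing by $\det M$ yields precisely (\ref{coefs_gamma}) and (\ref{coefs_beta}). For $\alpha$, rather than expand a third determinant I would use the middle equation $E\gamma+e\beta+E\alpha=P$, so $\alpha=P/E-\gamma-(e/E)\beta=\lambda-\gamma-(e/E)\beta$ since $P/E=\lambda$; substituting the expressions for $\gamma$ and $\beta$ gives (\ref{coefs_alpha}).

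For part (ii), I would expand, using (\ref{relation_1}) in the form $\lambda E=P$ and the identity $(A^{3/2}u,A^{1/2}u)=(A^2u,u)=|Au|^2=P$,
\[
|A^{3/2}u-\lambda A^{1/2}u|^2=|A^{3/2}u|^2-2\lambda P+\lambda^2 E=|A^{3/2}u|^2-\lambda P\ge 0,
\]
so $|A^{3/2}u|^2\ge\lambda P$, with equality if and only if $A^{3/2}u=\lambda A^{1/2}u$. Since $A^{1/2}$ commutes with $A$ and is injective, the latter forces $Au=\lambda u$; then $u$ is an eigenvector of $A$, so $u=u_*$ by Proposition \ref{prop_1} (equivalently, by the equality case of Theorem \ref{comparison_relation}(ii)). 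Conversely, $Au_*=g=\lambda u_*$ gives $A^{3/2}u_*=\lambda A^{1/2}u_*$, so $|A^{3/2}u_*|^2=\lambda P$ by the identity above.

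The computations in part (i) are the only place real work occurs, and even there the only subtlety is the repeated use of $P=\lambda E$ to collapse the Cramer's-rule determinants; I expect this bookkeeping, together with checking that the two displayed forms of $\alpha$ in (\ref{coefs_alpha}) agree, to be the main (but routine) obstacle. I would also remark that, since Proposition \ref{prop_1} shows a ghost solution is never $u_*$, part (ii) in fact yields the strict inequality $|A^{3/2}u(t)|^2>\lambda P$ for every $t$, whence $\gamma(t)<0$, $\beta(t)<0$ and $\alpha(t)>\lambda$, recovering the sign statements recorded just before the proposition.
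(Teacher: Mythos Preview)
Your proposal is correct and follows essentially the same route as the paper: the paper sets up the linear system (\ref{eq_for_coef}) by pairing $A^2u$ with $g$, $u$, $Au$, solves it (you make the Cramer's-rule bookkeeping and the simplifications via $P=\lambda E$ explicit, whereas the paper simply records the answer), and for part~(ii) expands $|A^{3/2}u-\lambda A^{1/2}u|^2$ exactly as you do, invoking Theorem~\ref{comparison_relation}(ii) for the equality case. Your closing remark on the strict signs of $\gamma,\beta,\alpha$ also matches the discussion the paper places immediately before the proposition.
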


\begin{remark}
\label{pos_delta}
It is easy to check that $\alpha^2+4\beta>0$ is always true. Indeed, using (\ref{coefs_gamma}), (\ref{coefs_beta}), (\ref{coefs_alpha}), 
\begin{align}
\label{eq_rho}
\alpha^2+4\beta&=(\lambda-\gamma-\frac{e}{E}\beta)^2+4\beta\\ \nonumber
&=(\frac{1}{G^2-P}+\frac{e/E}{2e-E})^2\rho^2\\ \nonumber
&+4\rho(\frac{1-e/E}{2e-E}-\frac{1}{G^2-P})+4,
\end{align}
where 
\begin{align*}
\rho:=\lambda P-|A^{3/2}v|^2<0.
\end{align*}
A simple calculation shows that the discriminant of the quadratic form on the right hand side of (\ref{eq_rho}) is always negative, therefore, $\alpha^2+4\beta$ is always positive.
\end{remark}

In the following, we will discuss the projection of the function $A^2u$ onto the space $H_{012}.$  We define $E_{\mu}$ as the eigenspace projector corresponding to the eigenvalue $\mu \in sp(A)$. 

\begin{theorem}
If the set $C:=\{t \in \mathbb{R}: A^2u(t) \in H_{012}\} \subset \mathbb{R}$ has an accumulation point $t^* \in \mathbb{R}$, then
$\alpha(t),\beta(t),\gamma(t)$ are constants in time.  
\end{theorem}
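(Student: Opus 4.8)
The plan is to reduce the statement to showing that the single scalar function $\phi(t):=|A^{3/2}u(t)|^2$ is constant. I begin by recording the data that is already time independent for a ghost solution: $e$, $E$, $P=\lambda E$ (by (\ref{relation_2})), and $G^2=|g|^2$. Hence the matrix $M=M(e,E)$ of (\ref{matrix_M}) has constant entries, and $\det M=(\lambda e-E)E(G^2-P)>0$ by (\ref{relation_for_constants}); in particular $\dim H_{012}=3$ for every $t$ (Proposition~\ref{degeneracy}), so $P_{012}(t)$ is a genuine rank-$3$ projection and the scalars $\gamma(t),\beta(t),\alpha(t)$ of (\ref{coefs_gamma})--(\ref{coefs_alpha}) are well defined for every $t$. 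Inspecting those formulas, $\gamma(t),\beta(t),\alpha(t)$ are affine functions of $\phi(t)$ whose coefficients involve only $e,E,P,G^2,\lambda$ and are therefore constant. So it suffices to prove $\dot\phi\equiv 0$.

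Next I would upgrade the hypothesis ``$C$ has an accumulation point'' to ``$C=\mathbb{R}$''. Put
\begin{align*}
\psi(t):=\bigl|A^2u(t)-P_{012}(t)A^2u(t)\bigr|^2=|A^2u(t)|^2-|P_{012}(t)A^2u(t)|^2\ge 0,
\end{align*}
so that $C=\{t\in\mathbb{R}:\psi(t)=0\}$. Since $M$ has constant entries,
\begin{align*}
P_{012}(t)A^2u(t)=(g,u(t),Au(t))\,M^{-1}\begin{pmatrix}\lambda P\\ P\\ \phi(t)\end{pmatrix},
\end{align*}
which is a linear combination, with real-analytic scalar coefficients, of the real-analytic $H$-valued maps $g$, $u(t)$, $Au(t)$; here I use that $t\mapsto u(t)$ is real-analytic into $D(A^{k})$ for every $k$ by analyticity of solutions on the global attractor \cite{Temam83}. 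Thus $\psi$ is real-analytic on $\mathbb{R}$, its zero set $C$ has an accumulation point, and $\mathbb{R}$ is connected, so the identity theorem forces $\psi\equiv 0$; that is,
\begin{align}
\label{planeq}
A^2u(t)=\gamma(t)\,g+\beta(t)\,u(t)+\alpha(t)\,Au(t),\qquad\forall\,t\in\mathbb{R}.
\end{align}

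Finally I would differentiate (\ref{planeq}) in disguise, by taking its inner product with $A\dot u(t)$. On the right-hand side, Proposition~\ref{general_relation}(\ref{Prop1}) gives $(g,A\dot u)=\lambda(g,\dot u)=0$ and $(u,A\dot u)=(Au,\dot u)=0$, while $(Au,A\dot u)=\tfrac12\frac{d}{dt}|Au|^2=\tfrac12\dot P=0$ because $P=\lambda E$ is constant; so the whole right-hand side vanishes. The left-hand side equals $(A^2u,A\dot u)=(A^3u,\dot u)=\tfrac12\frac{d}{dt}|A^{3/2}u|^2=\tfrac12\dot\phi(t)$. Hence $\dot\phi\equiv 0$, so $\phi$ is constant and, by the first paragraph, $\gamma(t),\beta(t),\alpha(t)$ are constant in time.

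The step I expect to be the main obstacle is the middle one: making rigorous that $\psi$ is genuinely real-analytic on all of $\mathbb{R}$ (so the identity theorem applies) and that $u(t)\in D(A^{3})$ with the manipulations $(A^2u,A\dot u)=(A^3u,\dot u)=\tfrac12\frac{d}{dt}|A^{3/2}u|^2$ legitimate. Both are consequences of the smoothing and time-analyticity of solutions lying on the attractor; once these are invoked, the rest is a short computation.
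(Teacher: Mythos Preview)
Your proof is correct and proceeds by a genuinely different route from the paper's.

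The paper argues spectrally: for $t\in C$ and $\mu\in sp(A)\setminus\{\lambda\}$, applying $E_\mu$ to the relation $A^2u=\gamma g+\beta u+\alpha Au$ yields $(\mu^2-\alpha(t)\mu-\beta(t))E_\mu u(t)=0$, so any active eigenvalue $\mu\neq\lambda$ must be one of the two real roots $\mu_\pm(t)=\tfrac12\bigl(\alpha(t)\pm\sqrt{\alpha(t)^2+4\beta(t)}\bigr)$ (real by Remark~\ref{pos_delta}). These roots are analytic in $t$ and, on $C$, are forced into the discrete set $sp(A)$; hence they are constant near the accumulation point and therefore everywhere, whence $\alpha=\mu_++\mu_-$, $\beta=-\mu_+\mu_-$, and then $\gamma$, are constant.

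You instead reduce everything to the single scalar $\phi=|A^{3/2}u|^2$, first extend $C$ to all of $\mathbb{R}$ via the identity theorem for the analytic function $\psi$, and then kill $\dot\phi$ by pairing the chain relation with $A\dot u$, using only the ghost orthogonalities of Proposition~\ref{general_relation}(\ref{Prop1}) together with $\dot P=0$. This is more elementary---it needs neither the discriminant computation of Remark~\ref{pos_delta} nor any information about which eigenvalues are actually populated---and it simultaneously proves the paper's \emph{next} theorem (that $C=\mathbb{R}$), which the paper deduces only after the constancy of the coefficients is in hand. The trade-off is that the paper's spectral route delivers the decomposition $u=u_++u_-+\eta g$ essentially as a by-product, which is exactly the structure exploited downstream in Theorem~\ref{constancy}.
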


\begin{proof}
For any $t \in C$, from Proposition \ref{summerizing_result}, we have
$A^2u(t)=\gamma(t) g+\beta(t) u(t)+\alpha(t) Au(t)$. Then for any $\mu\neq \lambda$, $\mu \in sp(A)$, 
\begin{align*}
(\mu^2-\alpha(t) \mu -\beta(t))E_{\mu}u(t)=0.
\end{align*}
By Remark \ref{pos_delta} it follows that that there are two real roots
\begin{align*}
\mu_{\pm}(t)=\frac{\alpha(t) \pm \sqrt{\alpha(t)^2+4 \beta(t)}}{2}.
\end{align*}
Now, analyticity of solutions in global attractor (\cite{Temam83}) implies that $\mu_{\pm}(t)$ are both analytic in $t$. Since the set $C$ is assumed to have an accumulation point, we conclude that $\mu_{\pm}(t)$ are time independent, so $\alpha(t)=\mu_+(t)+\mu_-(t)$, $\beta(t)=-\mu_+(t) \mu_-(t)$ (and hence also $\gamma(t)$), are all time independent.
\end{proof}

\begin{theorem}
If the set $C$ has an accumulation point $t^* \in \mathbb{R}$, then
$ A^2u(t) \in H_{012}$ for all $t \in  \mathbb{R}$.  
\end{theorem}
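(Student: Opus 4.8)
The plan is to leverage the previous theorem, which already establishes that once $C$ has an accumulation point $t^*$, the coefficient functions $\alpha(t),\beta(t),\gamma(t)$ are constants; call them $\alpha,\beta,\gamma$. Set
\begin{align*}
w(t):=A^2u(t)-\gamma g-\beta u(t)-\alpha Au(t).
\end{align*}
By definition of $C$ and Proposition \ref{summerizing_result}(i), we have $w(t)=0$ for every $t\in C$, and we want to conclude $w(t)\equiv 0$ on all of $\mathbb{R}$. The natural tool is analyticity: the solution $u(\cdot)$ of the NSE, being in the global attractor, is real-analytic in $t$ with values in $D(A^k)$ for every $k$ (see \cite{Temam83}), so in particular $t\mapsto A^2u(t)$ and $t\mapsto Au(t)$ are analytic $H$-valued maps; since $g$ is fixed and the coefficients are now constants, $w:\mathbb{R}\to H$ is real-analytic. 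An $H$-valued analytic function that vanishes on a set with an accumulation point vanishes identically, which gives $w(t)\equiv 0$, i.e. $A^2u(t)\in H_{012}(t)$ for all $t\in\mathbb{R}$.

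The key steps, in order, are: first invoke the preceding theorem to replace $\alpha(t),\beta(t),\gamma(t)$ by constants (this is where the accumulation-point hypothesis on $C$ is really used); second, form the $H$-valued function $w(t)$ and observe it is real-analytic in $t$ because each summand is — here I would cite the analyticity of attractor solutions and note that $A^2u(t)\in D(A^\ell)$ so all the terms genuinely live in $H$ and depend analytically on $t$; third, note $w$ vanishes on $C$, which by hypothesis has the accumulation point $t^*$; fourth, apply the identity theorem for vector-valued analytic functions (equivalently, test against a dense family of functionals $\phi\in H^*$: each $\phi(w(t))$ is a scalar analytic function vanishing on $C$, hence $\equiv 0$, and since $\phi$ is arbitrary $w\equiv 0$). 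Finally, rewriting $w(t)\equiv 0$ as $A^2u(t)=\gamma g+\beta u(t)+\alpha Au(t)$ and recalling $g,u(t),Au(t)$ span $H_{012}(t)$ (Proposition \ref{degeneracy} shows this is genuinely $3$-dimensional), we get $A^2u(t)\in H_{012}(t)$ for all $t\in\mathbb{R}$, as claimed.

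I expect the main (and really the only) obstacle to be the careful justification of analyticity: one must know not merely that $u(\cdot)$ is $H$-analytic but that it is analytic with values in $D(A^2)$ (or higher), so that $A^2u(t)$ is a bona fide analytic $H$-valued curve rather than just a measurable one; this is standard for the 2D NSE on the attractor but deserves an explicit reference. A minor secondary point is the distinction between the time-dependent space $H_{012}(t)$ and the fixed relation with constant coefficients — once the coefficients are constants there is no subtlety, but it is worth a sentence to emphasize that the membership statement $A^2u(t)\in H_{012}(t)$ for all $t$ is exactly the assertion $w(t)\equiv 0$ with these particular constant coefficients. Everything else is a direct application of the identity theorem and costs no computation.
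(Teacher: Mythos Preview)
Your proposal is correct and takes essentially the same approach as the paper: define the residual $A^2u(t)-\gamma g-\beta u(t)-\alpha Au(t)$, observe it is real-analytic in $t$ as an $H$-valued function, note it vanishes on $C$, and apply the identity theorem. The only cosmetic difference is that the paper writes the coefficients with explicit $t$-dependence (they are constant anyway by the preceding theorem) while you invoke that theorem first to treat them as constants; your version is arguably the cleaner presentation.
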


\begin{proof}
Denote $R(t)=A^2u(t)-\gamma(t) g-\beta(t) u(t)-\alpha(t) Au(t)$, analyticity of solutions in the global attractor (\cite{Temam83}) implies that $R(t)$ is analytic in time. Since $C$ is assumed to have an accumulation point, we conclude that $R(t)$ are time independent, so $R(t)\equiv 0$, for all $t\in \mathbb{R}$.
\end{proof}

\begin{definition}
\label{def_for_cg}
A chained ghost solution is a ghost solution satisfying the following chained relation
\begin{align}
\label{chain_rel}
A^2u(t)=\gamma g+\beta u(t)+\alpha Au(t), \forall t \in \mathbb{R},
\end{align}
for some time independent coefficients $\gamma$, $\beta$, and $\alpha$.
\end{definition}
\begin{remark}
For any given chained ghost solution $u(t)$, if it exists, it follows easily that we also have the chained relation for $A^nu(t)$, $\forall n\in \mathbb{N}_{+}$, that is,
\begin{align*}
A^nu(t)=\gamma_ng+\beta_nu(t)+\alpha_nAu(t), \forall t \in \mathbb{R},
\end{align*}
where $\gamma_n$, $\beta_n$, and $\alpha_n$ are all time independent. 
\end{remark}

From now on, we will focus on chained ghost solutions.

\begin{theorem}
\label{constancy}
The chained ghost solution $u(t)$ has the following decomposition,
\begin{align}
\label{decomp_for_u}
u(t)&=E_{\mu_+}u(t)+E_{\mu_-}u(t)+\eta g\\
&=u_+(t)+u_-(t)+\eta g,\forall t\in \mathbb{R}\nonumber
\end{align}
where $\eta=E/G^2.$

Moreover,
\begin{align}
\label{u_plus}
|u_+|^2=\frac{\lambda-\mu_-}{\mu_+(\mu_+-\mu_-)}E(1-P/G^2),
\end{align}
\begin{align}
\label{u_minus}
|u_-|^2=\frac{\lambda-\mu_+}{\mu_-(\mu_--\mu_+)}E(1-P/G^2).
\end{align}
\end{theorem}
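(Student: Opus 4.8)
The plan is to read the decomposition off the chained relation $A^{2}u(t)=\gamma g+\beta u(t)+\alpha Au(t)$ spectrally. Since $g$ is an eigenvector of $A$ with eigenvalue $\lambda$, I would apply the eigenspace projector $E_{\mu}$ to both sides for each $\mu\in sp(A)$ with $\mu\neq\lambda$; as $E_{\mu}g=0$ this yields $(\mu^{2}-\alpha\mu-\beta)E_{\mu}u(t)=0$. By Remark \ref{pos_delta} the discriminant $\alpha^{2}+4\beta$ is positive, so $t\mapsto t^{2}-\alpha t-\beta$ has two distinct real roots $\mu_{\pm}=(\alpha\pm\sqrt{\alpha^{2}+4\beta})/2$, and therefore $E_{\mu}u(t)=0$ for every $\mu\in sp(A)\setminus\{\lambda,\mu_{+},\mu_{-}\}$. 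Applying instead $E_{\lambda}$ gives $(\lambda^{2}-\alpha\lambda-\beta)E_{\lambda}u(t)=\gamma g$.

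The crux of the argument is to show $\lambda^{2}-\alpha\lambda-\beta\neq 0$, which then forces $E_{\lambda}u(t)$ to be a (time-independent) scalar multiple of $g$. Substituting $\alpha=\lambda-\gamma-(e/E)\beta$ from (\ref{coefs_alpha}) and simplifying, one obtains $\lambda^{2}-\alpha\lambda-\beta=\gamma\lambda+\beta(\lambda e-E)/E$. For a chained ghost solution $A^{2}u(t)\in H_{012}$ for all $t$ and $u(t)\neq u_{*}$ (ghost solutions are nonstationary), so Proposition \ref{summerizing_result} applies and $\gamma<0$, $\beta<0$; together with $\lambda e-E>0$ from (\ref{relation_for_constants}) this makes $\lambda^{2}-\alpha\lambda-\beta<0$. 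Hence $E_{\lambda}u(t)=\frac{\gamma}{\lambda^{2}-\alpha\lambda-\beta}\,g$, and since an upward parabola that is negative at $\lambda$ has $\lambda$ strictly between its roots, we also get $\mu_{-}<\lambda<\mu_{+}$. Writing $u_{\pm}(t):=E_{\mu_{\pm}}u(t)$ we conclude $u(t)=u_{+}(t)+u_{-}(t)+c\,g$ with $c$ the constant above. To identify $c$ I would take the inner product with $g$: the eigenspaces are orthogonal, so $(u,g)=c\,G^{2}$, while $(u,g)=E$ by (\ref{relation_1}); thus $c=E/G^{2}=\eta$, which is (\ref{decomp_for_u}).

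The norm formulas then follow by elementary linear algebra. Using the mutual orthogonality of $u_{+}(t),u_{-}(t),g$, the eigenrelations $Au_{\pm}=\mu_{\pm}u_{\pm}$, and $\eta^{2}|g|^{2}=E^{2}/G^{2}$, I would expand
\begin{align*}
E&=(Au,u)=\mu_{+}|u_{+}|^{2}+\mu_{-}|u_{-}|^{2}+\lambda E^{2}/G^{2},\\
P&=|Au|^{2}=\mu_{+}^{2}|u_{+}|^{2}+\mu_{-}^{2}|u_{-}|^{2}+\lambda^{2}E^{2}/G^{2}.
\end{align*}
Invoking $P=\lambda E$ from (\ref{relation_2}) collapses the trailing terms and turns this into the linear system $\mu_{+}|u_{+}|^{2}+\mu_{-}|u_{-}|^{2}=E(1-P/G^{2})$ and $\mu_{+}^{2}|u_{+}|^{2}+\mu_{-}^{2}|u_{-}|^{2}=\lambda E(1-P/G^{2})$, whose coefficient matrix is invertible because $\mu_{+}\neq\mu_{-}$; solving it (eliminate $|u_{-}|^{2}$, then $|u_{+}|^{2}$) produces precisely (\ref{u_plus}) and (\ref{u_minus}). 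As a consistency check, $\mu_{-}<\lambda<\mu_{+}$ together with $P<G^{2}$ (Lemma \ref{P_and_G^2}) makes both right-hand sides positive, so in particular $u_{+}$ and $u_{-}$ are genuinely nonzero. The one delicate point is the strict inequality $\lambda^{2}-\alpha\lambda-\beta<0$ in the second paragraph; everything else is bookkeeping with the orthogonal decomposition and the ghost identities.
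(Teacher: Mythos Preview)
Your argument is correct and shares the paper's overall skeleton (project the chained relation onto eigenspaces, then read off the three-component decomposition), but your execution diverges from the paper's at every intermediate step, and in each case your route is the shorter one. To rule out $\lambda^{2}-\alpha\lambda-\beta=0$ the paper argues by contradiction via the reduced system (\ref{simpl_eq_for_coef}) and Proposition~\ref{summerizing_result}(ii); you instead substitute $\alpha=\lambda-\gamma-(e/E)\beta$ and obtain the explicit value $\gamma\lambda+\beta(\lambda e-E)/E<0$, which is both simpler and stronger, since it immediately places $\lambda$ strictly between the roots $\mu_{\pm}$. The paper only recovers $\mu_{-}<\lambda<\mu_{+}$ later, from the sign condition (\ref{intermediate_lambda}). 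For the identification $\eta=E/G^{2}$ the paper passes through the NSE to write $B(u,u)$ in the decomposed form (\ref{bilinear_two}) and then pairs with $g$; you bypass $B(u,u)$ entirely by pairing $u$ itself with $g$ and using the orthogonality of eigenspaces, which is cleaner. Finally, for the norms $|u_{\pm}|^{2}$ the paper derives the two linear equations from $(B(u,u),u)=0$ and $(B(u,u),Au)=0$ (equations (\ref{one_for_u}), (\ref{two_for_u})), whereas you expand $E=(Au,u)$ and $P=|Au|^{2}$ directly in the orthogonal decomposition; the resulting $2\times 2$ system is literally the same, but your derivation avoids the nonlinear term altogether. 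What the paper's route buys is that it makes the role of the bilinear identities transparent and sets up (\ref{intermediate_lambda}) for later use; what your route buys is economy and the observation that the entire theorem is a purely spectral/linear-algebraic consequence of the chained relation and the ghost identities (\ref{relation_1})--(\ref{relation_2}), with no appeal to $B(u,u)$ needed.
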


\begin{proof}
If $(1-P_{012})A^{2}u(t)$ vanishes, then
\begin{align*}
A^{2}u=\alpha Au+\beta u+\gamma g,
\end{align*}
where $\alpha$, $\beta$, and $\gamma$ are as given in Definition \ref{def_for_cg}. Equivalently,
\begin{align}
\label{rel_basic}
(A^{2}-\alpha A-\beta) u=\gamma g.
\end{align}

Clearly, (\ref{rel_basic}) implies
\begin{align}
\label{proj_g}
  ({\lambda}^{2}-\alpha {\lambda}-\beta) E_{\lambda} u=\gamma g.
\end{align}
Here we discuss different possibilities.

(i) When $\lambda^2-\alpha \lambda-\beta=0$, whence $\gamma=0$. Then (\ref{eq_for_coef}) becomes 
\begin{align}
\label{simpl_eq_for_coef}
\left\{\begin{matrix}
E\beta+P\alpha=\lambda P \\ 
e\beta+E\alpha=P\\ 
E\beta+P\alpha=|A^{3/2}u|^2.
\end{matrix}\right.
\end{align}
By (\ref{relation_2}), the first and second equations in (\ref{simpl_eq_for_coef}) give
\begin{align*}
(\lambda e-E)\beta=0,
\end{align*}
so that by Theorem \ref{comparison_relation}(ii), we have $\beta=0$;
Then the second and the third equations in (\ref{simpl_eq_for_coef}), together with (\ref{relation_2}), imply
\begin{align*}
\lambda P=|A^{3/2}u|^2,
\end{align*}
which contradicts Proposition \ref{summerizing_result} (ii) . Therefore, $\lambda^2-\alpha \lambda-\beta$ will never vanish.

So, we need to consider the following case only that

(ii) $\lambda^2-\alpha \lambda-\beta$ is never zero. 

From (\ref{proj_g}), 
\begin{align*}
E_{\lambda}u=\frac{\gamma}{\lambda^2-\alpha \lambda-\beta}g;
\end{align*}
and for all $\mu \in sp(A)$, with $\mu \neq \lambda$,
\begin{align}
\label{general_mu}
(\mu^2-\alpha \mu-\beta)E_{\mu}u=0.
\end{align}

By Remark \ref{pos_delta}, $\Delta:=\alpha^2+4\beta>0$, then, (\ref{general_mu}) gives
\begin{align*}
(\mu-\mu_{+})(\mu-\mu_{-})E_{\mu}u=0,
\end{align*}
where $\mu_{\pm}:=\frac{1}{2}(\alpha\pm p)$, with $p=\Delta^{1/2}$. Necessarily, $\lambda \notin \{\mu_{+},\mu_{-}\}$. Also, notice that if $\mu \notin \{\mu_+,\mu_-\}$, then $E_{\mu}u=0$, thus the decomposition of $u$ in terms of the eigenvectors of the operator $A$ only contains three possible nonzero components. Without loss of generality, we may assume that $\mu_{+}>\mu_{-}$. Therefore, in this case, we have,
\begin{align}
\label{decomp_for_u_two}
u=u_++u_-+\eta g,
\end{align}
where $u_+=E_{\mu_{+}}u$ and $u_-=E_{\mu_{-}}u$.

To determine the value of $\eta$, we notice that
\begin{align}
\label{bilinear_two}
B(u,u)=(1-\lambda \eta )g-{\mu}_{+}u_{+}-{\mu}_{-}u_{-}.
\end{align}
Taking the inner product of (\ref{bilinear_two}) with $g$ and using (\ref{relation_1}), (\ref{relation_2}), Proposition \ref{general_relation}(\ref{Prop3}) , we have $\eta= E/G^2$, that is,
\begin{align}
\label{rel_for_u}
u=u_{+}+u_{-}+\frac{E}{G^2}g.
\end{align}
Now, using (\ref{rel_for_u}) and 
\begin{align*}
(B(u,u),u)=0
\end{align*}
gives
\begin{align}
\label{one_for_u}
|A^{1/2}u|^2=\lambda \frac{|A^{1/2}u|^4}{G^2}+\mu_{+}|u_{+}|^2+\mu_{-}|u_{-}|^2;
\end{align}
and,
\begin{align*}
(B(u,u),Au)=0,
\end{align*}
gives
\begin{align}
\label{two_for_u}
\lambda |A^{1/2}u|^2=\lambda^2\frac{|A^{1/2}u|^4}{G^2}+\mu_{+}^2|u_{+}|^2+\mu_{-}^2|u_{-}|^2;
\end{align}
From (\ref{one_for_u}) and (\ref{two_for_u}), we get,
\begin{align}
\label{intermediate_lambda}
\mu_{+}|u_{+}|^2(\lambda-\mu_{+})+\mu_{-}|u_{-}|^2(\lambda-\mu_{-})=0,
\end{align}
which implies
\begin{align}
\label{rel_for_mu's}
\mu_{-}<\lambda<\mu_{+}.
\end{align}
Using (\ref{one_for_u}), (\ref{two_for_u}) to solve for $|u_{-}|^2$ and $|u_{+}|^2$ yields (\ref{u_plus}), (\ref{u_minus}).
\end{proof}
\begin{corollary}
\label{powers_of_A}
For chained ghost solution $u(t)$, we have $|A^su(t)|$ is time independent, for all $s\in \mathbb{R}$.
\end{corollary}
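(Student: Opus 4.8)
The plan is to read the result straight off the eigenvector decomposition of Theorem \ref{constancy}. For a chained ghost solution, (\ref{decomp_for_u}) gives $u(t)=u_+(t)+u_-(t)+\eta g$ with $Au_\pm(t)=\mu_\pm u_\pm(t)$, $Ag=\lambda g$, and $\eta=E/G^2$. Since $\mu_-<\lambda<\mu_+$ by (\ref{rel_for_mu's}), the three summands lie in distinct eigenspaces of $A$ and are therefore mutually orthogonal in $H$. Each summand is an eigenvector of $A$, so it lies in $D(A^s)$ for every real $s$ and $A^s$ acts on it by the corresponding scalar power. Applying $A^s$ termwise and using the orthogonality yields
\begin{align*}
|A^s u(t)|^2=\mu_+^{2s}\,|u_+(t)|^2+\mu_-^{2s}\,|u_-(t)|^2+\eta^2\lambda^{2s}\,G^2 .
\end{align*}

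The second step is to observe that every quantity on the right is time independent. The exponents $\mu_\pm$ are constants by Definition \ref{def_for_cg}, since $\mu_\pm=\tfrac12\big(\alpha\pm\sqrt{\alpha^2+4\beta}\big)$ with $\alpha,\beta$ time independent; $G^2=|g|^2$ and $\eta=E/G^2$ are constants because $u(\cdot)$ is a ghost solution; and $|u_\pm(t)|^2$ are given by the closed formulas (\ref{u_plus}) and (\ref{u_minus}), whose right-hand sides involve only $E$, $P$, $G^2$, $\lambda$, and $\mu_\pm$, with $P=\lambda E$ by (\ref{relation_2}) — all constant in $t$. Hence $|A^s u(t)|^2$, and so $|A^s u(t)|$, is independent of $t$ for every $s\in\mathbb{R}$.

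There is essentially no genuine obstacle here, as the statement is a corollary of Theorem \ref{constancy}; the only point deserving a word of care is the claim $u_\pm(t)\in D(A^s)$ for arbitrary real (in particular negative or fractional) $s$. This is immediate: $u_\pm(t)=E_{\mu_\pm}u(t)$ belongs to a single eigenspace of $A$, so $A^s$ acts on it simply by multiplication by the scalar $\mu_\pm^s$, which is well defined for all $s\in\mathbb{R}$ in the sense of the power $A^s$ defined in Section 2. This makes the orthogonal expansion above legitimate and completes the argument.
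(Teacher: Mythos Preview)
Your proof is correct and follows essentially the same approach as the paper: both expand $|A^s u|^2$ via the eigenspace decomposition (\ref{decomp_for_u}) and then invoke (\ref{u_plus}) and (\ref{u_minus}) to conclude that every term on the right is constant in time. The paper's proof also records, as a further consequence, that $(A^{2s}u,B(u,u))$ is time independent for all $s\in\mathbb{R}$, which you do not mention but which is not part of the corollary's statement.
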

\begin{proof}
For any $s\in \mathbb{R}$, we have,
\begin{align}
\label{power_of_A}
|A^su|^2=\mu_+^{2s}|u_{+}|^2+\mu_-^{2s}|u_{-}|^2+\lambda^{2s}(\frac{E}{G^2})^2|g|^2.
\end{align}
By (\ref{u_plus}), (\ref{u_minus}) the right hand side of (\ref{power_of_A}) is time independent, hence, $|A^{s}u|^2$ is time independent for all $s \in \mathbb{R}$.  Moreover,
\begin{align*}
0&=(A^{2s}u,\frac{du}{dt})\\
&=(A^{2s}u, g-Au-B(u,u))\\
&=(u,A^{2s}g)-|A^{(2s+1)/2}u|^2-(A^{2s}u,B(u,u)),
\end{align*}
that is, $(A^{2s}u, B(u,u))$ is time independent, for all $s\in \mathbb{R}$.
\end{proof}
Now, for $\lambda=2$, by (\ref{rel_for_mu's}), we must have $\mu_{-}=1$, and $\mu_{+}=-\beta$. Hence, solving $A^2u=\gamma g+\beta u+\alpha Au$, using (\ref{rel_for_u}), we have,
\begin{align*}
\gamma g+\beta u+\alpha Au&=\gamma g+\beta(\frac{E}{G^2}g+u_{+}+u_{-})\\
&+ \alpha(\frac{2E}{G^2}g-\beta u_{+}+u_{-})\\
&=A^2u=\frac{4E}{G^2}g+\beta^2 u_{+}+u_{-}.
\end{align*}
from which one gets,
\begin{align}
\label{sum_of_them}
\alpha+\beta=1,
\end{align}
and
\begin{align}
\label{for_constancy}
\frac{4E}{G^2}=\gamma+\beta \frac{E}{G^2}+2\alpha \frac{E}{G^2}.
\end{align}
It follows from the definition of $\alpha$, $\beta$, and $\gamma$,  expressions (\ref{sum_of_them}) and (\ref{for_constancy}), that, 
\begin{align}
\label{for_three_halves}
2P-|A^{3/2}u|^2=\frac{-1}{(1-e/E)(2e-E)^{-1}-(G^2-P)^{-1}}.
\end{align}

After replacing (\ref{for_three_halves}), we get,
\begin{align}
\label{beta_indep}
\beta=\frac{-1}{1-\frac{e}{E}-\frac{2e-E}{G^2-P}}.
\end{align}

Expression (\ref{beta_indep}) tells us where we should look in order to identify a possible chained ghost when $\lambda=2$.
\begin{theorem}
\label{parabola_from_beta}
When $\lambda=2$, the projection onto the $e$-$E$ plane of each chained ghost solution lies on the following parabolic curve, 
\begin{align}
\label{curve_mu}
(2-\mu_+)E^2+(\mu_+-1)G^2E-\mu_+G^2e=0,
\end{align}
for some $\mu_+>2.$
\end{theorem}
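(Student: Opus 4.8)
The plan is to extract the parabolic relation directly from the identity \eqref{beta_indep}, using the fact that $\beta$ is time independent for a chained ghost solution together with the relation $\mu_+ = -\beta$ established above (valid when $\lambda = 2$, since then $\mu_- = 1$ by \eqref{rel_for_mu's} and $\beta = -\mu_+\mu_-$). First I would set $\mu_+ = -\beta$ and rewrite \eqref{beta_indep} as
\begin{align*}
\frac{1}{\mu_+} = 1 - \frac{e}{E} - \frac{2e-E}{G^2-P}.
\end{align*}
The only quantity on the right-hand side that is not directly an $e,E,G^2$-expression is $P$; so the key step is to eliminate $P$. For this I would invoke \eqref{relation_2} and Theorem \ref{comparison_relation}, but more efficiently I would use the chained decomposition \eqref{rel_for_u} of Theorem \ref{constancy}: expanding $P = |Au|^2 = \mu_+^2|u_+|^2 + \mu_-^2|u_-|^2 + \lambda^2 (E/G^2)^2 G^2$ with $\lambda = 2$, $\mu_- = 1$, and substituting $|u_+|^2$, $|u_-|^2$ from \eqref{u_plus}, \eqref{u_minus}, one gets $P$ (hence $G^2 - P$) as an explicit rational expression in $e, E, G^2$ and $\mu_+$.

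Concretely, with $\lambda=2$, $\mu_-=1$, formulas \eqref{u_plus}--\eqref{u_minus} give $|u_+|^2 = \frac{1}{\mu_+(\mu_+-1)} E(1-P/G^2)$ and $|u_-|^2 = \frac{\mu_+-2}{\mu_+-1} E(1-P/G^2)$, so that
\begin{align*}
1 - \frac{P}{G^2} = \frac{1}{G^2}\Big( G^2 - \mu_+^2|u_+|^2 - |u_-|^2 - 4\tfrac{E^2}{G^2}\Big),
\end{align*}
and plugging the above expressions for $|u_\pm|^2$ into the right-hand side yields a linear equation for $1-P/G^2$ whose solution is a rational function of $e,E,G^2,\mu_+$ only. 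Substituting this back into the displayed form of $1/\mu_+$ turns that identity into a polynomial relation among $e, E, G^2$; after clearing denominators and collecting terms this should collapse to
\begin{align*}
(2-\mu_+)E^2 + (\mu_+-1)G^2 E - \mu_+ G^2 e = 0,
\end{align*}
which is exactly \eqref{curve_mu}. Finally, $\mu_+ > 2$ is precisely \eqref{rel_for_mu's} (since $\lambda = 2$).

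The main obstacle is purely computational bookkeeping: carrying the elimination of $P$ through without sign or algebra errors, since several of the intermediate expressions (the denominators $G^2-P$, $2e-E$, and $\mu_+-1$) must be tracked simultaneously. One should double-check at the end that $E < 2e$ and $P < G^2$ (guaranteed by \eqref{relation_for_constants}) keep all denominators nonzero, so that no spurious branch is introduced, and that the resulting curve is genuinely a parabola in the $(e,E)$-plane for each fixed admissible $\mu_+$. There is no conceptual difficulty beyond this; all the structural input — the three-mode decomposition, the constancy of $\beta$, and $\mu_- = 1$ — is already in hand from Theorem \ref{constancy} and \eqref{beta_indep}.
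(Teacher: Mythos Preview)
Your approach is correct and follows the same line as the paper: start from \eqref{beta_indep}, set $\mu_+=-\beta$, and clear denominators to land on \eqref{curve_mu}, with $\mu_+>2$ coming from \eqref{rel_for_mu's}.

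However, your elimination of $P$ is a detour. You mention \eqref{relation_2} and then set it aside as less efficient than the three-mode decomposition --- but \eqref{relation_2} is the whole story: it gives $P=\lambda E=2E$ in one stroke. Substituting $G^2-P=G^2-2E$ into
\[
\frac{1}{\mu_+}=1-\frac{e}{E}-\frac{2e-E}{G^2-2E}
\]
and clearing the denominator $E(G^2-2E)$ immediately yields \eqref{curve_mu}. Your route through \eqref{u_plus}--\eqref{u_minus} does work (and, if you carry it out, you will find it simply reproduces $P=2E$), but it is circular: those formulas themselves were derived using the relations \eqref{relation_1}--\eqref{relation_2}. The paper does exactly the direct substitution, though it leaves the step ``which can be rewritten as'' implicit.
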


\begin{figure}[h!]
  {\includegraphics[width=.5\linewidth]{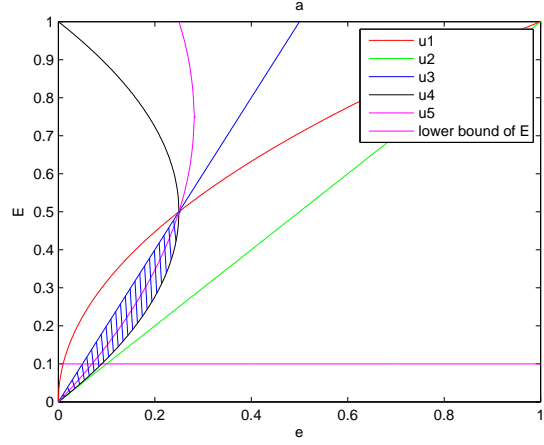}}\hfill
\caption{$\mu_{+}=4$,$u_5: 2E^2-3E+4e=0.$}
\label{figure_one}
\end{figure}

\begin{figure}[h!]
  {\includegraphics[width=.5\linewidth]{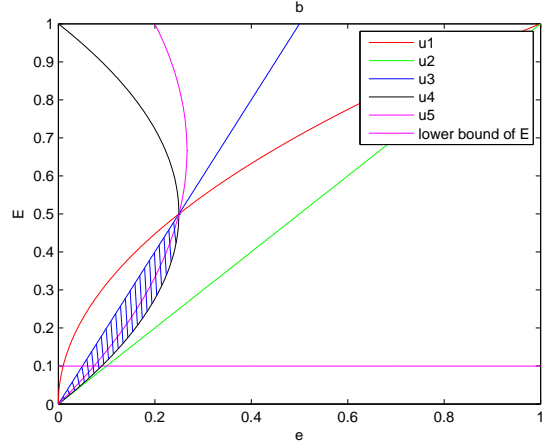}}\hfill
\caption{$\mu_{+}=5$,$u_5: 3E^2-4E+5e=0.$}
\label{figure_two}
\end{figure}

\begin{figure}[h!]
  {\includegraphics[width=.5\linewidth]{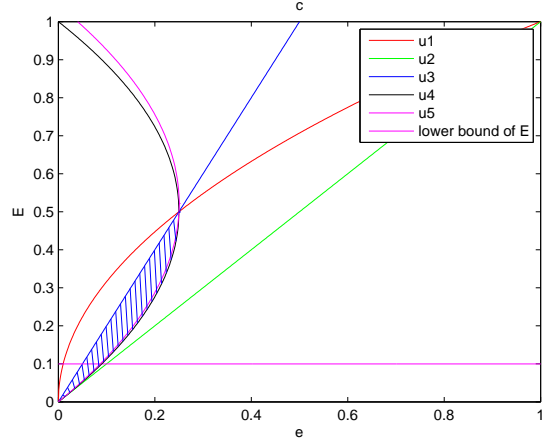}}\hfill
\caption{$\mu_{+}=25$,$u_5: 23E^2-24E+25e=0.$}
\label{figure_three}
\end{figure}

\begin{figure}[h!]
  {\includegraphics[width=0.5\linewidth]{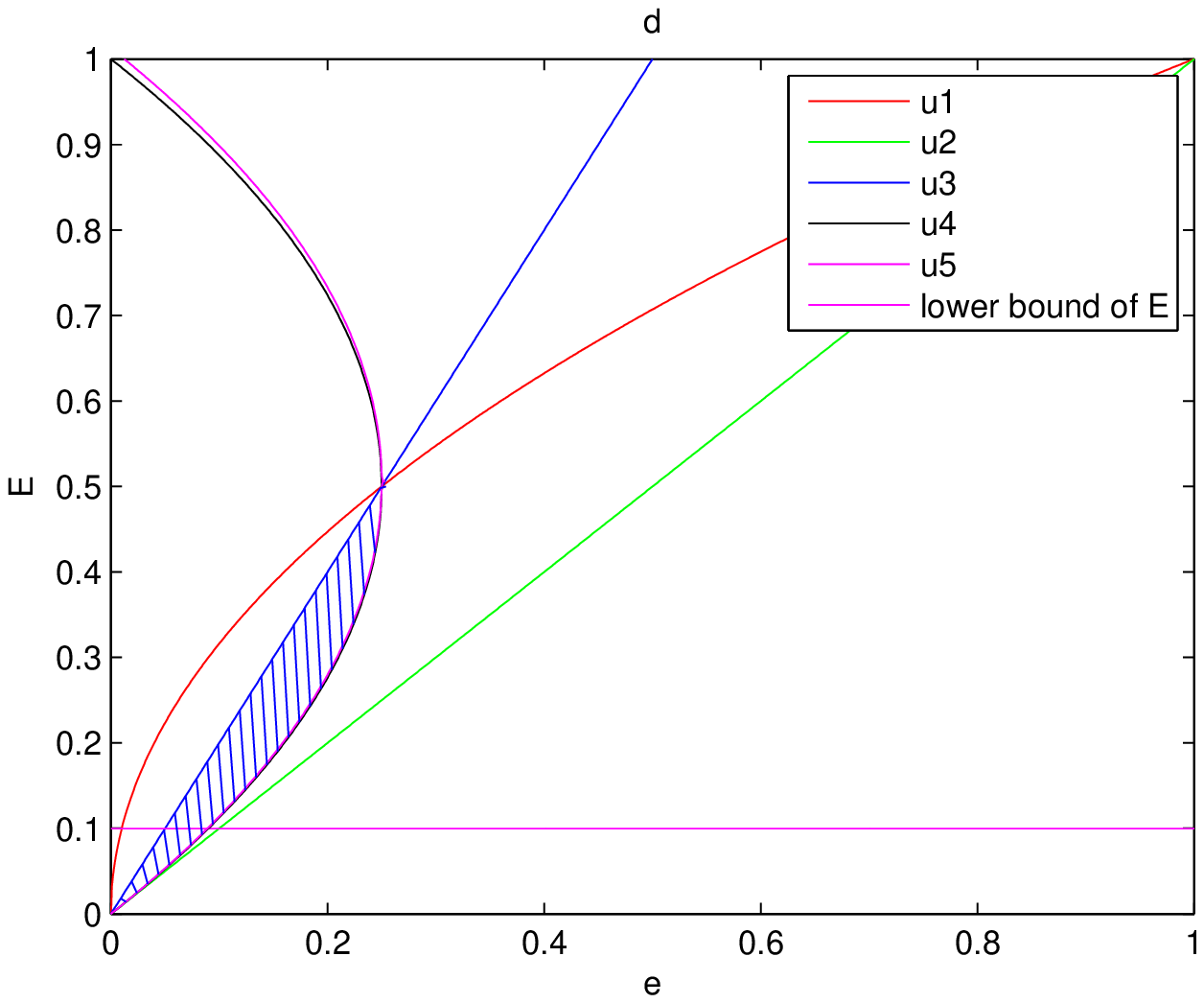}}\hfill
\caption{$\mu_{+}=81$. $u_5: 79E^2-80E+81e=0.$}
\label{figure_four}
\end{figure}

\begin{proof}
First notice that we need $\beta<0$, so from (\ref{beta_indep}), we need, 
\begin{align*}
1-\frac{e}{E}-\frac{2e-E}{G^2-P}>0,
\end{align*}
which is equivalent to 
\begin{align}
\label{parabola_for_ghost}
E^2-G^2E+eG^2<0.
\end{align}
Hence, we have
\begin{align*}
G^4-4eG^2>0,
\end{align*}
equivalently,
\begin{align}
\label{for_e}
e<\frac{G^2}{4};
\end{align}
Also, 
\begin{align*}
\frac{G^2-\sqrt{G^4-4eG^2}}{2}<E<\frac{G^2+\sqrt{G^4-4eG^2}}{2},
\end{align*}
since $E<2e<G^2/2$, we only need to consider
\begin{align}
\label{for_G}
E>\frac{G^2-\sqrt{G^4-4eG^2}}{2}.
\end{align}
Therefore, we got that, if $A^2u(t)\in H_{012}$ for all $t \in \mathbb{R}$, then the projection of the solution $u(t)$ onto the $e$-$E$ plane will be enclosed by the parabola $\{(e,E): E>\frac{G^2-\sqrt{G^4-4eG^2}}{2}\}$, and the classic parabola for the projection of the global attractor, namely, $\{(e,E): E<G \sqrt{e} \}$.\\
Now, to get our result, we use the relation $\mu_+=-\beta$, and
(\ref{beta_indep}), we get
\begin{align}
\label{formula_for_mu_plus}
\mu_+=\frac{1}{1-\frac{e}{E}-\frac{2e-E}{G^2-P}},
\end{align}
which can be rewritten as
\begin{align*}
(2-\mu_+)E^2+(\mu_+-1)G^2E-\mu_+G^2e=0.
\end{align*} for some $\mu_+>2.$

For different values of $\mu_+$, the curve on $e$-$E$ plane defined by (\ref{curve_mu}) will move positions. However, as one could easily check that there are always two intersection points of the curve defined by (\ref{curve_mu}) and the one defined by 
\begin{align}
\label{curve_for_parabola}
E^2-G^2E+eG^2=0,
\end{align}
(see (\ref{parabola_for_ghost})), namely, $(0,0)$ and $(e_*,E_*)=(G^2/4,G^2/2)$; Also, in the limit case when $\mu_+\rightarrow \infty$, the curve (\ref{curve_mu}) becomes (\ref{curve_for_parabola}).

We draw different curves discussed so far in Figure~ {\ref{figure_one}}-Figure~ {\ref{figure_four}} (where we take $G=1$): All graphs have $u_1: E=\sqrt{e}$; $u_2: E=e$; $u_3: E=2e$;
$u_4: E^2-E+e=0$.  In these figures, we also show the lower bound for the enstrophy given by (\ref{lower_bd_for_ghost}) in Theorem \ref{low_bd_ghost}.
\end{proof}


\section{A Galerkin system}

The decomposition (\ref{decomp_for_u_two}) in Theorem \ref{constancy} can be used to exploit the relations between different wavevectors when NSE is written in the form of Fourier modes, as shown in the next theorem.
\begin{theorem}
\label{finite_ode_sys}
Any chained ghost solution $u(t)$ satisfies the following 
Galerkin system,
\begin{align}
\label{fourier_coef_ode}
\frac{d}{dt}\hat{u}(k,t)=\hat{g}(k)-|k|^2\hat{u}(k,t)-[Q_k(\hat{u},\hat{u})-\frac{Q_k(\hat{u},\hat{u})\cdot k}{|k|^2}k],
\end{align}
where 
\begin{align*}
Q_k(\hat{u},\hat{u})=\sum_{\substack{h\in \mathbb{Z}^2\setminus\{0\}\\ |h|^2,|k-h|^2 \in \{\lambda, \mu_+,\mu_-\}}}i(\hat{u}(h,t)\cdot k)\hat{u}(k-h,t), \text{ for } |k|^2 \in\{\lambda, \mu_+,\mu_-\}
\end{align*}
and
\begin{align*}
&\hat{u}(k,t)=0, \text {if } |k|^2\notin \{\lambda, \mu_+,\mu_-\},\\
&\hat{g}(k)=0, \text{if } |k|^2\neq \lambda.
\end{align*}
\end{theorem}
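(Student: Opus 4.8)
The plan is to derive the Galerkin system directly from the eigenvector decomposition of a chained ghost solution obtained in Theorem \ref{constancy}, together with the Fourier–mode form (\ref{fourier_nse}) of the NSE. First I would record the consequence of (\ref{decomp_for_u}): since $u(t)=u_+(t)+u_-(t)+\eta g$ with $u_\pm(t)=E_{\mu_\pm}u(t)$, $\eta=E/G^2$, and $g$ an eigenvector of $A$ with eigenvalue $\lambda$, and since the eigenspace $E_\mu H$ is spanned exactly by the Fourier modes $e^{ik\cdot x}$ with $|k|^2=\mu$ (recall $\kappa_0=1$), the Fourier coefficients satisfy $\hat u(k,t)=0$ whenever $|k|^2\notin\{\lambda,\mu_+,\mu_-\}$, which is one of the asserted conclusions. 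The relation $\hat g(k)=0$ for $|k|^2\neq\lambda$ is immediate from $Ag=\lambda g$.

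Next I would substitute $u(t)$ into (\ref{fourier_nse}) and simplify the nonlinear term for a fixed $k$ with $|k|^2\in\{\lambda,\mu_+,\mu_-\}$. From (\ref{coef_for_B}), $\hat B(u,u)(k)$ equals the orthogonal-to-$k$ (Leray) projection of the vector $W_k:=i\sum_{j}(\hat u(k-j)\cdot j)\,\hat u(j)$. In $W_k$ I change the summation variable to $h=k-j$; since $\hat u(h)\cdot h=0$ by the divergence-free condition (\ref{zero_div}), we have $\hat u(h)\cdot(k-h)=\hat u(h)\cdot k$, so $W_k=i\sum_{h}(\hat u(h)\cdot k)\,\hat u(k-h)$. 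A summand is nonzero only when both $\hat u(h)$ and $\hat u(k-h)$ are nonzero, i.e. only when $|h|^2,|k-h|^2\in\{\lambda,\mu_+,\mu_-\}$; hence $W_k=Q_k(\hat u,\hat u)$, and applying the Leray projection gives $\hat B(u,u)(k)=Q_k(\hat u,\hat u)-\frac{Q_k(\hat u,\hat u)\cdot k}{|k|^2}k$. Inserting this into (\ref{fourier_nse}) yields (\ref{fourier_coef_ode}). Since each shell $\{k\in\mathbb{Z}^2:|k|^2=\mu\}$ contains only finitely many lattice points, this is a genuine finite-dimensional Galerkin system; for consistency one also notes from (\ref{bilinear_two}) that $B(u,u)$ is itself supported on these three shells, so the equations for modes off $\{\lambda,\mu_+,\mu_-\}$ are identically $0=0$ and the system closes.

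I do not anticipate a serious obstacle: once Theorem \ref{constancy} is available, the argument is essentially bookkeeping. The only point requiring care is the change of summation variable combined with the use of (\ref{zero_div}) to replace $\hat u(k-j)\cdot j$ by $\hat u(h)\cdot k$, so that the truncated convolution matches the precise form of $Q_k$ stated in the theorem; one must also check that truncating the sum to the three shells commutes with the Leray projection onto $\{k\}^{\perp}$, which holds because that projection acts linearly on the single output mode at wavevector $k$.
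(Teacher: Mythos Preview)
Your proposal is correct and takes essentially the same approach as the paper: both rely on the eigenspace decomposition of Theorem \ref{constancy} together with the Fourier form (\ref{fourier_nse})--(\ref{coef_for_B}) of the NSE. The paper first projects the functional NSE onto the three eigenspaces $E_{\lambda},E_{\mu_+},E_{\mu_-}$ to obtain an intermediate system (\ref{big_ode}) before passing to Fourier coefficients, whereas you work directly at the Fourier level and make the change of variable $h=k-j$ and the use of (\ref{zero_div}) explicit, but the substance is the same.
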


\begin{proof}
It follows immediately from the NSE (\ref{f_f_NSE}),
\begin{align}
\label{big_ode}
\left \{\begin{matrix}
\lambda \eta g+E_{\lambda}B(u,u)=g\\
\frac{d}{dt}u_++\mu_+ u_++E_{\mu_+}B(u,u)=0\\
\frac{d}{dt}u_-+\mu_-u_-+E_{\mu_-}B(u,u)=0
\end{matrix} \right.
\end{align}
Notice that (\ref{big_ode}) is a finite system of differential algebraic equations involving the unknown components for $u_+$ and $u_-$, where $\mu_-<\lambda<\mu_+$.
Clearly, $E_{\mu}B(u,u)=0$, for all $\mu \notin \{\lambda, \mu_+,\mu_-\}$. When (\ref{big_ode}) is written in terms of the Fourier coefficients, invoking (\ref{fourier_nse}) and (\ref{coef_for_B}), we get (\ref{fourier_coef_ode}).\\
\end{proof}
\begin{remark}
Denote $\hat{E}=E_{\mu_+}+E_{\mu_-}+E_{\lambda}$, then we could see from the proof of Theorem \ref{finite_ode_sys} that \begin{align*}
(1-\hat{E})B(u,u)\equiv 0, 
\end{align*}
for  $u$ being a chained ghost solution.
\end{remark}

Theorem \ref{finite_ode_sys} implies the following geometric characterization for chained ghosts. 

\begin{theorem}
\label{ghost_extend_char}
If $u(t)$ is a solution of the Galerkin system in the interval $(0, \bar{t})$ for some $\bar{t}>0$, satisfying 
$(1-\hat{E})B(u(t),u(t))\equiv 0$, for any $t \in (0, \bar{t})$, then $u(t)$ can be extended to be a chained ghost solution.
\end{theorem}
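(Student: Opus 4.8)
The plan is to reverse the derivation in Section 6: starting from a solution $u(t)$ of the Galerkin system on $(0,\bar t)$ with $(1-\hat E)B(u(t),u(t))\equiv 0$, I first check that $u$ lies in the right finite-dimensional subspace and has the algebraic structure forcing the chained relation, then verify that it satisfies the full NSE (so extends to a global solution on the attractor), and finally confirm that the extension is a genuine ghost solution, i.e. $\dot e\equiv\dot E\equiv 0$.

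First I would observe that, by the support hypothesis on $\hat u$ built into the Galerkin system (Theorem \ref{finite_ode_sys}), $u(t)\in \operatorname{Ran}\hat E = E_{\mu_+}H\oplus E_{\mu_-}H\oplus E_\lambda H$ for every $t\in(0,\bar t)$, so $u=u_++u_-+\eta(t)g$ with $u_\pm(t)=E_{\mu_\pm}u(t)$. Plugging this into the Galerkin system and using that the $E_\lambda$-component of the equation reads $\dot\eta\, g + \lambda\eta g + E_\lambda B(u,u) = g$, together with the assumed identity $(1-\hat E)B(u,u)=0$ (so that $B(u,u)$ is supported only on $\{\lambda,\mu_+,\mu_-\}$), I recover exactly the system \eqref{big_ode}; matching this with the full NSE \eqref{f_f_NSE} componentwise shows $u(t)$ solves \eqref{f_f_NSE} on $(0,\bar t)$. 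Since every solution of the NSE is uniquely determined by initial data and the relevant Galerkin subspace is invariant (the nonlinearity keeps the support inside $\{\lambda,\mu_+,\mu_-\}$ precisely because of the $(1-\hat E)B(u,u)\equiv0$ assumption, which must be shown to propagate, not just hold initially), the solution extends to all $t\in\mathbb R$ and, being bounded (it lives in a fixed finite-dimensional sphere by the algebra below), lies in $\mathcal A$.

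Next I would pin down the constancy of $e$ and $E$. Taking inner products of \eqref{big_ode} with $g$, $u$, and $Au$ and using \eqref{orthog_rel}--\eqref{B_u_u_Au} gives $\dot\eta = 0$ hence $\eta$ constant, and then the same computations that produced \eqref{one_for_u}, \eqref{two_for_u}, \eqref{intermediate_lambda} in the proof of Theorem \ref{constancy} run in reverse: from $(B(u,u),u)=0$ and $(B(u,u),Au)=0$ (which hold identically by \eqref{B_u_v_v} and \eqref{B_u_u_Au}) one derives that $|u_+|^2$, $|u_-|^2$ are determined by $E$, and combined with the ODEs $\dot u_\pm + \mu_\pm u_\pm + E_{\mu_\pm}B(u,u)=0$ one gets $\tfrac12\tfrac{d}{dt}|u_\pm|^2 = -\mu_\pm|u_\pm|^2 - (E_{\mu_\pm}B(u,u),u_\pm)$; summing and using the enstrophy identity $(B(Au,u),u)=(B(u,u),Au)=0$ forces $\dot E=0$, and then $\dot e = 0$ follows from the energy balance \eqref{en_bal} combined with \eqref{relation_1}-type relations. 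Once $e,E,P$ are constant, the defining relations \eqref{relation_1}, \eqref{relation_2} hold, $u$ is a ghost solution, and the algebraic decomposition it satisfies is exactly the chained relation \eqref{chain_rel} with the $\alpha,\beta,\gamma$ built from $\mu_\pm,\lambda$; nonstationarity follows because $u_+,u_-\ne 0$ (otherwise $u$ reduces to an eigenvector, contradicting Proposition \ref{prop_1}).

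The main obstacle I anticipate is the propagation/invariance step: one must ensure that the condition $(1-\hat E)B(u(t),u(t))\equiv 0$, assumed on $(0,\bar t)$, is compatible with the NSE flow so that the reconstructed $u$ genuinely stays in the three-shell subspace for all time rather than leaking energy into other wavenumbers — equivalently, that the Galerkin system really is the full NSE restricted to an invariant manifold. The hypothesis is stated as an identity on the whole interval (not just at one instant), which is exactly what is needed, but making the ``extend to all $t\in\mathbb R$'' rigorous requires invoking backward uniqueness / analyticity of attractor solutions (\cite{Temam83}) and checking that the a priori bound keeping $u$ in $\mathcal A$ is available; the rest is the reversible bookkeeping of Section 6.
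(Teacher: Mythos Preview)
Your plan coincides with the paper's in outline: use the hypothesis $(1-\hat E)B(u,u)\equiv 0$ to see that the Galerkin system is the full NSE on $(0,\bar t)$, check constancy of energy and enstrophy, and extend by analyticity. The paper, however, carries this out in two sentences: once the hypothesis identifies $\hat E B(u,u)$ with $B(u,u)$, it simply takes the inner product of \eqref{fourier_coef_ode} with $u$ and with $Au$ (invoking \eqref{B_u_v_v} and \eqref{B_u_u_Au}) to obtain constancy of $|u|^2$ and $\|u\|^2$ on $(0,\bar t)$, and then cites analyticity to extend and to propagate the hypothesis $(1-\hat E)B(u,u)=0$ to all $t\in\mathbb R$. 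None of your detours---the explicit three-component decomposition, the attempt to run the computations of Theorem~\ref{constancy} in reverse, the separate algebraic verification of the chained relation, the nonstationarity check---appear in the paper's argument; in particular, the propagation/invariance obstacle you flag is exactly what the single appeal to analyticity is meant to dispatch.

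One genuine gap in your outline: writing $u=u_++u_-+\eta(t)g$ presupposes that $E_\lambda u$ is a scalar multiple of $g$, which is not part of the hypothesis of the Galerkin system (the $E_\lambda$-eigenspace is in general multi-dimensional). The paper's argument never needs this decomposition. A second issue is that your constancy argument is circular as written: you invoke formulas like \eqref{one_for_u}--\eqref{intermediate_lambda}, which were derived \emph{assuming} $e,E,P$ constant, to conclude that $|u_\pm|^2$ are ``determined by $E$'' and thence that $\dot E=0$. The paper avoids this by going straight to the inner products with $u$ and $Au$, without passing through the eigencomponent bookkeeping at all.
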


\begin{proof}
Taking the inner product of (\ref{fourier_coef_ode}) with $\hat{u}(t)$ in the Fourier space gives that $|\hat{u}(t)|$ is constant in $t\in (0, \bar{t})$; taking the inner product of (\ref{fourier_coef_ode}) with $A\hat{u}(t)$ in the Fourier space gives $|A^{1/2}\hat{u}(t)|$ is also constant in $t\in (0, \bar{t})$.

By the analyticity of $u(t)$ and $(1-\hat{E})B(u(t),u(t))\equiv 0$, for any $t \in (0, \bar{t})$, we conclude that $u(t)$ can be extended to be a chained ghost solution.
\end{proof}

\begin{remark}
On account of Theorem \ref{finite_ode_sys} and Theorem \ref{ghost_extend_char}, one can use the following computational procedure to check, for given a driving force $g$ and an initial data $u_0$, whether the corresponding solution to the NSE (\ref{f_f_NSE}) is a chained ghost solution. 

First, one solves the Galerkin version ODE obtained by compressing the NSE (\ref{f_f_NSE}) to the space spanned by the eigenvectors corresponding to the eigenvalues $\mu_-$, $\lambda$ and $\mu_+$; to be more precise, one solves the initial value problem for the following ODE over $\hat{E}H$,
\begin{align}
\label{spe_G}
\frac{d}{dt}u(t)+Au(t)+\hat{E}B(u(t),u(t))=\hat{E}g=g,                                                       
\end{align}
with initial value $u(0)=\hat{E}u_0$.

Then, one need check only whether
\begin{align*}
|\eta\prime(t)|<<1,
\end{align*}
and 
\begin{align*}
|A^2u(t)-\gamma g-\beta u(t)-\alpha Au(t)|<<\nu\kappa_0(=1),
\end{align*}
both hold, where $\eta(t):=||u(t)||^2/G^2$, and $\alpha$, $\beta$ and $\gamma$ are as given in Proposition \ref{summerizing_result}.
\end{remark}

Recall that for ghost solutions, we have the decomposition (\ref{decomp_for_u}), namely,
\begin{align*}
u(t)=u_+(t)+u_-(t)+\eta g, \forall t\in \mathbb{R}
\end{align*}
where $u_+ \in E_{\mu_+}$, $u_- \in E_{\mu_-}$ and $g \in E_{\lambda}$, with $\mu_-<\lambda<\mu_+$. We now prove that, using the equation (\ref{spe_G}) (or (\ref{big_ode})), we could identify the value of $\mu_+$, when $\lambda=2$.

\begin{proposition}
\label{mu_plus_value}
Assume $\lambda=2$, then $\mu_+=5$, for any chained ghost $u(t)$, with the decomposition (\ref{decomp_for_u}).
\end{proposition}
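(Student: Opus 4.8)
### Proof proposal for Proposition 7.6 ($\lambda = 2 \Rightarrow \mu_+ = 5$)

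The plan is to exploit the wavevector bookkeeping forced by the Galerkin system of Theorem \ref{finite_ode_sys} together with the decomposition $u = u_+ + u_- + \eta g$ from Theorem \ref{constancy}. Since $\lambda = 2$ and $\mu_- < \lambda < \mu_+$ with $\mu_-, \mu_+ \in sp(A) \subseteq \{1, 2, 3, \dots\}$, relation \eqref{rel_for_mu's} already pins down $\mu_- = 1$. So the whole content is the identification $\mu_+ = 5$. I would set $\mu_+ = N$ for an unknown integer $N \geq 3$ and derive constraints on $N$ by examining which Fourier modes can be nonzero and how the bilinear term $B(u,u)$ distributes among the eigenspaces $E_1$, $E_2$, $E_N$.

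First I would record the structure of the active wavevectors: $u_-$ is supported on $\{k : |k|^2 = 1\}$, i.e. $k \in \{\pm e_1, \pm e_2\}$; $g$ (and hence $E_\lambda u$) is supported on some subset of $\{k : |k|^2 = 2\}$, i.e. among $\{(\pm 1, \pm 1)\}$; and $u_+$ is supported on $\{k : |k|^2 = N\}$. The key mechanism is equation \eqref{big_ode}: the middle and bottom lines say $E_{\mu_+}B(u,u)$ and $E_{\mu_-}B(u,u)$ are genuinely present (they balance $\dot u_\pm + \mu_\pm u_\pm$, which are nonzero since $u$ is nonstationary and never an eigenvector by Proposition \ref{prop_1}), while the top line forces $E_\lambda B(u,u) = (1 - \lambda\eta)g$. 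Now $B(u,u)$ is a sum of dyadic interactions $B(u_a, u_b)$ with $a, b \in \{+,-,0\}$, and the interaction of modes with wavevectors $h$ and $k-h$ produces a mode with wavevector $k = h + (k-h)$. So I would tabulate, for each pair of source shells, which target shells $\{1, 2, N\}$ can be hit: e.g. $\{|h|^2=1\} + \{|h|^2 = 1\}$ can reach $|k|^2 \in \{0, 1, 2, 4\}$ (so hits shell $2$, not shell $N$ unless $N \in \{4\}$, and $0$ is excluded); $\{1\} + \{2\}$ reaches $|k|^2 \in \{1, 3, 5\}$; $\{2\}+\{2\}$ reaches $|k|^2 \in \{0, 2, 4, 8\}$; $\{1\} + \{N\}$ and $\{2\} + \{N\}$ reach shells near $N$; $\{N\} + \{N\}$ reaches $|k|^2 \in \{0, 2N, \dots\}$. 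The requirement that $E_{\mu_+}B(u,u) \neq 0$ means some admissible interaction must land on shell $N$ with $|k|^2 = N$; the requirement $E_{\mu_-}B(u,u)\neq 0$ means some admissible interaction must land on shell $1$. Since shell-$1$ vectors have the smallest length, the only way to reach $|k|^2 = 1$ from sums of active wavevectors is via a cancellation like $h + (k-h)$ with $|h|^2 = 2$, $|k-h|^2 = 1$ (giving $|k|^2 = 1$ is possible, e.g. $(1,1) + (0,-1) = (1,0)$), or $|h|^2 = N$, $|k-h|^2$ something — and one checks the geometry carefully. Matching these against the feasible arithmetic for $N$ should eliminate all values except $N = 5$.

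Concretely, I expect the decisive constraint to come from two directions simultaneously. From the $E_{\mu_+}$ line: to generate a mode of squared length $N$ we need a sum $h + (k-h)$ of active wavevectors; the cheapest options are $\{1\}+\{1\}$ (gives at most $|k|^2 = 4$), $\{1\}+\{2\}$ (gives $|k|^2 \in \{1,3,5\}$), $\{2\}+\{2\}$ (gives $|k|^2 \in \{4, 8\}$ among the reachable values plus $2, 0$), so if $N$ is not itself a sum of two small-shell vectors we'd need $\{1\}+\{N\}$ or $\{2\}+\{N\}$ or $\{N\}+\{N\}$; but $\{N\}+\{N\}$ can give squared length $N$ only in degenerate ways, and $\{1\}+\{N\}$, $\{2\}+\{N\}$ give squared lengths of the form $N \pm 2\langle\text{cross term}\rangle + 1$ or $+2$, which can equal $N$ only for specific cross terms. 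The cleanest surviving possibility is $N = 5 = 1 + 2 + (\text{cross term } 2)$, realized by $(1,0) + (1,1) = (2,1)$, $|(2,1)|^2 = 5$ — exactly the $\{1\}+\{2\} \to \{5\}$ interaction. Then from the $E_{\mu_-}$ line one checks that with $N = 5$ the interaction $\{5\} + \{2\} \to \{1\}$ and $\{5\}+\{5\}\to$ low shells, plus $\{2\}+\{1\}\to\{1\}$, indeed can feed shell $1$. I'd then verify that no other value of $N$ admits a consistent set of interactions feeding both $E_{\mu_+}$ and $E_{\mu_-}$ nontrivially (candidates like $N = 3, 4$ need separate quick dispatch: $N = 4$ is reachable as $\{1\}+\{1\}\to\{4\}$ and $\{2\}+\{2\}\to\{4\}$, so $N=4$ must be excluded by a finer argument, perhaps using \eqref{u_plus}–\eqref{u_minus} positivity together with $\lambda = 2$, or by noting shell $1$ cannot then be fed, or by the parabola constraint of Theorem \ref{parabola_from_beta}).

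The main obstacle I anticipate is ruling out the competing value $N = 4$ (and possibly $N = 3$): these are small enough that naive wavevector-sum feasibility does not immediately kill them, so the elimination will require combining the Galerkin arithmetic with the quantitative constraints already available — the positivity and exact formulas \eqref{u_plus}, \eqref{u_minus} for $|u_\pm|^2$, the relation $\mu_+ = -\beta$ with \eqref{beta_indep}, and the parabola \eqref{curve_mu} — to show that any putative chained ghost with $\mu_+ \in \{3,4\}$ forces $u = u_*$ or an algebraic contradiction with \eqref{relation_for_constants}. Once $N=4$ is dispatched, showing $\mu_+ = 5$ is actually realizable-shaped (i.e. the only candidate left standing) follows from the interaction table, and the proposition is proved; this then sets up Section 8's nonexistence argument, which must still show that even the $\mu_+ = 5$ system has no nontrivial solution.
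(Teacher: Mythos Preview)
Your overall instinct—wavevector arithmetic on the Galerkin system—is right, but the proposal as written has a genuine gap, and the obstacles you flag are not the real ones.

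First, a few easy corrections. The value $N=3$ is not in $sp(A)$ at all (3 is not a sum of two squares), so it needs no argument. Your interaction $\{1\}+\{2\}$ reaches only $|k|^2\in\{1,5\}$, not $\{1,3,5\}$, for the same reason. And the contributions $\{1\}+\{1\}$ and $\{2\}+\{2\}$ that make you worry about $N=4$ are irrelevant: they come from $B(u_-,u_-)$ and $B(g,g)$, both of which vanish because $B(v,v)=0$ for eigenvectors. So $N=4$ falls immediately once you use that identity.

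The actual gap is elsewhere. Your criterion is ``$E_{\mu_+}B(u,u)\neq 0$ forces $N=5$,'' but this is false as stated: the cross term $B(g,u_+)+B(u_+,g)$ can land on shell $N$ whenever there exist $h,j$ with $|h|^2=2$, $|j|^2=N$, $h\cdot j=-1$, which happens for $N\in\{5,13,25,\dots\}$. So analyzing $E_{\mu_+}B(u,u)$ as an element of $H$ does not isolate $N=5$; it leaves an infinite family open, and your $E_{\mu_-}$ constraint does not help, since $\{1\}+\{2\}\to\{1\}$ feeds shell 1 regardless of $N$.

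The paper's fix is to pair the $u_+$–equation with $u_+$ itself: compute
\[
\tfrac{1}{2}\tfrac{d}{dt}|u_+|^2+\mu_+|u_+|^2=-(B(u,u),u_+),
\]
and then expand the right side using $(B(a,b),b)=0$, $(B(a,b),c)=-(B(a,c),b)$, and $B(v,v)=0$ for eigenvectors. All terms die except $\eta(B(u_-,g),u_+)+\eta(B(g,u_-),u_+)$. Each of these is a sum over triads $h+j+k=0$ with $|h|^2=1$, $|j|^2=2$, $|k|^2=\mu_+$; since $|h+j|^2\in\{1,5\}$, the sum vanishes unless $\mu_+=5$. If $\mu_+\neq 5$ one gets $\frac{d}{dt}|u_+|^2=-2\mu_+|u_+|^2$, and boundedness on $\mathbb{R}$ forces $u_+\equiv 0$, contradicting \eqref{u_plus}. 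The inner product with $u_+$ is exactly what kills the $\{2\}+\{N\}\to\{N\}$ interaction that your approach cannot handle.
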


\begin{proof}
Taking the dot product of the equation in (\ref{big_ode}) for the component $u_+$ with $u_+$, one gets,
\begin{align*}
\frac{1}{2}\frac{d}{dt}|u_+(t)|^2+\mu_+|u_+|^2&=-(E_{\mu_+}B(u(t),u(t)), u_+(t))\\
&=-(B(u,u),u_+)\\
&=-(B(u_++u_-+\eta g, u_++u_-+\eta g), u_+)\\
&=-\eta (B(u_-, g), u_+)-\eta(B(g,u_-),u_+),
\end{align*}
where, in the last line above, we use that $B(v,v)=0$, for any eigenvector $v$ of the operator $A$ and the relation (\ref{orthog_rel}).

Consider the term $(B(u_,g),u_+)$ and express it in terms of its Fourier coefficients, 
\begin{align}
\label{to_get_mu}
(B(u_-,g),u_+)=\sum_{\substack{h,j,k\in\mathbb{Z}^2\setminus\{0\}\\h+j+k=0}}(\hat{u}_-(h)\cdot j)(\hat{g}(j)\cdot \hat{u}_+(k)).
\end{align}
On the right hand side of expression (\ref{to_get_mu}), for $\hat{u}_-(h)$ not to be zero, we need
\begin{align}
\label{nonzero_u_minus}
h\in S_1:= \{\begin{bmatrix}1\\
0
\end{bmatrix}, \begin{bmatrix} -1\\ 0 \end{bmatrix}, \begin{bmatrix} 0\\1 \end{bmatrix}, \begin{bmatrix} 0\\-1 \end{bmatrix}
\},
\end{align}
and for $\hat{g}(j)$ not to be zero, we need
\begin{align}
\label{nonzero_g}
j\in S_2:=\{\begin{bmatrix}1\\
1
\end{bmatrix}, \begin{bmatrix} -1\\ -1 \end{bmatrix}, \begin{bmatrix} -1\\1 \end{bmatrix}, \begin{bmatrix}1\\-1 \end{bmatrix}
\},
\end{align}
then, if $|k|^2\neq 5$, there is no combination of $h$ from (\ref{nonzero_u_minus}) and $j$ from (\ref{nonzero_g}) satisfying $h+j+k=0$, therefore, each term on the right hand of (\ref{to_get_mu}) will be zero if $|k|^2\neq 5$. Similar arguments work for the Fourier expansion of $(B(g,u_-),u_+)$. Hence, if $\mu_+\neq 5$, then $(B(g,u_-),u_+)\equiv 0$ and $(B(u_-,g),u_+)\equiv 0$, it follows then
\begin{align*}
\frac{1}{2}\frac{d}{dt}|u_+(t)|^2+\mu_+|u_+(t)|^2=0.
\end{align*}
Since $u(t)$ is bounded for all $t\in \mathbb{R}$ we have $|u_+(t)|^2\equiv 0$, which, combined with (\ref{intermediate_lambda}), also implies $|u_-(t)|^2\equiv 0$, so, $u(t)=\eta g$, a steady state. Therefore, the only possible value for $\mu_+$ is 5.
\end{proof}

\section{Nonexistence of chained ghost solutions when $\lambda=2$}
In this section, we will prove that actually, chained ghost solutions do not exist when $\lambda=2$.
The proof relies heavily on the relations (\ref{fourier_coef_ode}). To begin with, by the divergence free convergence condition (\ref{zero_div}) we may set,
\begin{align*}
&\hat{u}(k,t):=i\alpha(k,t)\frac{k^{\perp}}{|k|}\\
&\hat{g}(k):=i\gamma(k)\frac{k^{\perp}}{|k|}\\
&\hat{B}(u,u)(k):=i\beta(u,u)(k)\frac{k^{\perp}}{|k|}=[Q_k(\hat{u},\hat{u})-\frac{Q_k(\hat{u},\hat{u})\cdot k}{|k|^2}k],
\end{align*}
where $\alpha(k,t), \gamma(k)$ and $\beta(u,u)(k,t)$ are scalar functions; and, if $k=\begin{bmatrix} k_1\\k_2 \end{bmatrix}$, then $k^{\perp}=\begin{bmatrix} -k_2 \\k_1 \end{bmatrix}$. 

Notice that the reality condition (\ref{reality_cond}) implies
\begin{align}
\label{reality_for_alpha}
\alpha(-k,t)=\overline{\alpha(k,t)}.
\end{align}

We take the dot product on both sides of (\ref{fourier_coef_ode}) with $-i\frac{k^{\perp}}{|k|}$, using the above notation, to rewrite (\ref{fourier_coef_ode}) as
\begin{align}
\label{fourier_coef_ode2}
\frac{d}{dt}\alpha(k,t)=\gamma(k)-|k|^2\alpha(k,t)+\sum_{\substack{h,j \in \mathbb{Z}^2\setminus\{0\}\\h+j=k\\ |h|^2,|j|^2 \in \{\lambda, \mu_+,\mu_-\}}}\frac{\alpha(h,t)\alpha(j,t)(h^{\perp}\cdot j)(k\cdot j)}{|h||k||j|}, \text{ for } |k|^2 \in \{\lambda, \mu_+,\mu_-\}.
\end{align}\\
Denote the third term on the right hand side of (\ref{fourier_coef_ode2}) as $R_k$. Since $\mu_+=5, \lambda=2, \mu_-=1$, then,
\begin{align}
\label{condition for ka}
&\alpha(k,t)=0, \text{if } |k|^{2} \notin \{1,2,5\}\\
\label{condition for kb}
&\gamma(k)=0, \text{if } |k|^{2} \neq 2
\end{align}
Also, denote
\begin{align*}
S_{3}:=\{\begin{bmatrix}
1\\
2
\end{bmatrix},
\begin{bmatrix}
-1\\
-2
\end{bmatrix},
\begin{bmatrix}
1\\
-2
\end{bmatrix},\begin{bmatrix}
-1\\
2
\end{bmatrix},
\begin{bmatrix}
2\\
1
\end{bmatrix},
\begin{bmatrix}
-2\\
-1
\end{bmatrix},
\begin{bmatrix}
2\\
-1
\end{bmatrix},
\begin{bmatrix}
-2\\
1
\end{bmatrix}\}.
\end{align*}

\begin{theorem}
\label{no_chained}
When $\lambda=2$, there do not exist chained ghost solutions.
\end{theorem}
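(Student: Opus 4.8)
The strategy is to work entirely with the Galerkin system \eqref{fourier_coef_ode2} in the case $\lambda = 2$, $\mu_+ = 5$, $\mu_- = 1$ (the last two values being forced by Theorem \ref{parabola_from_beta}, Proposition \ref{mu_plus_value}, and the relation $\mu_- = 1$ from \eqref{rel_for_mu's}), and to derive a contradiction by showing the only solution consistent with all the constraints is a steady state. The active wave vectors live on the three circles $|k|^2 \in \{1,2,5\}$; the $|k|^2=1$ modes are indexed by $S_1$, the forcing/$|k|^2=2$ modes by $S_2$, and the $|k|^2=5$ modes by $S_3$. First I would catalogue, for each target $k$ with $|k|^2 \in \{1,2,5\}$, exactly which pairs $(h,j)$ with $h+j=k$ and $|h|^2,|j|^2 \in \{1,2,5\}$ actually contribute to $R_k$; this is a finite check, and one finds that $R_k$ is nonzero only for rather few $k$ because most sums of two short vectors have squared length outside $\{1,2,5\}$. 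In particular, for the equations governing the $u_-$ modes ($|k|^2=1$), I expect $R_k$ to involve only cross terms between an $S_2$-vector and an $S_3$-vector (a $1$-vector cannot be written as a sum of two vectors from $S_1 \cup S_2 \cup S_3$ that each lie on the allowed circles except via $2+(\text{small})$, which fails).

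\textbf{Key steps.} Second, I would use the fact (Corollary \ref{powers_of_A}, and the constancy of $|A^s u|$) that for a chained ghost all the modal energies $\sum_{|k|^2=m}|\alpha(k,t)|^2$ are constant in $t$; combined with the boundedness of $u$ on all of $\mathbb{R}$ this converts each scalar ODE in \eqref{fourier_coef_ode2} into an algebraic identity after averaging, or forces the transient terms to vanish. Third — the heart of the argument — I would exploit the geometric incompatibilities ("useful annihilations" alluded to in the introduction): for the $|k|^2 = 1$ equations the only possible feeding term pairs an $S_3$ mode with the forcing, and a careful count of which $(h,j)$ with $h \in S_3$, $j \in S_2$, $h+j \in S_1$ actually occur, together with the structure of the geometric coefficient $(h^\perp \cdot j)(k \cdot j)/(|h||k||j|)$, should show that the quadratic form driving $u_-$ either vanishes identically or can be diagonalized so that each $\alpha(k,t)$, $|k|^2=1$, satisfies a decoupled linear inhomogeneous ODE with no forcing, hence decays — contradicting $|u_-|$ being a nonzero constant (it is nonzero by \eqref{u_minus} unless $u = u_*$). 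Fourth, once $u_- \equiv 0$, \eqref{intermediate_lambda} forces $u_+ \equiv 0$ as well, so $u = \eta g$ is stationary, contradicting the definition of a ghost solution; hence no chained ghost exists when $\lambda=2$.

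\textbf{Main obstacle.} The delicate point is the third step: showing that the nonlinear feedback into the $|k|^2=1$ (and symmetrically $|k|^2=5$) modes cannot sustain a nonzero constant-energy solution. This requires tracking the precise set of admissible triangles $h+j=k$ on the three circles and checking the signs/magnitudes of the trilinear coefficients $(h^\perp\cdot j)(k\cdot j)/(|h||k||j|)$ — the annihilations are not automatic and depend on the specific integer geometry of $S_1, S_2, S_3$. I would organize this by fixing a representative $k \in S_1$, say $k=(1,0)^\top$, enumerating the finitely many $(h,j)$, writing the resulting $\tfrac12\frac{d}{dt}|u_+|^2 + \mu_+|u_+|^2 = (\text{cross terms})$ and the analogous identity for $u_-$, and then using the reality condition \eqref{reality_for_alpha} together with constancy of the modal energies to force the right-hand sides to zero. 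If any cross term survives, I would pair the $u_+$ and $u_-$ balance equations to extract an extra algebraic relation among $\mu_\pm$ and the $\alpha(k)$ that is incompatible with $\mu_+=5$, $\mu_-=1$. The expectation, matching the introduction's description, is that these combine to annihilate everything and return a steady state.
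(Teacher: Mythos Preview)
Your plan has a genuine gap: you are extracting constraints only from the \emph{active} modes $|k|^2\in\{1,2,5\}$, whereas the proof in the paper is driven entirely by the \emph{inactive} modes. A chained ghost must satisfy $(1-\hat E)B(u,u)\equiv 0$ (Remark after Theorem~\ref{finite_ode_sys}), i.e.\ $R_k=0$ for every $k$ with $|k|^2\notin\{1,2,5\}$. The paper writes these out for $|k|^2\in\{4,8,9,10,13\}$; the cases $|k|^2=10$ and $|k|^2=13$ produce the crucial pure product relations $\alpha(k_1)\alpha(k_2)=0$ (e.g.\ $\alpha\!\begin{pmatrix}0\\1\end{pmatrix}\alpha\!\begin{pmatrix}1\\2\end{pmatrix}=0$), and chaining a handful of these with the $|k|^2=8,9$ relations shows that a single nonzero $S_3$-mode kills every $S_1\cup S_2$-mode. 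Hence either $u_+\equiv 0$ or $u_-\equiv 0=g$, and one finishes as you do in Step~4. These are the ``useful annihilations'' the introduction refers to; they come from the \emph{leakage} constraints, not from the active-mode ODEs.

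Your Step~3 cannot close the argument. First, your structural claim is false: for $k\in S_1$, say $k=(1,0)^\top$, the nonlinear term $R_k$ receives $S_1\times S_2$ contributions (e.g.\ $(0,1)^\top+(1,-1)^\top$) in addition to $S_2\times S_3$ ones, so the ``decoupled linear ODE'' picture does not hold. Second, the modal energy balances you propose are exactly the computations already carried out in Proposition~\ref{mu_plus_value}: the identity $\tfrac12\tfrac{d}{dt}|u_\pm|^2+\mu_\pm|u_\pm|^2 = -(B(u,u),u_\pm)$ together with constancy of $|u_\pm|$ is what \emph{fixes} $\mu_+=5$; once $\mu_+=5$, $\mu_-=1$, those balances are consistent with nonzero $u_\pm$ and yield no further contradiction. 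The missing idea is to look at wave vectors \emph{outside} the three active circles.
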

\begin{proof}
Note that by
(\ref{fourier_coef_ode2}), (\ref{condition for ka}), (\ref{condition for kb}) we have $R_{k}=0$, if $|k|^2 \notin \{1,2,5\}$. Let $k$ be such that $|k|^2 \notin \{1,2,5\}$. In the following, we can get some algebric conditions from this fact.

We will examine all those combinations of the indices $h, j, k$ for which the single term $R_{k}(h,j):=\alpha(h,t) \alpha(j,t) (h^{\perp}\cdot j) (k \cdot j)$ is not zero.
Then, $h$, $j$ and $k$ must be in $S_1 \cup S_2 \cup S_3$. Moreover, the possible $k$'s satisfying $|k|^2 \notin \{1,2,5\}$ and $h+j=k$ are those $k$'s for which $|k|^2 \in\{4,8,9,10,13,18,20\}$.

It is easy to see that if $h$ and $j$ are from the same index set, say $S_l$, $l=1, 2, 3$, then $R_k=0$, so we do not have to consider the possibilities that $|k|^2=18$ and $|k|^2=20$, since then both $h$ and $j$ must be from the set $S_3$. Thus, we are left to consider the cases when $|k|^2 \in\{4,8,9,10,13\}$.

Case 1: $|k|^2=4$; then
\begin{align*}
k\in\{\begin{bmatrix}
2\\
0
\end{bmatrix},
\begin{bmatrix}
-2\\
0
\end{bmatrix},
\begin{bmatrix}
0\\
2
\end{bmatrix},
\begin{bmatrix}
0\\
-2
\end{bmatrix}\}.
\end{align*}
If $k=\begin{bmatrix}
2\\
0
\end{bmatrix}$, then the possible combinations are: 
\begin{enumerate}[(i).]
\item $h=\begin{bmatrix} 0 \\ 1 \end{bmatrix}$, $j=\begin{bmatrix} 2 \\ -1 \end{bmatrix}$, for which $R_{k}(h,j)=-\frac{4}{\sqrt{5}}\alpha (\begin{bmatrix} 0 \\ 1 \end{bmatrix},t) \alpha(\begin{bmatrix} 2 \\ -1 \end{bmatrix}, t)$;

\item $h=\begin{bmatrix} 0 \\ -1 \end{bmatrix}$, $j=\begin{bmatrix} 2 \\ 1 \end{bmatrix}$, for which $R_{k}(h,j)=\frac{4}{\sqrt{5}}\alpha (\begin{bmatrix} 0 \\ -1 \end{bmatrix},t) \alpha(\begin{bmatrix} 2 \\ 1 \end{bmatrix}, t)$; \\
\item $h=\begin{bmatrix} 2 \\ 1 \end{bmatrix}$, $j=\begin{bmatrix} 0 \\ -1 \end{bmatrix}$, for which $R_{k}(h,j)=0$, since $k\cdot j=0$;\\ 
\item $h=\begin{bmatrix} 2 \\ -1 \end{bmatrix}$, $j=\begin{bmatrix} 0 \\1 \end{bmatrix}$, for which $R_{k}(h,j)=0$; \\
\item $h=\begin{bmatrix} 1 \\ 1 \end{bmatrix}$, $j=\begin{bmatrix} 1 \\ -1 \end{bmatrix}$;\\
\item $h=\begin{bmatrix} 1 \\ -1 \end{bmatrix}$, $j=\begin{bmatrix} 1 \\ 1 \end{bmatrix}$; \\
\end{enumerate}
Observe that the contributions of the last two possibilities sum to be zero, since the product $\alpha(h,t) \alpha(j,t)$ does not change when the role of $h$ and $j$ are changed, and $h^{\perp}\cdot j=-j^{\perp}\cdot h$. Therefore, 
\begin{align*}
R_k=\frac{4}{\sqrt{5}}[(\alpha(\begin{bmatrix}
0\\
-1
\end{bmatrix},t))(\alpha(\begin{bmatrix}
2\\
1
\end{bmatrix},t))-(\alpha(\begin{bmatrix}
0\\
1
\end{bmatrix},t))(\alpha(\begin{bmatrix}
2\\
-1
\end{bmatrix},t))]=0,
\end{align*}
or, 
\begin{align}
\label{1}
(\alpha(\begin{bmatrix}
0\\
-1
\end{bmatrix},t))(\alpha(\begin{bmatrix}
2\\
1
\end{bmatrix},t))-(\alpha(\begin{bmatrix}
0\\
1
\end{bmatrix},t))(\alpha(\begin{bmatrix}
2\\
-1
\end{bmatrix},t))=0.
\end{align}
If $k=\begin{bmatrix}
0\\
2
\end{bmatrix}$, we use similar arguments to obtain that

\begin{align*}
R_k=\frac{4}{\sqrt{5}}[(\alpha(\begin{bmatrix}
1\\
0
\end{bmatrix},t))(\alpha(\begin{bmatrix}
-1\\
2
\end{bmatrix},t))-(\alpha(\begin{bmatrix}
-1\\
0
\end{bmatrix},t))(\alpha(\begin{bmatrix}
1\\
2
\end{bmatrix},t))]=0,
\end{align*}
or,  
\begin{align}
\label{2}
(\alpha(\begin{bmatrix}
1\\
0
\end{bmatrix},t))(\alpha(\begin{bmatrix}
-1\\
2
\end{bmatrix},t))-(\alpha(\begin{bmatrix}
-1\\
0
\end{bmatrix},t))(\alpha(\begin{bmatrix}
1\\
2
\end{bmatrix},t))=0.
\end{align}
If $k=\begin{bmatrix}
-2\\
0
\end{bmatrix},$ we have
\begin{align}
\label{3}
(\alpha(\begin{bmatrix}
0\\
1
\end{bmatrix},t))(\alpha(\begin{bmatrix}
-2\\
-1
\end{bmatrix},t))-(\alpha(\begin{bmatrix}
0\\
-1
\end{bmatrix},t))(\alpha(\begin{bmatrix}
-2\\
1
\end{bmatrix},t))=0.
\end{align}
 If $k=\begin{bmatrix}
0\\
-2
\end{bmatrix},$ we have
\begin{align}
\label{4}
(\alpha(\begin{bmatrix}
-1\\
0
\end{bmatrix},t))(\alpha(\begin{bmatrix}
1\\
-2
\end{bmatrix},t))-(\alpha(\begin{bmatrix}
1\\
0
\end{bmatrix},t))(\alpha(\begin{bmatrix}
-1\\
-2
\end{bmatrix},t))=0.
\end{align}

Case 2: $|k|^2=8$; then
\begin{align}
\label{5}
(\alpha(\begin{bmatrix}
1\\
0
\end{bmatrix},t))(\alpha(\begin{bmatrix}
1\\
2
\end{bmatrix},t))-(\alpha(\begin{bmatrix}
0\\
1
\end{bmatrix},t))(\alpha(\begin{bmatrix}
2\\
1
\end{bmatrix},t))=0.
\end{align}
\begin{align}
\label{6}
(\alpha(\begin{bmatrix}
0\\
1
\end{bmatrix},t))(\alpha(\begin{bmatrix}
-2\\
1
\end{bmatrix},t))-(\alpha(\begin{bmatrix}
-1\\
0
\end{bmatrix},t))(\alpha(\begin{bmatrix}
-1\\
2
\end{bmatrix},t))=0.
\end{align}
\begin{align}
\label{7}
(\alpha(\begin{bmatrix}
0\\
-1
\end{bmatrix},t))(\alpha(\begin{bmatrix}
2\\
-1
\end{bmatrix},t))-(\alpha(\begin{bmatrix}
1\\
0
\end{bmatrix},t))(\alpha(\begin{bmatrix}
1\\
-2
\end{bmatrix},t))=0.
\end{align}
\begin{align}
\label{8}
(\alpha(\begin{bmatrix}
-1\\
0
\end{bmatrix},t))(\alpha(\begin{bmatrix}
-1\\
-2
\end{bmatrix},t))-(\alpha(\begin{bmatrix}
0\\
-1
\end{bmatrix},t))(\alpha(\begin{bmatrix}
-2\\
-1
\end{bmatrix},t))=0.
\end{align}

Case 3: $|k|^2=9$; then
\begin{align}
\label{9}
(\alpha(\begin{bmatrix}
2\\
1
\end{bmatrix},t))(\alpha(\begin{bmatrix}
1\\
-1
\end{bmatrix},t))-(\alpha(\begin{bmatrix}
2\\
-1
\end{bmatrix},t))(\alpha(\begin{bmatrix}
1\\
1
\end{bmatrix},t))=0.
\end{align}
\begin{align}
\label{10}
(\alpha(\begin{bmatrix}
-1\\
2
\end{bmatrix},t))(\alpha(\begin{bmatrix}
1\\
1
\end{bmatrix},t))-(\alpha(\begin{bmatrix}
1\\
2
\end{bmatrix},t))(\alpha(\begin{bmatrix}
-1\\
1
\end{bmatrix},t))=0.
\end{align}
\begin{align}
\label{11}
(\alpha(\begin{bmatrix}
-2\\
-1
\end{bmatrix},t))(\alpha(\begin{bmatrix}
-1\\
1
\end{bmatrix},t))-(\alpha(\begin{bmatrix}
-2\\
1
\end{bmatrix},t))(\alpha(\begin{bmatrix}
-1\\
-1
\end{bmatrix},t))=0.
\end{align}
\begin{align}
\label{12}
(\alpha(\begin{bmatrix}
1\\
-2
\end{bmatrix},t))(\alpha(\begin{bmatrix}
-1\\
-1
\end{bmatrix},t))-(\alpha(\begin{bmatrix}
-1\\
-2
\end{bmatrix},t))(\alpha(\begin{bmatrix}
1\\
-1
\end{bmatrix},t))=0.
\end{align}

Case 4: $|k|^2=10$; then similarly,
\begin{align}
\label{13}
(\alpha(\begin{bmatrix}
0\\
1
\end{bmatrix},t))(\alpha(\begin{bmatrix}
1\\
2
\end{bmatrix},t))=0.
\end{align}
\begin{align}
\label{14}
(\alpha(\begin{bmatrix}
1\\
0
\end{bmatrix},t))(\alpha(\begin{bmatrix}
2\\
1
\end{bmatrix},t))=0.
\end{align}
\begin{align}
\label{15}
(\alpha(\begin{bmatrix}
0\\
1
\end{bmatrix},t))(\alpha(\begin{bmatrix}
-1\\
2
\end{bmatrix},t))=0.
\end{align}
\begin{align}
\label{16}
(\alpha(\begin{bmatrix}
-1\\
0
\end{bmatrix},t))(\alpha(\begin{bmatrix}
-2\\
1
\end{bmatrix},t))=0.
\end{align}
\begin{align}
\label{17}
(\alpha(\begin{bmatrix}
0\\
-1
\end{bmatrix},t))(\alpha(\begin{bmatrix}
1\\
-2
\end{bmatrix},t))=0.
\end{align}
\begin{align}
\label{18}
(\alpha(\begin{bmatrix}
0\\
-1
\end{bmatrix},t))(\alpha(\begin{bmatrix}
-1\\
-2
\end{bmatrix},t))=0.
\end{align}
\begin{align}
\label{19}
(\alpha(\begin{bmatrix}
1\\
0
\end{bmatrix},t))(\alpha(\begin{bmatrix}
2\\
-1
\end{bmatrix},t))=0.
\end{align}
\begin{align}
\label{20}
(\alpha(\begin{bmatrix}
-1\\
0
\end{bmatrix},t))(\alpha(\begin{bmatrix}
-2\\
-1
\end{bmatrix},t))=0.
\end{align}

Case 5: $|k|^2=13$; then similarly, we have
\begin{align}
\label{21}
(\alpha(\begin{bmatrix}
1\\
2
\end{bmatrix},t))(\alpha(\begin{bmatrix}
1\\
1
\end{bmatrix},t))=0.
\end{align}
\begin{align}
\label{22}
(\alpha(\begin{bmatrix}
2\\
1
\end{bmatrix},t))(\alpha(\begin{bmatrix}
1\\
1
\end{bmatrix},t))=0.
\end{align}
\begin{align}
\label{23}
(\alpha(\begin{bmatrix}
-1\\
2
\end{bmatrix},t))(\alpha(\begin{bmatrix}
-1\\
1
\end{bmatrix},t))=0.
\end{align}
\begin{align}
\label{24}
(\alpha(\begin{bmatrix}
-2\\
1
\end{bmatrix},t))(\alpha(\begin{bmatrix}
-1\\
1
\end{bmatrix},t))=0.
\end{align}
\begin{align}
\label{25}
(\alpha(\begin{bmatrix}
-1\\
-2
\end{bmatrix},t))(\alpha(\begin{bmatrix}
-1\\
-1
\end{bmatrix},t))=0.
\end{align}
\begin{align}
\label{26}
(\alpha(\begin{bmatrix}
1\\
-2
\end{bmatrix},t))(\alpha(\begin{bmatrix}
1\\
-1
\end{bmatrix},t))=0.
\end{align}
\begin{align}
\label{27}
(\alpha(\begin{bmatrix}
1\\
-1
\end{bmatrix},t))(\alpha(\begin{bmatrix}
2\\
-1
\end{bmatrix},t))=0.
\end{align}
\begin{align}
\label{28}
(\alpha(\begin{bmatrix}
-2\\
-1
\end{bmatrix},t))(\alpha(\begin{bmatrix}
-1\\
-1
\end{bmatrix},t))=0.
\end{align}
Now, we can analyze the system of equations (\ref{1})-(\ref{28}). Notice that the system holds for all $t\in \mathbb{R}$. 
In the following, for notational simplicity, we drop the explicit dependence of time $t$, and write $\alpha(k, t)$ as $\alpha(k)$, for any fixed $t\in \mathbb{R}$. Based on the system of equtions (\ref{1})-(\ref{28}), we can make the following claim:
\begin{claim}
If there is a $k_0\in S_3$ such that $\alpha(k_0)\neq0$, then $\alpha(k)=0$, for all $ k\in S_1\cup S_2$.
\end{claim}
Indeed, one could check all possible choices of $k_0$ to see that the claim is true. Here, we assume,  
say, $\alpha(\begin{bmatrix}
1\\
2
\end{bmatrix})\neq0$, then also $\alpha(\begin{bmatrix}
-1\\
-2
\end{bmatrix})\neq0$, by the reality condition (\ref{reality_for_alpha}).
(\ref{13}) leads to $\alpha(\begin{bmatrix}
0\\
1
\end{bmatrix})=0=\alpha(\begin{bmatrix}
0\\
-1
\end{bmatrix})$. 
From (\ref{25}), we have $\alpha(\begin{bmatrix}
-1\\
-1
\end{bmatrix})=0=\alpha(\begin{bmatrix}
1\\
1
\end{bmatrix})$.
From (\ref{8}), $\alpha(\begin{bmatrix}
-1\\
0
\end{bmatrix})=0=\alpha(\begin{bmatrix}
1\\
0
\end{bmatrix})$.
From (\ref{12}), we see $\alpha(\begin{bmatrix}
1\\
-1
\end{bmatrix})=0=\alpha(\begin{bmatrix}
-1\\
1
\end{bmatrix})$.
Thus $\alpha(k)=0$, for all $ k\in S_1\cup S_2$.

Consequently, either $\alpha(S_1\cup S_2)\equiv0$, or $\alpha(S_3)\equiv 0$, thus, either $u_+\equiv0$, or $u_-\equiv0=g$. Since $g\neq0$, then $u_{+}\equiv0$ , so $|u_+|^2\equiv0$. From (\ref{u_plus}) , it  leads to $P=G^2$, so $u=u^{*}=g/{\lambda}$, which is a contradiction. Therefore, there does not exist chained ghost solutions when $\lambda=2$.
\end{proof}

\section*{acknowledgement}
This work was supported in part by NSF Grant DMS-1109784. The authors would like to thank Professors C. Foias and M. Jolly for suggesting the problem and for subsequent useful discussions and remarks.

\begin{bibdiv}
\begin{biblist}


\bib{CF89}{book}{
author={{Constantin},P.},
author={{Foias},C.},
title={Navier-Stokes equations},
publisher={Chicago Lectures in Mathematics, University of Chicago Press},
date={1989}
}

\bib{CMM13}{article}{
author={C.Foias},
author={M.S.Jolly},
author={M.Yang},
title={On Single Mode Forcing of the 2D-NSE},
journal={J Dyn Diff Euqat},
date={2013},
pages={393\ndash 433},

}

\bib{DFJ05}{article}{
      author={{Dascaliuc}, R.},
      author={{Foias}, C.},
      author={{Jolly}, M.~S.},
       title={{Relations Between Energy and Enstrophy on the Global Attractor
  of the 2-D Navier-Stokes Equations}},
        date={2005},
     journal={Journal of Dynamics and Differential Equations},
      volume={17},
       pages={643\ndash 736},
}

\bib{DFJ10}{article}{
author={{Dascaliuc}, R.},
author={{Foias}, C.},
author={{Jolly}, M},
title={Estimates on enstrophy, palinstrophy, and invariant measures for 2-D Navier Stokes equations},
date={2010},
journal={J. Differ. Equ},
volume={248},
pages={792\ndash 819}
}

\bib{FJLRYZ13}{article}{
author={{Foias},C.},
author={{Jolly},M.~S.},
author={{Lan},R.~M.},
author={{Rupam},R.},
author={{Yang},Y.},
author={{Zhang},B.~S.},
title={Time analyticity with higher norm estimates for the 2D Navier-Stokes equations},
journal={IMA Journal of Applied Mathematics},
date={2013},
pages={accepted},
}

\bib{FJMR02}{article}{
author={{Foias},C.},
author={{Jolly},M.~S.},
author={{Manley},O.~P.},
author={{Rosa},R.},
title={Statistical estimates for the Navier-Stokes equations and the Kraichnan theory of 2-D fully developed turbulence},
journal={J.Stat.Phys},
date={2002},
volume={108(3-4)},
pages={591\ndash 645}
}

\bib{FJYZ}{article}{
author={{Foias},C.},
author={{Jolly},{M.S.}},
author={{Yang},Y},
author={{Zhang}, B.S},
title={On whether zero is in the global attractor of the 2D Navier-Stokes equations},
journal={Nonlinearity},
pages={Accepted},
}
%

\bib{MARC87}{article}{
author={{Marchioro}, C.},
title={An example of absence of turbulence for any Reynolds number},
date={1987},
journal={II. Commun.Math.Phys},
volume={108(4)},
pages={647\ndash651}

}

\bib{RFSN90}{book}{
author={{Riesz}, F.},
author={{Sz.-Nagy},B.},
title={Functional analysis},
publisher={Dover Publications},
date={1990}
}


\bib{Temam83}{book}{
author={{Temam},R.},
title={Naviers-Stokes equations and nonlinear functional anlaysis},
publisher={CBMS-NSF Regional Conference Series in Applied Mathematical, Philadephia: Society for Industrial and Applied Mathematics},
date={1983}
}

\bib{Temam97}{book}{
author={{Temam}, R.},
title={Infinite dimensional dynamical systems in mechanics and physics},
publisher={Applied Mathematical Sciences, Springer-Verlag},
volume={68}
edition={2nd edn.}
address={New York}
date={1997}
}
\end{biblist}
\end{bibdiv}

\end{document}